\documentclass[final,5p,times,twocolumn]{elsarticle}

\usepackage[english]{babel}

\usepackage{ifpdf}

\usepackage{lineno,hyperref} 
    \modulolinenumbers[1]

\usepackage{graphicx}
\usepackage{color}
\usepackage{placeins}

\usepackage[cmex10]{amsmath}
\usepackage{amsfonts, amssymb, amsthm}
\usepackage{mathrsfs}

\usepackage{subfigure}
\usepackage{algorithm,algorithmic}

\usepackage{hyphenat}
\usepackage{tikz,chemarrow}
\usepackage[framemethod=tikz]{mdframed}
\usepackage[normalem]{ulem}

\usepackage{dsfont}

\input{mysymbol.sty}

\newtheorem{mytheorem}{Theorem}

\newtheorem{mydefinition}{Definition}
\newtheorem{myproposition}{Proposition}

\newtheorem{mylemma}{Lemma}
\newtheorem{myassumption}{Assumption}

\journal{Signal Processing}


\begin{document}

\begin{frontmatter}

\title{Blind Deconvolution on Graphs: Exact and Stable Recovery\tnoteref{funding,conferences}}
\tnotetext[funding]{This work was supported in part by the National Science Foundation under award ECCS-1809356 and by the Center of Excellence in Data Science, an Empire State Development-designated Center of Excellence.}
\tnotetext[conferences]{Part of the results in this paper were presented at the \textit{2018 EUSIPCO} conference~\cite{chang2018eusipco}.}

\author[uofr]{Chang Ye}
\ead{cye7@ur.rochester.edu}
\cortext[cor1]{Corresponding author}

\author[uofr,goergen]{Gonzalo Mateos\corref{cor1}}
\ead{gmateosb@ece.rochester.edu}

\address[uofr]{Dept. of Electrical and Computer Engineering, University of Rochester, Rochester, USA}
\address[goergen]{Goergen Institute for Data Science, University of Rochester, Rochester, USA}


\begin{abstract}
We study a blind deconvolution problem on graphs, which arises in the context of localizing a few sources that diffuse over networks.  While the observations are bilinear functions of the unknown graph filter coefficients and sparse input signals, a mild requirement on invertibility of the diffusion filter enables an efficient convex relaxation leading to a linear programming formulation that can be tackled with off-the-shelf solvers. Under the Bernoulli-Gaussian model for the inputs, we derive sufficient exact recovery conditions in the noise-free setting. A stable recovery result is then established, ensuring the estimation error remains manageable even when the observations are corrupted by a small amount of noise. Numerical tests with synthetic and real-world network data illustrate the merits
of the proposed algorithm, its robustness to noise as well as the benefits of leveraging multiple signals to aid the (blind) localization of sources of diffusion. At a fundamental level, the results presented here broaden the scope of classical blind deconvolution of (spatio-)temporal signals to irregular graph domains.
\end{abstract}


\begin{keyword}
Blind deconvolution \sep network source localization\sep graph signal processing\sep convex relaxation\sep exact recovery.
\end{keyword}

\end{frontmatter}



\section{Introduction}\label{S:Introduction}


Network processes such as neural activities at different cortical brain regions~\cite{weiyu_brain_signals,hu2016localizing,yang2022tsipn}, vehicle flows over transportation networks~\cite{deri2016new,hasanzadeh2019traffic}, COVID-19 infections across demographic areas connected via a commute flow mobility graph~\cite{yang2021icassp}, or spatial temperature
profiles monitored by distributed sensors~\cite{D_LMS_TSP,stankovic2019spmag}, can be represented as signals supported on the nodes of a graph. In this context, the graph signal processing (GSP) paradigm hinges on recognizing that signal properties are shaped by the underlying graph topology (e.g., in a network diffusion or percolation process), to develop models, signal representations, and information processing algorithms that exploit this relational structure. Accordingly, generalizations of key signal processing tasks have been widely explored in recent work; see~\cite{gsp2018tutorial,geert2023spmag} for recent tutorial accounts. Notably graph filters were conceived as information-processing operators acting on graph-valued signals~\cite{sandryhaila2013discrete,isufi2024gf}, and they are central to graph convolutional neural network models; see e.g.~\cite{gama2020spmag}. Mathematically, graph filters are linear transformations that can be expressed as polynomials of the so-termed graph-shift operator (GSO; see Section \ref{S:prelim}).
The GSO offers an alegbraic representation of network structure and can be viewed as a local diffusion operator. Its spectral decomposition can be used to represent signals and filters in the graph frequency domain~\cite{sandryhaila2014discrete}. For the cycle digraph representing e.g., periodic temporal signals, the GSO boils down to the time-shift operator~\cite{sandryhaila2013discrete,gsp2018tutorial,mateos2019connecting}.
Given a GSO, the polynomial coefficients fully determine the graph filter and are referred to as filter coefficients. 

\noindent \textbf{Problem description.} In this paper, we revisit the blind deconvolution task for graph signals introduced in~\cite{segarra2015camsap}, with an emphasis on modeling diffusion processes and localization network diffusion sources. Specifically, given $P$ observations of graph signals $\{\bby_i \}_{i=1}^{P}$ that we model as outputs of a diffusion filter (i.e., a polynomial in a known GSO), we seek to jointly identify the filter coefficients $\bbh$ and the input signals $\{\bbx_i \}_{i=1}^P$ that generated the network observations.  Since the resulting bilinear inverse problem is ill-posed, we assume that the inputs are sparse -- a well-motivated setting when few seeding nodes (the sources) inject a signal that is diffused throughout a network~\cite{segarra2017blind}. Localizing sources of network diffusion is a challenging problem with applications in several fields, including sensor-based environmental monitoring, social networks, neural signal processing, or, epidemiology, too name a few. This inverse problem broadens the scope of classical blind deconvolution of temporal or spatial signals to graphs \cite{ahmed2014blind,levin2011understanding,wang2016blind}. 

\noindent \textbf{Related work and contributions.} A noteworthy approach was put forth in~\cite{segarra2017blind}, which casts the (bilinear) blind graph-filter identification task as a linear inverse problem in the ``lifted'' rank-one, row-sparse matrix $\bbx \bbh^\top$. While the rank and sparsity minimization algorithms in \cite{segarra2017blind,david2021sp} can successfully recover sparse inputs along with low-order graph filters, reliance on matrix lifting can hinder applicability to large graphs. Beyond this computational consideration, the overarching assumption of~\cite{segarra2017blind} is that the inputs $\{\bbx_i \}_{i=1}^P$ share a common support. Here instead we show how a mild requirement on invertibility of the graph filter facilitates an efficient convex relaxation for the multi-signal case with arbitrary supports (Section \ref{S:blind_ID}); see also \cite{wang2016blind} for a time-domain precursor as well as generalizations of~\cite{chang2018eusipco} in a supervised learning setting~\cite{chang2022eusipco} and a study of robustness to graph perturbations~\cite{victor2024icassp}. In Section~\ref{S:amb_uni}, we establish sufficient conditions under which the proposed convex estimator can exactly recover sparse input signals, assumed to adhere to a Bernoulli-Gaussian model. We also derive a stability result for the pragmatic setting where the graph signal observations are corrupted by additive noise. The analysis is challenging, since the
favorable (circulant) structure of time-domain filters in~\cite{wang2016blind} is no longer present in the network-centric setting dealt with here. Numerical tests with synthetic and real data corroborate the effectiveness of the proposed approach in recovering the sparse input signals (Section~\ref{S:simulation}). Concluding remarks are given in Section~\ref{S:conclusion}. We defer proofs and mathematical details as well as non-essential experimental details to the appendices.

Different from most existing model-based works dealing with source localization on graphs, e.g.,  \cite{zhang2016towards,pinto2012locating,sefer2016diffusion}, like~\cite{pena2016source} we advocate a GSP approach based on an admittedly simple forward (graph filtering) model. In our case, this simplification facilitates a thorough theoretical analysis of recovery performance, offering valuable insights. Often the models of diffusion are probabilistic in nature, and resulting maximum-likelihood source estimators can only be optimal for particular (e.g., tree) graphs~\cite{pinto2012locating}, or rendered
scalable under restrictive dependency assumptions~\cite{feizi2016network}. Relative to~\cite{pena2016source,hu2016localizing}, the proposed framework can accommodate signals defined on general undirected graphs and relies on a \emph{convex} estimator of the sparse sources of diffusion. Furthermore, the setup where multiple output signals are observed (each one corresponding to a
different sparse input), has not been thoroughly explored in convex-relaxation approaches to blind deconvolution
of (non-graph) signals, e.g.,~\cite{ahmed2014blind,LingBiConvexCS}; see \cite{wang2016blind} for a recent and inspiring alternative that we leverage here. 

Relative to the conference precursor~\cite{chang2018eusipco} that discussed inherent scaling and (non-cyclic) permutation ambiguities arising with some particular graphs as well as identifiability issues~\cite{li2015unified}, the exact and stable recovery guarantees in Section \ref{S:amb_uni} are significant contributions of this journal paper. Here we offer a comprehensive presentation along with full-blown technical details, we study robustness to noise, and provide a markedly expanded experimental evaluation anchored in our theoretical analyses.

\medskip\noindent\emph{Notation:} {Entries of a matrix $\bbX$ and a (column) vector $\bbx$ are denoted as $X_{ij}$ and $x_i$.} 
Operators $(\cdot)^\top$, $\E{\cdot}$, $\textrm{vec}[\cdot]$, $\sigma_{\max}[\cdot]$, $\circ$, $\otimes$ and $\odot$ stand for matrix transpose, expectation, matrix vectorization, largest singular value, Hadamard (entry-wise), Kronecker, and Khatri-Rao (column-wise Kronecker) products, respectively. 
{The diagonal matrix $\diag(\mathbf{x})$ has $(i,i)$th entry $x_i$}. 
The $n\times n$ identity matrix is represented by $\bbI_n$, while $\mathbf{0}_n$ stands for the $n\times 1$ vector of all zeros, and $\mathbf{0}_{n\times p}=\mathbf{0}_n\mathbf{0}_p^\top$. A similar convention is adopted for vectors and matrices of all ones. For matrix $\bbM\in\reals^{N\times k}$, we use $\text{span}\{\bbM\}:= \{\bbz\in\reals^N|\bbz = \bbM\bbx,\forall \bbx\in\reals^k\}$ to represent the linear subspace spanned by its columns. The notation $\| \bbX \|_{p,q}=\left(\sum_{j}(\sum_i |X_{ij}|^p)^{q/p}\right)^{1/q}$ stands for the elementwise matrix norm, $\|\bbX\|_F=\| \bbX \|_{2,2}$ denotes Frobenius norm, and $\|\bbX\|_{l\rightarrow 2},l=1,2$ are operator norms, i.e., the maximum $\ell_2$ norm of a column of $\bbX$ for $l=1$ and $\sigma_{\max}[\bbX]$ for $l=2$, respectively.

\section{Preliminaries and Problem Statement} \label{S:prelim}


We briefly review the necessary GSP background to introduce the observation model and formally state the problem.


\subsection{Graph signal processing background}\label{ssec:gsp_basics}


Consider a weighted and undirected graph denoted as $\ccalG=(\ccalV,\bbA)$, where $\ccalV:=\{1,\ldots, N\}$ comprises the vertex set. 
The symmetric graph adjacency matrix $\bbA \in \reals_+^{N \times N}$ has entries $A_{ij}=A_{ji}\geq 0$, that represent the weight of the edge $(i,j)$ between nodes $i$ and $j$. Naturally, if such edge $(i,j)$ is not present in $\ccalG$ then $A_{ij}=0$. We do not allow for self-loops, hence $A_{ii}=0$ for all $i\in \ccalV$. Directed graphs are important~\cite{marques2020digraphs}, but beyond the scope of this work.

As a general algebraic descriptor of graph connectivity, we henceforth rely on a symmetric GSO $\bbS \in \reals^{N \times N}$, inheriting and encoding the sparsity pattern of $\ccalG$ \cite{puschel2008asp,sandryhaila2013discrete}. That is, the only requirement for an admissible GSO $\bbS$ is that $S_{ij}=0$ when there is no edge connecting $i$ and $j$. Typical choices are $\bbA$ itself, the combinatorial graph Laplacian $\bbL=\bbA-\textrm{diag}(\bbA \cdot \mathbf{1}_N)$, or, their various degree-normalized forms~\cite{gavili2017shift,gsp2018tutorial} that we will adopt for our experiments in Section \ref{S:simulation}. Given that $\bbS$ is real and symmetric, it is always diagonalizable and we can decompose it as $\bbS=\bbV\bbLambda\bbV^\top$, where $\bbLambda=\textrm{diag}(\lambda_1,\ldots,\lambda_N)$ collects the eigenvalues and $\bbV$'s columns comprise an orthonormal basis of GSO eigenvectors. 

Lastly, we define a graph signal $\bbx: \ccalV \mapsto \reals^N$ as an $N$-dimensional vector, where $x_i$ denotes the signal value at node $i \in \ccalV$. This value could for instance represent the rating user $i$ assigns to a particular item (e.g., an article as in the experiments in Section \ref{ssec:real_data}), the measurement collected by sensor $i$, or, the neural activity recorded in the $i$-th cortical region-of-interest as defined by some brain parcellation. In GSP, the prevalent view is that graph signal $\bbx$ should not be viewed in isolation, but rather as a tuple along with the GSO $\bbS$ -- the later provides useful contextual (or prior) information about pairwise relationships among individual signal elements in $\bbx$. Next, we elaborate on the operator viewpoint of $\bbS$ as a map between graph signals.


\subsection{Graph-filter models of network diffusion}\label{ssec:model_diffusion}


Let $\bby$ represent a graph signal supported on $\ccalG$, which is assumed to be generated from an initial state $\bbx$ through linear network dynamics of the form
\begin{align}\label{eq:network_diffusion}
\textstyle \bby\ =\ \alpha_0 \prod_{l=1}^{\infty} (\bbI_N-\alpha_l \bbS) \bbx
\ =\ \sum _{l=0}^{\infty} \beta_l \bbS^l \bbx .
\end{align}
The linear transformation $\bbS$ represents one-hop network aggregation (or averaging), so repeated ($l=1,2,\ldots$) applications of the GSO in \eqref{eq:network_diffusion} diffuse $\bbx$ across $\ccalG$. Accordingly, $\{\alpha_l\}_{l=0}^\infty$ and $\{\beta_l\}_{l=0}^\infty$ can be viewed as diffusion coefficients for the multiplicative and additive signal model parametrizations in \eqref{eq:network_diffusion}, respectively. This generative model for $\bby$ is admittedly simple, but it nonetheless encompasses a broad class of linear network processes such as heat diffusion, average consensus, PageRank, and the DeGroot model of opinion dynamics~\cite{DeGrootConsensus,to2020blindcd}.

The Cayley-Hamilton theorem~\cite[p.109]{hornjohnson2012matrixanalysis} ensures that the infinite series in the right-hand-side of \eqref{eq:network_diffusion} can always be equivalently reparametrized using \emph{bounded-degree} polynomialss of $\bbS$. Introducing the coefficient vector $\bbh:=[h_0,\ldots,h_{L-1}]^\top$ and the shift-invariant graph filter~\cite{gsp2018tutorial}
\begin{equation}\label{eq:graph_filter_def}
\mathbf{H}:=h_0\bbI_N+h_1 \mathbf{S}+h_2 \mathbf{S}^2+\ldots+h_{L-1} \mathbf{S}^{L-1}=\sum_{l=0}^{L-1}h_l \mathbf{S}^l,
\end{equation}
the signal model \eqref{eq:network_diffusion} becomes
\begin{equation}\label{eq:graph_filter_model}
\bby = \left(\sum_{l=0}^{L-1}h_l \bbS^l\right)\bbx = \bbH \bbx,
\end{equation}
for some $\bbh$ and $L\leq N$. While graph filters are leveraged here as simple generative mechanisms to describe diffusion processes on networks, these convolutional operators play a central role in GSP and machine learning on graphs; see e.g.,~\cite{isufi2024gf} for a recent tutorial treatment. 

\vspace{2pt}
\noindent \textbf{Frequency representation.} Graph filters and signals admit representations in the frequency (i.e., graph spectral) domain~\cite{sandryhaila2014discrete,gsp2018tutorial}. To this end, recall the GSO eigenvalues $\lambda_1,\ldots,\lambda_N$ and introduce the $N\times L$ Vandermonde matrix $\bbPsi_L$, where $\Psi_{ij}:=\lambda_i^{j-1}$ and $L$ is the order of the graph filter \eqref{eq:graph_filter_def}. The graph Fourier transform (GFT) of a signal $\bbx$ and the frequency response of filter $\bbh$ are $\tbx:=\bbV^\top\bbx$ and $\tbh:=\bbPsi_L\bbh$, respectively. This follows by evaluating the GFT of the filter's output $\bby=\bbH\bbx$ and using the spectral decomposition $\bbS=\bbV\bbLambda\bbV^\top$, to yield
\begin{equation} \label{eq:freq_response}
\tby=\diag\big(\bbPsi_L\bbh\big)\bbV^\top \bbx=\diag\big(\tbh\big)\tbx=\tbh\circ \tbx.
\end{equation}
Apparently, the graph filter $\bbH$ is diagonalized by the graph's spectral basis $\bbV$. As a result $\tby$ is given by the element-wise product $(\circ)$ of $\tbx$ and the filter's frequency response $\tbh:=\bbPsi_L\bbh$, analogous to the convolution theorem for temporal signals. 


\subsection{Problem formulation}\label{ssec:prob_statement}


Suppose we observe $P$ graph signals that we collect in a matrix $\bbY=[\bby_1,\ldots,\bby_P] \in \reals^{N \times P}$. For given shift operator $\bbS$ and filter order $L$, observations adhere to the (graph convolutional) diffusion model $\bbY = \bbH \bbX$ in \eqref{eq:graph_filter_model}, where $\bbX = [\bbx_1,\ldots,\bbx_P]\in \reals^{N \times P}$ is sparse having at most $S\ll N$ non-zero entries per column. We do not require that the columns of $\bbX$ share a common support. The pragmatic setting whereby observations are corrupted by additive noise will be considered in Section \ref{ssec:stable_recovery}. 

The goal is to perform blind deconvolution on the graph $\ccalG$, which amounts to estimating sparse $\bbX$ and the filter coefficients $\bbh$ up to scaling and (possibly) permutation ambiguities~\cite[Sec. IV-A]{chang2018eusipco}. All we are given is $\bbY$, a forward model in terms of the parameterized filter family in \eqref{eq:graph_filter_def}, and a structural assumption on $\bbX$. Sparsity is well motivated when the signals in $\bbY$ represent diffused versions of a \emph{few} localized sources in $\ccalG$, here indexed by $\ccalS:=\textrm{supp}(\bbX)=\{(i,j) \mid X_{ij} \neq 0 \}$. From this vantage point, the blind deconvolution task can be viewed as a source localization one, where recovering $\bbX$ is of primary interest and $\bbh$ becomes a nuisance parameter. In any case, the non-sparse formulation is ill-posed, since the number of unknowns $NP + L$ in $\{\bbX,\bbh\}$ exceeds the $NP$  observations  in $\bbY$. 

All in all, using \eqref{eq:freq_response} the diffused source localization task can be stated as a non-convex feasibility problem of the form
\begin{equation} \label{eq:blind_feasibility}
\text{find } \{ \bbX,\bbh \} \:\: \text{s. to }\: \bbY = \bbV\diag\big(\bbPsi_L\bbh\big)\bbV^\top\bbX, \: \| \bbX \|_{0} \leq PS,
\end{equation}
where the $\ell_0$-(pseudo) norm $\| \bbX \|_{0}:=|\textrm{supp}(\bbX)|$ counts the non-zero entries in $\bbX$. In other words, the goal is to find the solution to a system of bilinear equations subject to a sparsity constraint in $\bbX$; a hard problem due to the cardinality function as well as the bilinear equality constraints. To deal with the latter, building on~\cite{wang2016blind} we will henceforth assume that the filter $\bbH$ is invertible. 

\section{Convex Relaxation for Invertible Graph Filters}\label{S:blind_ID}


In this section we propose a convex relaxation of \eqref{eq:blind_feasibility}, which is feasible under the additional mild assumption of diffusion filter invertibility. We wrap up with a brief discussion about algorithms to solve the resulting linear program. 


\subsection{A linear programming problem reformulation}\label{ssec:lp}


From the filter's input-output relationship in \eqref{eq:freq_response}, it follows that $\bbH$ will be invertible if $\tbh$ does not vanish at any of the graph frequencies $\{\lambda_i\}_{i=1}^N$. That is, no frequency component of the input should be completely annihilated by the filter. We make the following assumption on the signal model.
\begin{myassumption}[Invertible graph filter]
\normalfont
\label{assumption_gf}
Recall the observation model $\bbY=\bbH\bbX$ in Section \ref{ssec:model_diffusion}, where $\bbH=\sum_{l=0}^{L-1}h_l \bbS^l$ is a graph filter. We assume $\bbH$ is invertible, meaning $\tilde{h}_i=\sum_{l=0}^{L-1} h_l \lambda_{i}^l \neq 0$ holds for all  $i=1,\ldots,N$.   
\end{myassumption}
The inverse operator $\bbG := \bbH^{-1}$ is also a graph filter on $\ccalG$, which can be uniquely represented as a polynomial in the shift $\bbS$ of degree at most $N-1$ \cite[Theorem 4]{sandryhaila2013discrete}. Specifically, let $\bbg \in \reals^{N}$ be the vector of inverse-filter coefficients, i.e., $\bbG= \sum_{l=0}^{N-1} g_l \bbS^l$. Then one can equivalently rewrite the observation model $\bbY=\bbH\bbX$ as
\begin{equation} \label{eq:Filter_G}
		\bbX  = \bbG \bbY  = \bbV \text{diag}(\tbg) \bbV^\top \bbY,
\end{equation}
where $\tilde{\bbg} := \bbPsi_N \bbg \in \reals^N$ is the inverse filter's frequency response and $\bbPsi_N \in \reals^{N \times N}$ is a square Vandermonde matrix. Naturally, $\bbG=\bbH^{-1}$ implies the condition $\tbg\circ\tbh =\mathbf{1}_N$ on the frequency responses, where $\mathbf{1}_N$ denotes the $N \times 1$ vector of all ones. 

The fundamental implication of Assumption \ref{assumption_gf} is that, leveraging \eqref{eq:Filter_G}, one can recast \eqref{eq:blind_feasibility} as a \emph{linear} inverse problem
\begin{equation} \label{eq:opt_blind_invertible}
 \min_{\{\bbX,\bbg \}}
\:\| \bbX \|_{0},\:\: \text{s. to }\:
\bbX = \bbV \text{diag}(\bbPsi_N\bbg) \bbV^\top \bbY,\:\bbX\neq\mathbf{0}_{N\times P}.
\end{equation}
This approach to handle the bilinear equations \eqref{eq:blind_feasibility} is markedly different from the matrix lifting technique in~\cite{segarra2017blind}. We were inspired by the blind deconvolution method in~\cite{wang2016blind}, but in the GSP setting dealt with here the convolution kernel $\bbH$ is not circulant, and so $\bbV$ is no longer the discrete Fourier transform (DFT) matrix.

The $\ell_0$ pseudo-norm in \eqref{eq:opt_blind_invertible} renders the problem NP-hard to optimize. Convex-relaxation
approaches to tackle sparse recovery problems have enjoyed
remarkable success, since they often entail no loss of optimality. An important contribution of this work is to establish this holds for \eqref{eq:opt_blind_invertible} as well; see Section \ref{S:amb_uni}. Accordingly, we instead: (i) seek to minimize the $\ell_1$-norm convex surrogate of the cardinality function, that is $\| \bbX \|_{1,1} = \sum_{i,j}| X_{ij}|$; and (ii) express the filter in the graph spectral domain as in \eqref{eq:Filter_G}. This way, we arrive at the convex cost function 
\begin{align*}
	\| \bbX \|_{1,1}  = {} & \| \bbG \bbY \|_{1,1}\\
 = {} & \| \bbV \text{diag}(\tilde{\bbg}) \bbV^\top \bbY \|_{1,1}\\
	= {} &\| (\bbY^\top\bbV \odot \bbV) \tilde{\bbg} \|_{1},
\end{align*}
where the last equality is obtained after vectorizing the norm's matrix argument, and $\odot$ denotes the Khatri-Rao (i.e., columnwise Kronecker) product. Our idea is to solve the convex $\ell_1$-synthesis problem (in this case a linear program), e.g.,~\cite{zhang2016one}, namely
\begin{equation} \label{eq:opt_prob_convex}
\hat{\tilde{\bbg}}= \argmin_{\tilde{\bbg} \in \reals^{N}}
\:\| (\bbY^\top\bbV\odot \bbV) \tilde{\bbg} \|_{1},\quad \text{s. to }\:
\mathbf{1}_N^\top \tilde{\bbg} = N.
\end{equation}
While the linear constraint in \eqref{eq:opt_prob_convex} avoids $\hat{\tilde{\bbg}} = \mathbf{0}_{N}$, it also serves to fix the scale of the solution. An insightful discussion on the role of the constraint will emerge as an upshot of the exact recovery guarantees derived in Section \ref{S_s:unique}.


\subsection{Algorithms}\label{ssec:algorithms}


From an algorithmic point of view, under the mild assumption that the diffusion filter is invertible (cf. Assumption \ref{assumption_gf}), one can readily use e.g., an off-the-shelf interior-point method or a specialized sparsity-minimization algorithm to solve the linear programming formulation \eqref{eq:opt_prob_convex} efficiently. Different from the solvers in~\cite{segarra2017blind,david_blind}, the aforementioned algorithmic alternatives are free of expensive singular-value decompositions per iteration. 
Regardless of the particular algorithm chosen, we have found that overall performance can be improved via the iteratively-reweighted $\ell_1$-norm minimization procedure tabulated under Algorithm \ref{alg:reweighted_cvx}; see also~\cite{candes2008enhancing} for a justification of such refinement that effectively zeroes out small residual entries in intermediate estimates of $\bbX$. In practice, a couple refinement iterations suffice so the additional overhead is minimal. 

Once the frequency response $\hat{\tbg}$ of the inverse filter is recovered, one can readily reconstruct the sparse sources via
\begin{equation}\label{eq:X_estimate}
    \hbX =  \text{unvec}\left((\bbY^\top\bbV \odot \bbV) \hat{\tbg}\right),
\end{equation}
where here the $\text{unvec}(\cdot)$ operator reshapes its vector argument to an $N\times P$ matrix. If so desired, one can likewise form the diffusion filter estimate $\hbH=\bbV\text{diag}(\mathbf{1}_N/\hat{\tbg})\bbV^{\top}$, where the division is to be conducted entrywise.

\begin{algorithm}[t]
	\caption{Iteratively-reweighted $\ell_1$ minimization for \eqref{eq:opt_prob_convex}}
	\label{alg:reweighted_cvx}
	\begin{algorithmic}[1]
		\STATE 	\textbf{Input: } Matrix $\bbY^\top\bbV \odot \bbV$, $\delta > 0$ and  $\epsilon > 0$.
		\STATE \textbf{Initialize} $t=0$, $\bbw^{(0)}=\mathbf{1}_{NP}$,  $\bbX^{(0)} = \mathbf{0}_{N\times p}$.
		\REPEAT
		\STATE Solve \begin{equation*} 
     \tilde{\bbg}^{(t+1)}  = \argmin_{\tilde{\bbg}} \left\|  \bbw^{(t)}\circ [(\bbY^\top\bbV \odot \bbV) \tilde{\bbg}] \right\|_{1},\quad \text{s. to }\:\mathbf{1}_{N}^\top\tilde{\bbg} = N. 
  \end{equation*}
		\STATE Form $\textrm{vec}(\bbX^{(t+1)}) = (\bbY^\top\bbV \odot \bbV) \tilde{\bbg}^{(t+1)}$.\\
            \STATE Update $w_{i}^{(t+1)}=\frac{1}{[\textrm{vec}(\bbX^{(t+1)})]_i+\delta}, \:\: i=1,2,\ldots,NP.$ \\
		\STATE $t \gets t+1$.\\
		\UNTIL $\|\bbX^{(t)} - \bbX^{(t-1)}\|_{2} \le \epsilon$.\\
		\RETURN $\hat{\tilde{\bbg}}:= \tilde{\bbg}^{(t)}$ and $\hat{\bbX}:=\bbX^{(t)}$.
	\end{algorithmic}
\end{algorithm}


In the next section we will discuss the exact recovery condition for \eqref{eq:opt_prob_convex} and its robustness to noise corrupted observation.


\section{Recovery Guarantees}\label{S:amb_uni}

Here we conduct a theoretical analysis of the proposed convex estimator in \eqref{eq:opt_prob_convex}, which is a relaxation of the blind deconvolution problem on the graph $\ccalG$. We first derive exact recovery guarantees, which hold with high probability under a Bernoulli-Gaussian model for the sparse inputs. A stable recovery result is then established, ensuring the estimation error on the inverse filter's frequency response can be kept in check when the observations are corrupted by a small amount of additive noise.

Because of its analytical tractability, the Bernoulli-Gaussian model is widely adopted to describe and generate random sparse matrices such as the unknown sources $\bbX\in\reals^{N\times P}$ in Section \ref{ssec:prob_statement}. We henceforth adopt the following model specification that is consistent with the definition in \cite{li2015unified}.
\begin{mydefinition}[Bernoulli-Gaussian model] \normalfont 
\label{mydef_BGmodel}
We say a random matrix $\bbX\in\reals^{N\times P}$ adheres to a Bernoulli-Gaussian model with parameter $\theta\in (0,1)$, if its entries are $X_{ip}  = \Omega_{ip}\gamma_{ip}/\sqrt\theta$, where $\Omega_{ip}  \sim\text{Bernoulli}(\theta)$ and $\gamma_{ip} \sim \text{Normal}(0,1)$ are i.i.d. for all $i,p$.
\end{mydefinition}
In the context of Definition \ref{mydef_BGmodel} we say that the matrix entries $X_{ip}$ are Bernoulli-Gaussian random variables, with $\E{X_{ip}}=0$ and $\var{X_{ip}}=1$. Apparently, the model parameter $\theta$ offers a handle on the sparsity level of $\bbX$, while entries in $\textrm{supp}(\bbX)$ are drawn from a standard Normal distribution.  
%
%
%

\subsection{Exact recovery conditions} \label{S_s:unique}

Suppose that \eqref{eq:opt_blind_invertible} is identifiable (see e.g.,~\cite[Remark 1]{chang2018eusipco} for sufficient conditions under the Bernoulli-Gaussian model), and let $\{\bbX_0,\tilde{\bbg}_0 \}$ be the solution. For the ensuing discussion and to state our exact recovery result, some preliminary notation is in order. We compactly denote polynomials of the given graph-shift operator $\bbS=\bbV\bbLambda\bbV^\top$ as $\ccalP(\tbh):=\bbV\diag(\tbh)\bbV^\top$, where $\tbh$ is the corresponding filter's frequency response. So given observations $\bbY = \ccalP(\tbh_0)\bbX_0\in\reals^{N\times P}$ (implying $\bbX_0 = \ccalP(\tbg_0)\bbY$ under Assumption \ref{assumption_gf}, for $\tbg_0\circ\tbh_0 =\mathbf{1}_N$), we will study the following problem [cf. \eqref{eq:opt_prob_convex}]  
\begin{equation} \label{pb_1}
\hat{\tbg}  = \text{arg}\min\limits_{\tbg} \| \mathcal{P}(\tbg)\bbY\|_{1,1}
,\quad \text{s. to }\:
\bbr^\top\tbg = c,
\end{equation}
where $\bbr\in \reals^N$ is a generic constraint vector, and $c = \bbr^\top \tbg_0\neq 0$ is a constant to control the (inherently ambiguous) scale of the solution $\hat{\tbg}$. In general we do not have any prior knowledge about the ground-truth filter, so we can just select $\bbr = \mathbf{1}_\text{N}$ as in \eqref{eq:opt_prob_convex}. Our main result will offer some insights on how the recovery performance is affected by the choice of $\bbr$. 

We let $\bbP_1^\perp := \bbI_N-\frac{\mathbf{1}_N\mathbf{1}_N^\top}{N}$ denote the projection operator onto the orthogonal complement of the subspace $\textrm{span}(\mathbf{1}_N)$. Finally, consider the matrix $\tbU := (\bbV \circ \bbV)\bbP_1^\perp\in\reals^{N\times N}$, which one can show has a maximum singular value $\sigma_{\max}(\tbU)\leq 1$. 
With these definitions in place, we have all the elements to state the main theorem in this section. The result provides sufficient conditions to guarantee exact recovery of the inverse filter $\tbg_0$, and hence the sparse sources $\bbX_0$ via \eqref{eq:X_estimate}, with high probability.
\begin{mytheorem}[Exact recovery]\label{theorem_1}
\normalfont Consider graph signal observations $\bbY = \ccalP(\tbh_0)\bbX_0\in\reals^{N\times P}$, where $\bbX_0$ adheres to the Bernoulli-Gaussian model with $\theta\in \left(0,0.324\right]$. Recall that under Assumption \ref{assumption_gf}, we can write $\bbX_0 = \ccalP(\tbg_0)\bbY$. Let $P \geq C'\sigma_m^{-2}\log{\frac{4}{\delta}}$, where $\sigma_m = \min(\sigma_1,\sigma_2,\sigma_3,\sigma_4)$ and $\sigma_1\in \left(0,\frac{\sqrt{\pi}\theta^{3/2}}{2}\right]$, $\sigma_2 \in \left(0,\frac{\sqrt{\pi}\theta}{2}\right]$, $\sigma_3>0$, $\sigma_4\in(0,1)$,   $\delta\in(0,1)$ are parameters, while $C'$ is a constant that does not depend on $P$, {$\sigma_m$}, or $\delta$. Then $\hat\tbg = \tilde{\bbg}_0$ is the unique solution to \eqref{pb_1} with probability at least $1-\delta$, if
\begin{equation} \label{eq:reco_gaur}
\left\|\mathbf{P}^\perp_1\diag(\bbr)\Tilde{\bbg}_0\right\|_2  \leq c d_0,
\end{equation}
where $d_0 := \frac{\sqrt{1 - \sigma_{\max}^2(\Tilde{\bbU})}\left[(1-\sigma_1)-2\theta(1+\sigma_2)\right](1-\sigma_4)}{(1+\sigma_3)\sqrt\theta}$. 
\end{mytheorem}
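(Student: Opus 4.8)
The plan is to certify that $\tbg_0$ is the unique optimum of \eqref{pb_1} through a first-order optimality (directional-derivative) argument, and then to show the resulting inequality holds with high probability under the Bernoulli-Gaussian model via concentration. Since $\|\ccalP(\tbg)\bbY\|_{1,1}$ is convex and \eqref{pb_1} carries the single affine constraint $\bbr^\top\tbg = c$, the point $\tbg_0$ is the unique minimizer if and only if the one-sided directional derivative of the objective is strictly positive along every nonzero feasible direction $\bbd$, i.e., every $\bbd\neq\mathbf{0}_N$ with $\bbr^\top\bbd = 0$. Writing the induced perturbation as $\Delta := \ccalP(\bbd)\bbY$ and recalling $\ccalP(\tbg_0)\bbY = \bbX_0$ with $\ccalS = \textrm{supp}(\bbX_0)$, this derivative equals $\langle\sign(\bbX_{0,\ccalS}),\Delta_{\ccalS}\rangle + \|\Delta_{\ccalS^c}\|_{1,1}$. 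First I would therefore reduce Theorem \ref{theorem_1} to establishing, for all feasible $\bbd$ on the unit sphere (using homogeneity), the inequality $\|\Delta_{\ccalS^c}\|_{1,1} > |\langle\sign(\bbX_{0,\ccalS}),\Delta_{\ccalS}\rangle|$.

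Next I would pass to a convenient parametrization and compute expectations. Using $\bbY = \ccalP(\tbh_0)\bbX_0$ gives $\Delta = \ccalP(\bbe)\bbX_0$ with $\bbe := \bbd\circ\tbh_0$, so the columns of $\Delta$ are i.i.d.\ functions of the i.i.d.\ columns of $\bbX_0$, and the constraint becomes $(\diag(\bbr)\tbg_0)^\top\bbe = 0$. A column-wise computation under Definition \ref{mydef_BGmodel} shows that the signed on-support term has mean $\sqrt{2\theta/\pi}\,\tr(\ccalP(\bbe)) = \sqrt{2\theta/\pi}\,\mathbf{1}_N^\top\bbe$, governed by the diagonal of $\ccalP(\bbe)$, whereas the off-support $\ell_1$ term is driven by the off-diagonal row energy $\|\bbe\|_2^2 - \|(\bbV\circ\bbV)\bbe\|_2^2$. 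Splitting $\bbe = a\mathbf{1}_N + \bbe_\perp$ with $\bbe_\perp = \bbP^\perp_1\bbe$ and using $(\bbV\circ\bbV)\mathbf{1}_N = \mathbf{1}_N$ together with $(\bbV\circ\bbV)\bbe_\perp = \tbU\bbe_\perp$, this off-diagonal energy is at least $(1-\sigma_{\max}^2(\tbU))\|\bbe_\perp\|_2^2$ up to the cross term, which is the origin of the factor $\sqrt{1-\sigma_{\max}^2(\tbU)}$ in $d_0$. Combining the off-support contribution (coefficient $\sim(1-\theta)$) with the off-diagonal part of an absolute-value bound on the on-support term (coefficient $\sim\theta$) yields the net coefficient $\sim(1-2\theta)$ on $\|\bbe_\perp\|_2$; after inserting the Gaussian moment constant $\sqrt{2/\pi}$ and the concentration slacks this becomes the bracket $(1-\sigma_1) - 2\theta(1+\sigma_2)$, positive precisely in the regime $\theta\lesssim 1/2$ compatible with $\theta\le 0.324$.

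The remaining, constraint-dependent piece is the diagonal term proportional to $\mathbf{1}_N^\top\bbe = N a$, and this is exactly where \eqref{eq:reco_gaur} enters. From the feasibility relation $(\diag(\bbr)\tbg_0)^\top\bbe = 0$, projecting along $\mathbf{1}_N$ and its orthogonal complement gives $a\,c = -\bbe_\perp^\top\bbP^\perp_1\diag(\bbr)\tbg_0$, whence $|\mathbf{1}_N^\top\bbe| = N|a| \le N\big(\|\bbP^\perp_1\diag(\bbr)\tbg_0\|_2/c\big)\|\bbe_\perp\|_2$ by Cauchy--Schwarz. Thus the ``bad'' diagonal term is dominated by the ``good'' off-support term whenever $\|\bbP^\perp_1\diag(\bbr)\tbg_0\|_2 \le c\,d_0$, with $d_0$ being precisely the ratio of the good coefficient to the bad one; note that $c = \bbr^\top\tbg_0 \neq 0$ is what forbids the pure direction $\bbe = \mathbf{1}_N$ and makes this bound meaningful, which also explains how the choice of $\bbr$ influences recovery.

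Finally I would upgrade the in-expectation inequality to a uniform high-probability statement. The quantities above are sums over the $P$ i.i.d.\ columns, so Bernstein/sub-Gaussian concentration controls their fluctuations through the slacks $\sigma_1,\ldots,\sigma_4$ with failure probability at most $\delta$ once $P \ge C'\sigma_m^{-2}\log(4/\delta)$ (the constant $4$ reflecting a union bound over the constituent events), and an $\epsilon$-net over the sphere of feasible $\bbe_\perp$ combined with Lipschitz continuity extends the bound to all directions simultaneously. The hard part will be this last uniform step in the absence of the circulant/DFT structure of \cite{wang2016blind}: because $\bbV$ is a general orthonormal eigenbasis, the entries $(\ccalP(\bbe)\bbx_p)_i$ are correlated across $i$ through the eigenvectors, so both the mean evaluations and the concentration must be pushed through the spectral representation, the diagonal/off-diagonal cross-terms must be tamed via $\sigma_{\max}(\tbU)\le 1$, and the Bernoulli supports must be decoupled from the Gaussian amplitudes; obtaining the uniform lower bound with the stated constants is the crux of the argument.
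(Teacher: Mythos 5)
Your proposal is correct in outline and follows essentially the same route as the paper's proof in \ref{appendix_theo_1}: your directional-derivative optimality condition is the paper's subgradient inequality \eqref{eq_thm0_step1}; your change of variables $\bbe=\bbd\circ\tbh_0$ is the paper's substitution $\bbw=\tbg\circ\tbh_0$ leading to the equivalent problem \eqref{pb_2}; your diagonal/hollow split of $\ccalP(\bbe)$, with on-support mean $\sqrt{2\theta/\pi}\,P\,\mathbf{1}_N^\top\bbe$ and net coefficient $(1-\sigma_1)-2\theta(1+\sigma_2)$ on the off-diagonal row energy, reproduces \eqref{proof_prop1_lowerbound1}--\eqref{proof_prop1_lowerbound2}; and your Cauchy--Schwarz handling of the constraint is the analogue of step $(\star)$ in \eqref{eq_thm0_step3}, yielding exactly $\|\bbP_1^\perp\diag(\bbr)\tbg_0\|_2\leq c\,d_0$. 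Two points of comparison are worth recording. First, what you defer as ``the crux'' is precisely where the paper does its heavy lifting: Lemmas \ref{lemma:1}--\ref{lemma:gaussian_tail} and Proposition \ref{proposition:2} (moment bounds, uniform exponential tails, and sub-Gaussian concentration of $\|\bbm_i^\top\bbX\|_1$ and of its support-masked version for hollow $\bbm_i$, adapted from Matou\v{s}ek), plus Lemma \ref{lemma:5} for the diagonal term; your sketch names the right objects and the right constants but proves none of this, so as it stands it is a plan rather than a proof. Second, your $\epsilon$-net step is a genuine departure: the paper's Proposition \ref{proposition:2} and Lemma \ref{lemma:5} are fixed-direction statements, and the paper invokes them inside a ``for all feasible $\bbdelta$'' argument without any net or union bound over directions, so your concern about uniformity is well placed -- on this point your plan is more careful than the published argument. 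Note also that the net argument only inflates $C'$ by an $N$-dependent factor, which the theorem statement tolerates since $C'$ is required to be independent only of $P$, $\sigma_m$, and $\delta$.
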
 
\begin{proof}
 See\ref{appendix_theo_1}.   
\end{proof}
Notice that when $\bbr = \mathbf{1}_\text{N}$ as in \eqref{eq:opt_prob_convex}, the sufficient recovery condition \eqref{eq:reco_gaur} simplifies to
\begin{align} \label{eq:reco_condi}
\left\|\mathbf{P}^\perp_1\Tilde{\bbg}_0\right\|_2  \leq N d_0. 
\end{align}
Condition \eqref{eq:reco_condi} essentially states that when the frequency response of the inverse filter $\tbg_0$ is closer to the all-ones vector $\mathbf{1}_N$, meaning that $\mathbf{P}^\perp_1\tbg_0$ is smaller in magnitude, the inverse filter is easier to recover, as the right-hand-side of the bound \eqref{eq:reco_gaur} does not directly depend on $\tbg_0$. Furthermore, the influence of prior knowledge of the true filter $\tbh_0$ on the recovery performance can be illustrated through a specific and straightforward case where $\bbr = \tbh_0$ in \eqref{eq:reco_gaur}. In this idealised scenario, since $\|\mathbf{P}^\perp_1\diag(\tbh_0)\tbg_0\|_2 = \|\mathbf{P}^\perp_1\mathbf{1}_N\|_2 = 0$, the left-hand-side of \eqref{eq:reco_gaur} will certainly be less than $c d_0$. 
Although this is a trivial case, it suggests that when $\bbr$ is closer to $\tbh_0$ (meaning that $\diag(\bbr)\tbg_0=\bbr\circ\tbg_0$ is closer to $\mathbf{1}_N$), the recovery performance of the proposed convex relaxation \eqref{pb_1} will improve.

Additionally, increasing $\theta$ within the feasible range $\theta\in \left(0, 0.324\right]$ will decrease $d_0$, making the recovery of the inverse filter more challenging. Conversely, when $\|\mathbf{P}^\perp_1\Tilde{\bbg}_0\|_2$ is smaller, a lower $d_0$ can be tolerated, allowing for a denser input signal (higher $\theta$). Moreover, in this last case the feasible parameters $\{\sigma_1, \sigma_2, \sigma_3, \sigma_4\}$ (and hence $\sigma_m$) can be larger, resulting in a lower required number of observations $P$. Naturally, increasing the recovery probability $1-\delta$ (provided that the sufficient condition is satisfied) necessitates a larger sample size; see the $\log(4/\delta)$ scaling.

\subsection{Stable recovery from noisy data}\label{ssec:stable_recovery}

Suppose we are now given $P$ noise-corrupted graph signal observations $\bbY = \ccalP(\tbh_0)\bbX_0+\bbN$, where $\bbN\in\reals^{N\times P}$ is an additive noise matrix. Since the filter is invertible, we can write $\ccalP(\tbg_0)\bbY = \bbX_0 + \ccalP(\tbg_0)\bbN$. It will be convenient to split the effective noise $\ccalP(\tbg_0)\bbN$ into matrix components respectively corrupting entries in $\ccalS:=\textrm{supp}(\bbX_0)$ and its complement $\ccalS^C$. The latter will be denoted by $\bbN^{(C)}:=[\ccalP(\tbg_0)\bbN]_{\ccalS^C} = \ccalP(\tbg_0)\bbN - [\ccalP(\tbg_0)\bbN]_{\ccalS}$, where matrix $[\bbM]_{\ccalA}$ has entries $M_{ij}$ if $(i,j)\in \ccalA$ and $0$ otherwise.

We can establish the following error bound for the solution of \eqref{pb_1}, which holds under the same conditions of Theorem \ref{theorem_1} and asserts that the inverse filter recovery error will be stable.
\begin{mytheorem}[Stable recovery] \label{theorem_2}
\normalfont
Consider graph signal observations $\bbY = \ccalP(\tbh_0)\bbX_0+\bbN$, where $\bbN\in\reals^{N\times P}$ is an additive noise matrix.
Let $\bbd: = \bbP_1^\perp \diag(\bbr)\Tilde{\bbg}_0$ and assume the conditions in Theorem \ref{theorem_1} are satisfied.  
Then the estimation error associated to the solution $\hat\tbg$ of problem \eqref{pb_1} 
is bounded by
\begin{equation} \label{eq_thm2}
    \left\|\hat\tbg - \tbg_0\right\|_l \leq\frac{2\left\|\diag(\tbg_0) \left(\bbI_N - \mathbf{1}_N\frac{(\bbr\circ\tbg_0)^\top}{c}\right)\right\|_{l\rightarrow 2}\|\bbN^{(C)}\|_{1,1}}{\sqrt{\frac{2}{\pi}}P Q - d_0  \|\bbN^{(C)}\|_{1,1}  - \|[\bbN^{(C)}]^\top\bbV\odot\bbV\|_{1\rightarrow 2}},
\end{equation}
where $Q:=\frac{(1+\sigma_3)\sqrt{\theta}}{c}\left[\sqrt{c^2 d_0^2-(1-\sigma_5)^2\|\bbd\|_2^2}-\sigma_5 \|\bbd\|_2\right]$, for some $\sigma_5\in[0,1]$, and $\|\cdot\|_l$ stands for the $\ell_1$ and $\ell_2$ norms when $l=1,2$, respectively.
\end{mytheorem}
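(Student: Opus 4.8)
The plan is to build on the exact-recovery analysis of Theorem \ref{theorem_1}, retaining its high-probability concentration events and geometric certificate, and to superimpose a noise analysis that isolates the off-support effective noise $\bbN^{(C)}$. First I would write the frequency-domain error as $\bbe := \hat\tbg - \tbg_0$ and observe that, since both $\hat\tbg$ and $\tbg_0$ obey the constraint $\bbr^\top\tbg = c$, the error lies in the hyperplane $\bbr^\top\bbe = 0$. It is convenient to encode this constraint through the operator appearing in the numerator of \eqref{eq_thm2}: writing $\bba := \bbr\circ\tbg_0$ and using $\bba^\top\mathbf{1}_N = \bbr^\top\tbg_0 = c$, one checks that $\bbr^\top\diag(\tbg_0)(\bbI_N - \mathbf{1}_N\bba^\top/c) = \mathbf{0}_N^\top$, so this operator maps arbitrary vectors into the constraint subspace, and conversely every admissible $\bbe$ is of this form. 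This identity is precisely what introduces the factor $\|\diag(\tbg_0)(\bbI_N - \mathbf{1}_N(\bbr\circ\tbg_0)^\top/c)\|_{l\to2}$ into the bound, mediating the passage between the control obtained on the constrained perturbation and the stated $\ell_l$ estimate on $\bbe$.

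Next I would expand the optimality inequality $\|\ccalP(\hat\tbg)\bbY\|_{1,1} \leq \|\ccalP(\tbg_0)\bbY\|_{1,1}$. Substituting $\bbY = \ccalP(\tbh_0)\bbX_0 + \bbN$ and using $\hat\tbg\circ\tbh_0 = \mathbf{1}_N + \bbe\circ\tbh_0$ together with the identity $\ccalP(\tbg_0)\bbY = \bbX_0 + \ccalP(\tbg_0)\bbN$ from Section \ref{ssec:stable_recovery}, the left-hand side takes the form $\|\bbX_0 + \ccalP(\bbe\circ\tbh_0)\bbX_0 + \ccalP(\tbg_0)\bbN + \ccalP(\bbe)\bbN\|_{1,1}$. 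I would then split every matrix according to the support $\ccalS = \textrm{supp}(\bbX_0)$ and its complement $\ccalS^C$. The crucial observation is that on $\ccalS^C$ the clean signal vanishes, so the off-support part of the objective reduces to contributions from the perturbation and from the off-support effective noise $\bbN^{(C)}$ alone; this is exactly where $\bbN^{(C)}$ (rather than the full $\ccalP(\tbg_0)\bbN$) enters, both as a direct additive term $\|\bbN^{(C)}\|_{1,1}$ and, after pairing the noise against the perturbation via the vectorized map $\bbY^\top\bbV\odot\bbV$, as the operator-norm term $\|[\bbN^{(C)}]^\top\bbV\odot\bbV\|_{1\to2}$.

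I would then invoke the linear-growth (sharpness) estimate established inside the proof of Theorem \ref{theorem_1}: on the same high-probability event, the clean part of the objective grows at least at rate $\sqrt{2/\pi}\,PQ$ along any feasible error direction. The factor $\sqrt{2/\pi}$ is the mean absolute value of a standard Gaussian, so that $P^{-1}$ times the sum of magnitudes of the Gaussian linear forms in $\bbX_0$ concentrates; $Q$ measures how much of the perturbation survives after projecting away from the certificate direction, and with $\sigma_5\in[0,1]$ an additional concentration slack (in the spirit of $\sigma_1,\ldots,\sigma_4$), its value $Q = \frac{(1+\sigma_3)\sqrt{\theta}}{c}[\sqrt{c^2 d_0^2 - (1-\sigma_5)^2\|\bbd\|_2^2} - \sigma_5\|\bbd\|_2]$ is nonnegative precisely because Theorem \ref{theorem_1}'s condition $\|\bbd\|_2 \leq c d_0$ holds; a strictly satisfied exact-recovery condition then yields a strictly positive $Q$, i.e., genuine noise robustness. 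Combining this lower bound with the off-support noise upper bounds, rearranging into the form $(\text{gain} - \text{noise degradation})\cdot\|\bbe\| \leq (\text{noise influence})$, and finally applying the norm conversion from the first step produces \eqref{eq_thm2}.

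The step I expect to be the main obstacle is the noise bookkeeping in the second paragraph, specifically showing that only $\bbN^{(C)}$, and not the on-support effective noise, appears in the final estimate. On $\ccalS$ one must argue that the subgradient/sign pattern used to certify $\tbg_0$ in Theorem \ref{theorem_1} remains admissible once noise is added, so that the on-support noise contribution cancels against its counterpart on the clean side (or is otherwise dominated), leaving the sharpness bound intact. Establishing this, and simultaneously guaranteeing that the denominator $\sqrt{2/\pi}\,PQ - d_0\|\bbN^{(C)}\|_{1,1} - \|[\bbN^{(C)}]^\top\bbV\odot\bbV\|_{1\to2}$ remains positive so the division is legitimate, is the delicate part; the probabilistic content is inherited directly from Theorem \ref{theorem_1}.
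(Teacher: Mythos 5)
Your proposal follows essentially the same route as the paper's proof: the paper likewise starts from the optimality comparison between $\hat\tbg$ and the feasible point $\tbg_0$ (phrased in the variable $\bbw=\tbg\circ\tbh_0$, so the reference solution becomes $\mathbf{1}_N$), splits the effective noise $\ccalP(\tbg_0)\bbN$ into its on-support part $\bbN^{(S)}$ and off-support part $\bbN^{(C)}$, reuses the linear-growth estimate from Theorem \ref{theorem_1}'s proof at rate $\beta_0 PQ=\sqrt{2/\pi}\,PQ$, upper-bounds $\|\ccalP(\hat\bbw)\bbN^{(C)}\|_{1,1}$ by $\|\bbN^{(C)}\|_{1,1}+\bigl(d_0\|\bbN^{(C)}\|_{1,1}+\|[\bbN^{(C)}]^\top\bbV\odot\bbV\|_{1\rightarrow 2}\bigr)\|\hat{\bbb}\|_2$, and then converts the resulting control on the perturbation into \eqref{eq_thm2} through exactly the operator $\diag(\tbg_0)\bigl(\bbI_N-\mathbf{1}_N(\bbr\circ\tbg_0)^\top/c\bigr)$ whose role you correctly identify.

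Two remarks. First, the obstacle you flag at the end is closed in the paper by a device simpler than re-examining subgradient admissibility: the on-support noise is absorbed into the signal, i.e., $\bbX+\bbN^{(S)}$ is treated as the effective sparse input. Since $\textrm{supp}(\bbX+\bbN^{(S)})=\textrm{supp}(\bbX)$, the sharpness bound is applied to this matrix directly, and $\|\bbX+\bbN^{(S)}\|_{1,1}$ then appears on both sides of the optimality inequality and cancels -- precisely the cancellation you anticipated, so your plan is consistent with the paper's execution. (Whether the concentration bounds behind the sharpness estimate, which are probabilistic statements about a Bernoulli-Gaussian $\bbX$, transfer verbatim to the deterministically perturbed $\bbX+\bbN^{(S)}$ is a point the paper itself glosses over with a one-line support argument; your caution there is warranted.) Second, your reading of $\sigma_5$ as a ``concentration slack in the spirit of $\sigma_1,\ldots,\sigma_4$'' is off: in the paper $\sigma_5:=|\bbd^\top\hat{\bbb}|/(\|\bbd\|_2\|\hat{\bbb}\|_2)$ is a purely geometric quantity, the cosine of the angle between $\bbd$ and the component $\hat{\bbb}$ of the perturbation orthogonal to $\mathbf{1}_N$; it encodes how the realized error direction aligns with the constraint vector and carries no probabilistic content, which is why $Q$ (unlike $d_0$) depends on the particular solution rather than on additional sample-size requirements.
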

\begin{proof}
See \ref{appendix_theo_2}.   
\end{proof}
Notice first that $Q\geq 0$ when \eqref{eq:reco_gaur} holds. Importantly, the denominator in the right-hand-side of \eqref{eq_thm2} should be non-negative to obtain feasible upper bound. This effectively imposes a constraint on the magnitude of the noise component $\bbN^{(C)}$ in $\ccalS^C$, which should satisfy 
\begin{equation}\label{eq_thm2_2}
    \|\bbN^{(C)}\|_F \leq \frac{\sqrt{\frac{2}{\pi}}P Q}{d_0\|\bar{\bbN}^{(C)}\|_{1,1} + \|[\bar{\bbN}^{(C)}]^\top\bbV\odot\bbV\|_{1\rightarrow 2}},
\end{equation}
where $\bar{\bbN}^{(C)} := \bbN^{(C)}/\|\bbN^{(C)}\|_F$. The right-hand-side of \eqref{eq_thm2_2} provides an upper bound to the strength of the noise that is tolerable. Once more, in favorable settings where $\|\bbd\|_2$ is small, e.g., if $\tbg_0$ is closer to the all-ones vector $\mathbf{1}_N$, we will have a larger upper bound in \eqref{eq_thm2_2} because $Q$ will be larger. Similarly for large $d_0$, for instance in sparse settings where $\theta$ is small. Either way, we have $Q\approx (1+\sigma_3)\sqrt{\theta} d_0$ and the noise condition simplifies to $\|\bbN^{(C)}\|_{1,1}\leq \sqrt{\frac{2 \theta}{\pi}} (1+\sigma_3)P$. 


\section{Numerical Results}\label{S:simulation}

We carry out numerical experiments to assess the performance of the proposed convex relaxation in a variety of settings. To this end, we run 
Algorithm \ref{alg:reweighted_cvx} and solve the per-iteration sparse recovery problems using CVX~\cite{grant2008cvx}. We first rely on synthetic data to simulate various controlled settings, which allow us to verify some of the conclusions drawn from our theoretical analysis in Section \ref{S:amb_uni}. We then compare against the matrix lifting approach in~\cite{segarra2017blind} with a focus on the sparsity levels and filter orders that lead to successful recovery. Finally, we conduct a network source localization experiment using real social network data, and include the non-convex $\ell_1$ recovery algorithm in~\cite{pena2016source} as an additional baseline. 

\begin{figure*}[ht]  
	\begin{minipage}[b]{0.24\textwidth}
		\centering
		\includegraphics[width=1\linewidth]{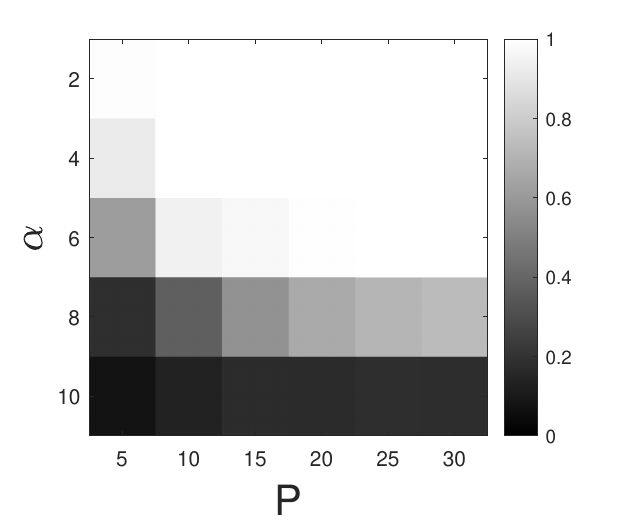}
		\centerline{(a)}\medskip
	\end{minipage}
	\hfill
	\begin{minipage}[b]{0.24\textwidth} 
		\centering
		\includegraphics[width=1\linewidth]{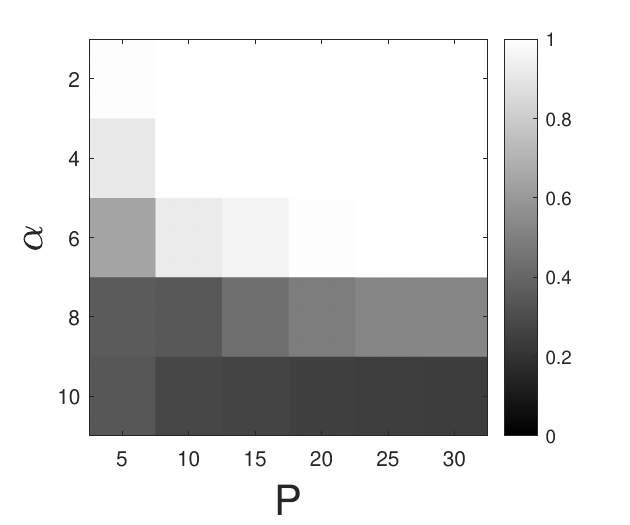}
		\centerline{(b)}\medskip
	\end{minipage}
	\hfill
	\begin{minipage}[b]{0.24\textwidth} 
		\centering
		\includegraphics[width=1\linewidth]{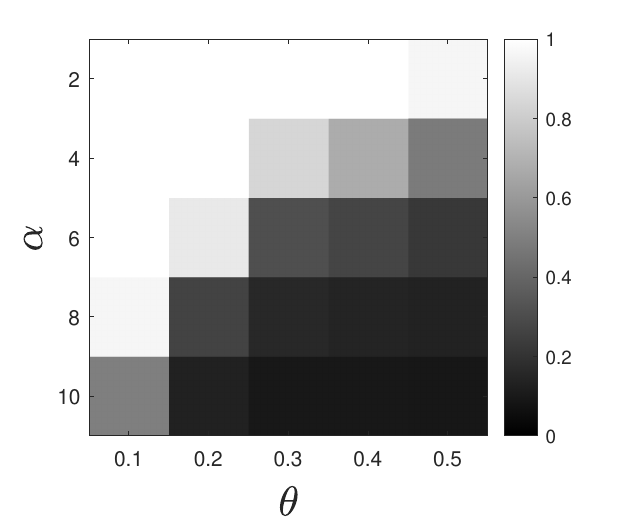}
		\centerline{(c)}\medskip
	\end{minipage}
	\hfill
	\begin{minipage}[b]{0.24\textwidth} 
		\centering
		\includegraphics[width=1\linewidth]{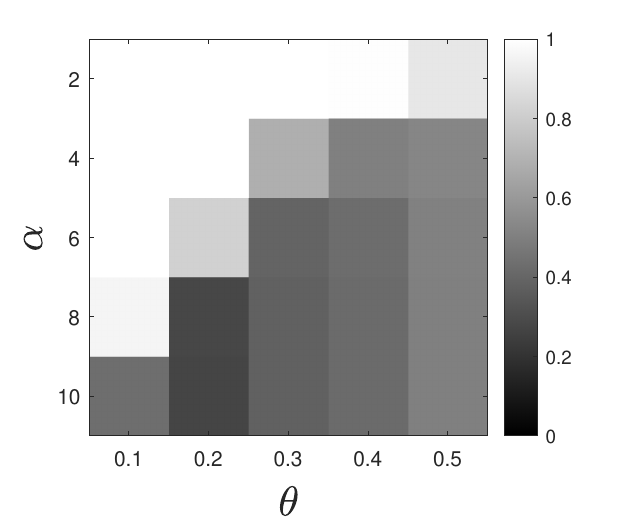}
		\centerline{(d)}\medskip
	\end{minipage}
	\vskip\baselineskip
	\begin{minipage}[b]{0.24\textwidth}
		\centering
		\includegraphics[width=1\linewidth]{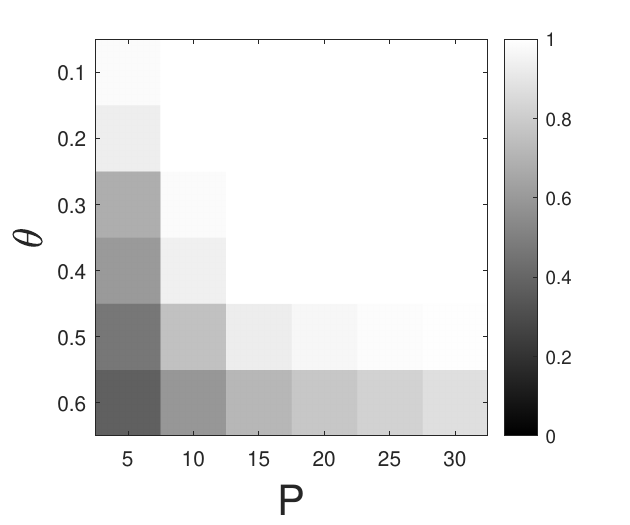}
		\centerline{(e)}\medskip
	\end{minipage}
	\hfill
	\begin{minipage}[b]{0.24\textwidth} 
		\centering
		\includegraphics[width=1\linewidth]{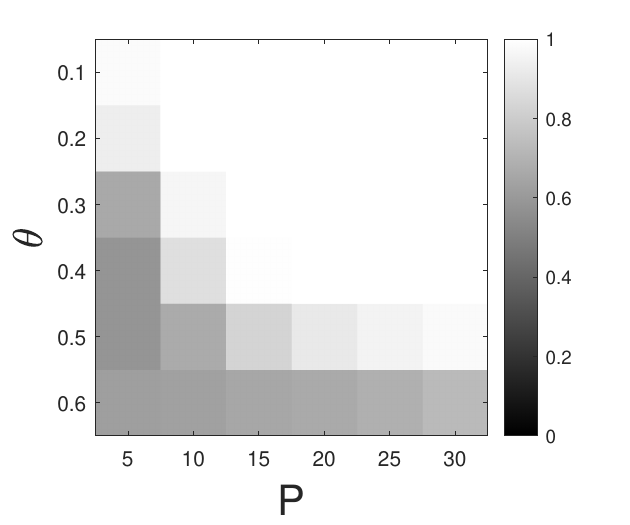}
		\centerline{(f)}\medskip
	\end{minipage}
	\hfill
	\begin{minipage}[b]{0.24\textwidth} 
		\centering
		\includegraphics[width=1\linewidth]{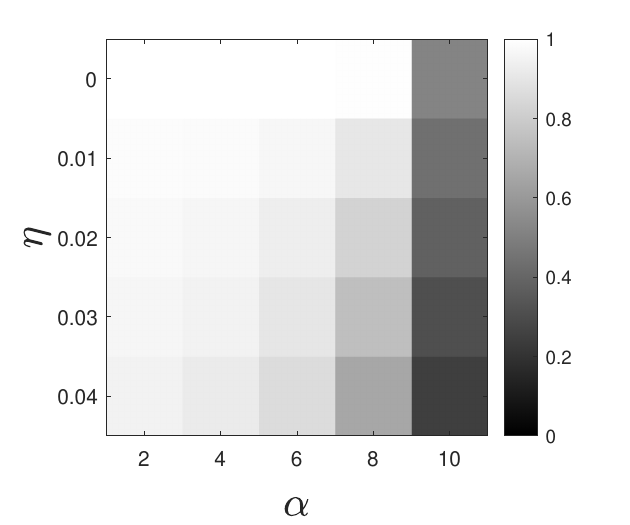}
		\centerline{(g)}\medskip
	\end{minipage}
	\hfill
	\begin{minipage}[b]{0.24\textwidth} 
		\centering
		\includegraphics[width=1\linewidth]{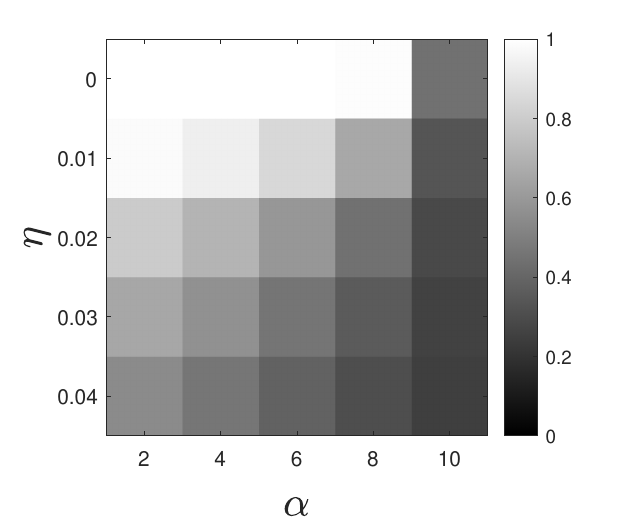}
		\centerline{(h)}\medskip
    \end{minipage}
 \caption{Recovery performance of Algorithm \ref{alg:reweighted_cvx} for problems involving Erdös-Renyi random graphs with $N = 20,\:p=0.4$. The first and third columns show $1 - \text{RE}_x$ for different settings (the whiter the better, meaning lower error), the second and fourth columns show the accuracy of support estimate for different settings (the whiter the better, meaning higher fraction of true sources recovered). For (a) and (b) the recovery performance of $\alpha$ v.s. $P$, with $\theta = 0.15$, is shown. For (c) and (d) $\alpha$ v.s. $\theta$ with $P = N = 20$. For (e) and (f) $\theta$ v.s. $P$, with $\alpha = 0.1N = 2$. For (g) and (h) we consider the noisy case, i.e. $\eta$ v.s. $\alpha$, with $\theta = 0.1, P = N = 20$. All of the results represent averages over 100 independent realizations. Apparently, the proposed approach exhibits satisfactory performance over a wide range of settings.} \label{f:figure_main}
\end{figure*}

\subsection{Key parameters affecting recovery performance}\label{ss:experiment_part1}

Here we consider Erdös-Renyi random graphs with $N = 20$ and edge connection probability $p=0.4$. For the GSO we adopt the degree-normalized adjacency matrix $\bbS = \bbD^{-\frac{1}{2}}\bbA \bbD^{-\frac{1}{2}}$, where $\bbD:=\textrm{diag}(\bbA \cdot \mathbf{1}_N)$. Notice that the eigenvalues of $\bbS$ are bounded, i.e., $\lambda_i\in(-1,1]$. 
We generate graph filters $\bbH$ such that their inverse filters have frequency responses $\tbg = \mathbf{1}_N + \alpha\bbP_1^\perp\bbb$, with perturbation $\bbb\in\reals^{N}$ drawn from $\textrm{Normal}(\mathbf{0}_N,\bbI_N)$ (multivariate standard Gaussian) and subsequently scaled such that $\|\bbP_1^\perp\bbb\|_2 = N$. Hence, the left-hand-side of \eqref{eq:reco_condi} becomes $\|\bbP_1^\perp\tbg_0\|_2 = \alpha\|\bbP_1^\perp\bbb\|_2 = \alpha N.$ and so \eqref{eq:reco_condi} will be satisfied whenever $\alpha\leq d_0$. The matrix $\bbX$ of sparse sources is drawn according to the Bernoulli-Gaussian model in Definition \ref{mydef_BGmodel}, and we will control the sparsity parameter $\theta$. Besides, in order to verify the stability properties conveyed by the error bound \eqref{eq_thm2}, we also consider noise corrupted observations $\bbY = \bbH\bbX+\eta\bbN$, where $\bbN$ has i.i.d. standard Gaussian entries. 

Given this simulation setup, we conducted four experiments to further examine the exact and stable recovery properties of \eqref{eq:opt_prob_convex}. The results are shown in Figure \ref{f:figure_main}, which depicts different figures of merit as a function of: (i) $\alpha$ vs $P$, (ii) $\alpha$ vs $\theta$, (iii) $\theta$ vs $P$, (iv) $\eta$ vs $\alpha$. As figures of merit we consider the root mean square error (RMSE) of the estimated sources $\hat\bbX$ as well as the support recovery accuracy, i.e., 
\begin{equation*}
\text{RE}_x:=\frac{\|\hat\bbX - \bbX_0\|_F}{\|\bbX_0\|_F},\quad \text{ACC}_x:=\frac{|\text{supp}_\kappa(\hat\bbX) \cap\text{supp}_\kappa(\bbX_0)|}{|\text{supp}_\kappa(\bbX_0)|},    
\end{equation*}
where $\text{supp}_\kappa(\cdot)$ is the support function with threshold $\kappa$. For these tests, we use $\kappa = 0.1$.

In Figure \ref{f:figure_main}, the first and third columns depict $1-\text{RE}_x$ and the second and fourth columns show $\text{ACC}_x$. In all cases, lighter-colored pixels indicate better performance. In the top row, we examine the relation of $\alpha = \|\bbP_1^\perp\tbg_0\|/N$ v.s. $P$ in (a)-(b) and $\alpha$ v.s. $\theta$ in (c)-(d). Our previous discussion for Theorem \ref{theorem_1} and the recovery condition \eqref{eq:reco_condi} is corroborated by these results which show a smaller $\alpha$ would both reduce the required sample size $P$ and tolerate denser sources, i.e., a larger $\theta$. From Figures \ref{f:figure_main} (e)-(f), it can be seen that while $\alpha = 2$ is fixed, which implies the lower bound of $d_0$ is unchanged, source signals generated with a larger $\theta$ require a larger sample size $P$ for successful recovery. And this observation is also consistent with Theorem \ref{theorem_1}. Notice that while our sufficient conditions hold for $\theta\in(0,0.324]$, in practice successful recovery is possible for larger $\theta$ provided that e.g., $P$ is large enough or $\alpha$ is sufficiently small. 

Moving on to the noisy setting, from Figures \ref{f:figure_main} (g)-(h) it can be seen that a smaller $\alpha$ would result in higher robustness to noise. This finding is also consistent with Theorem \ref{theorem_2}, because $\|\bbd\|_2 = \alpha$ when we use the all-ones vector in the constraint, i.e., $\bbr = \mathbf{1}_N$. All in all, as expected the numerical results in Figure \ref{f:figure_main} are in line with the theoretical guarantees in Theorems \ref{theorem_1} and \ref{theorem_2}.

\begin{figure*}[ht]  
	\begin{minipage}[b]{0.24\textwidth}
		\centering
		\includegraphics[width=1\linewidth]{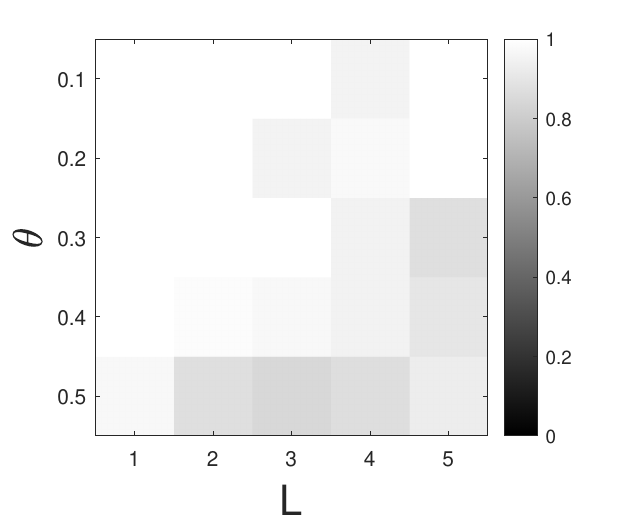}
		\centerline{(a)}\medskip
	\end{minipage}
	\hfill
	\begin{minipage}[b]{0.24\textwidth} 
		\centering
		\includegraphics[width=1\linewidth]{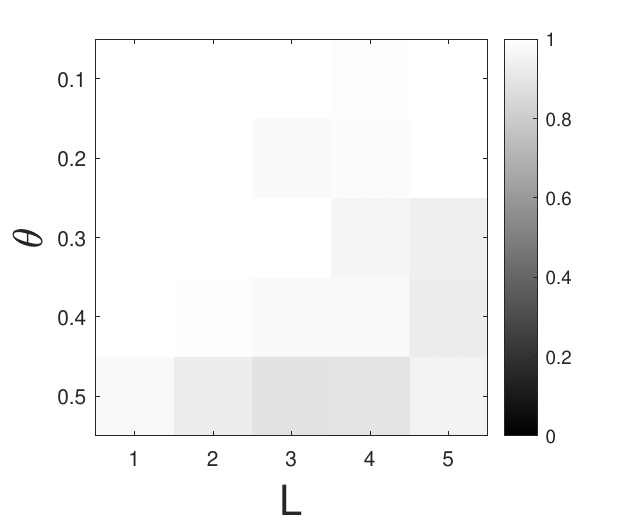}
		\centerline{(b)}\medskip
	\end{minipage}
	\hfill
	\begin{minipage}[b]{0.24\textwidth} 
		\centering
		\includegraphics[width=1\linewidth]{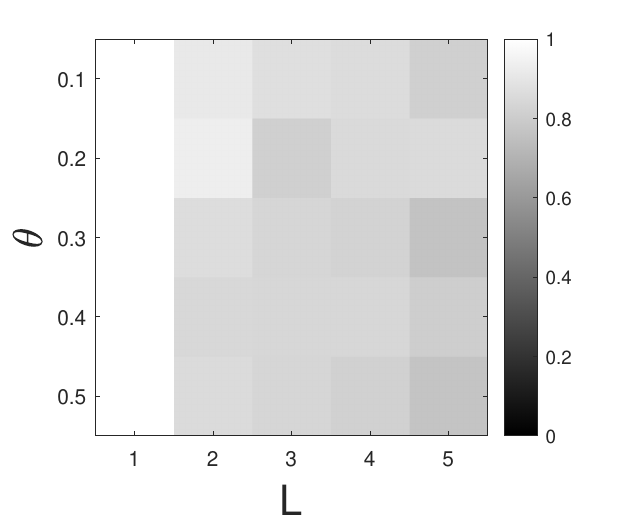}
		\centerline{(c)}\medskip
	\end{minipage}
	\hfill
	\begin{minipage}[b]{0.24\textwidth} 
		\centering
		\includegraphics[width=1\linewidth]{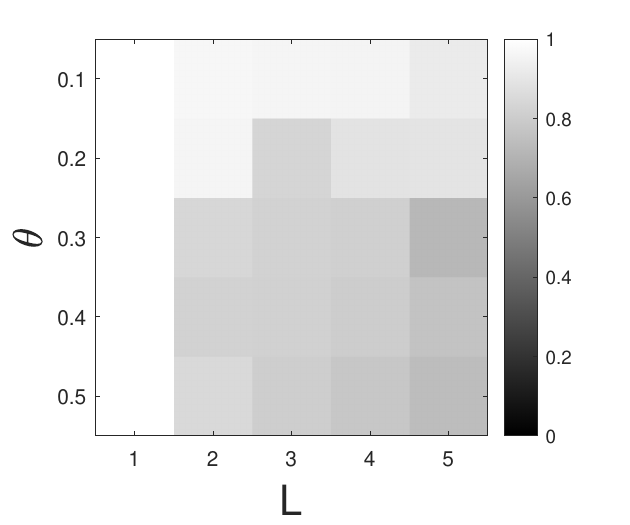}
		\centerline{(d)}\medskip
	\end{minipage}
 \vskip\baselineskip
    \begin{minipage}[b]{0.24\textwidth}
		\centering
		\includegraphics[width=1\linewidth]{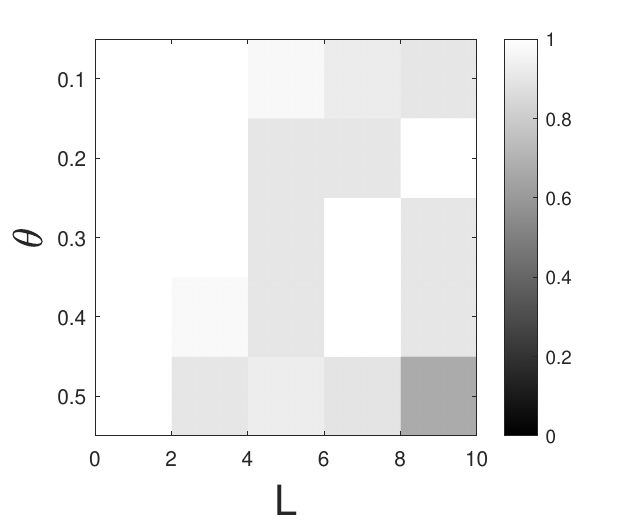}
		\centerline{(e)}\medskip
	\end{minipage}
	\hfill
	\begin{minipage}[b]{0.24\textwidth} 
		\centering
		\includegraphics[width=1\linewidth]{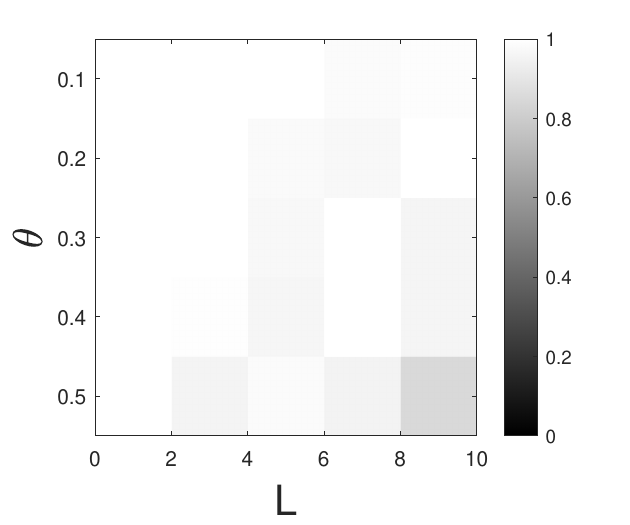}
		\centerline{(f)}\medskip
	\end{minipage}
	\hfill
	\begin{minipage}[b]{0.24\textwidth} 
		\centering
		\includegraphics[width=1\linewidth]{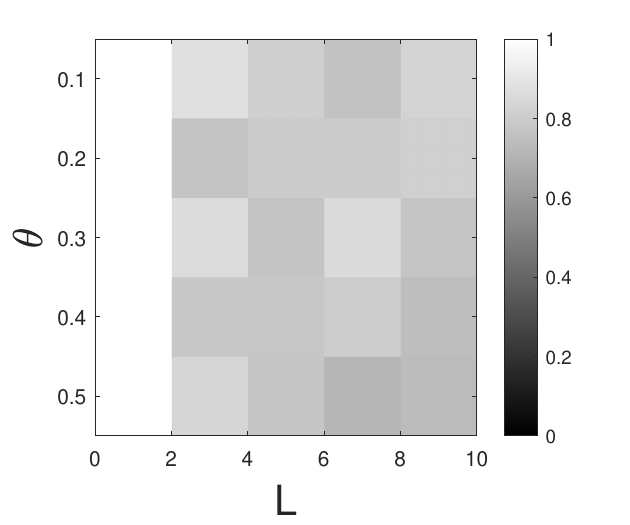}
		\centerline{(g)}\medskip
	\end{minipage}
	\hfill
	\begin{minipage}[b]{0.24\textwidth} 
		\centering
		\includegraphics[width=1\linewidth]{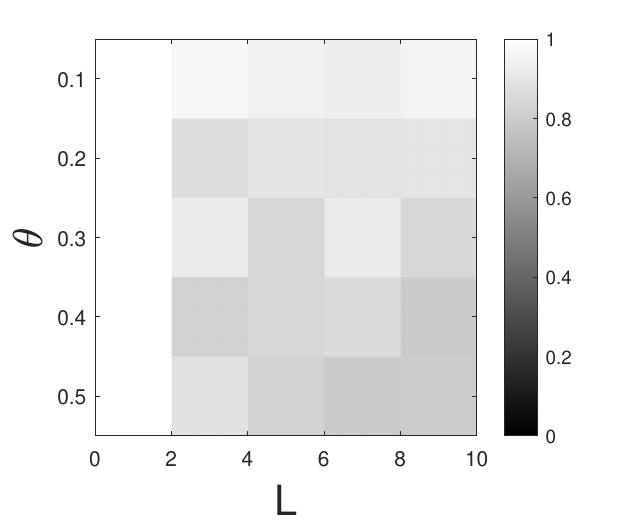}
		\centerline{(h)}\medskip
	\end{minipage}
\caption{Recovery performance comparing Algorithm \ref{alg:reweighted_cvx} with the matrix-lifting approach in~\cite{segarra2017blind}, for an ER graph with $N = P = 20,\: p=0.4$ and a filter perturbation magnitude of $\beta = 0.5$. Plots (a) and (b) show $1 - \text{RE}_x$ and accuracy of support estimate for the proposed estimator, while (c) and (d) show $1 - \text{RE}_x$ and accuracy of support estimate for the lifting approach. The same is shown in plots (e)-(h), but for the 66-node structural brain network studied in~\cite{hagmann2008mapping}, with $P=30$. All the results represent averages over 20 independent realizations. Algorithm \ref{alg:reweighted_cvx} uniformly outperforms  the matrix-lifting baseline, and is less sensitive to the filter order $L$.}\label{f:figure_L} 
\end{figure*} 

\subsection{Comparing with the matrix lifting approach in~\cite{segarra2017blind}} \label{ss:experiment_lifting}

Here we start by examining the sensitivity of the estimator \eqref{eq:opt_prob_convex} to the graph filter order $L$, which may impact $\|\bbd\|_2=\|\bbP_1^\perp\tbg_0\|_2$ and hence the satisfiability of \eqref{eq:reco_gaur}. To this end, let $\bbS = \bbD^{-\frac{1}{2}}\bbA \bbD^{-\frac{1}{2}}$ and consider graph filter coefficients of the form $\bbh = [1, h_1, ..., h_{L-1}]^\top$. The frequency response $\tbh = \bbPsi_L\bbh = \mathbf{1}_N + (\sum_{l=1}^{L-1}h_l)\bbe_1 + [\bbPsi_L]_{-1}\bbh_{-1}$, where $[\cdot]_{-1}$ zeroes out the first row of its matrix argument, i.e., $[\bbPsi_L]_{-1} = \diag(\mathbf{1}_N-\bbe_1)\bbPsi_L$ and likewise $\bbh_{-1}=\diag(\mathbf{1}_N-\bbe_1)\bbh$, where $\bbe_1=[1,0,\ldots,0]^\top\in \reals^N$. As for the normalized adjacency matrix used as GSO, all its eigenvalues are within $(-1,1)$ except for the first (biggest) eigenvalue $\lambda_1 = 1$, so the term $[\bbPsi_L]_{-1}\bbh_{-1}$ can be viewed as a small perturbation relative to $\mathbf{1}_N$. This suggests that a higher $L$ together with filter coefficients $\bbh_{-1}$ of larger magnitude would lead to greater deviations of $\tbh$ from $\mathbf{1}_N$. This in turn implies a bigger $\|\bbP_1^\perp\tbg_0\|_2$ that would make filter recovery harder. 
The preceding discussion suggests evaluating the recovery performance of Algorithm \ref{alg:reweighted_cvx} for graph filters with coefficients of the form $\bbh = \bbe_1 + \beta\bbh_{-1}$, with $\bbh_{-1}= [0, \bbb_{L-1}^\top]^\top$ and $\bbb_{L-1}\in\reals^{L-1}$ is drawn from $\textrm{Normal}(\mathbf{0}_{L-1},\bbI_{L-1})$. Parameter $\beta$ controls the filter perturbation, similar to $\alpha$ in Section \ref{ss:experiment_part1}. However, working with (and perturbing) $\bbh$ instead of the inverse filter $\tbg$ as in Section \ref{ss:experiment_part1}, directly allows us to examine the effect of $L$.

In this context, we compare the recovery performance of the proposed approach \eqref{eq:opt_prob_convex} against the matrix-lifting baseline in \cite{segarra2017blind} -- the latter is known to be more sensitive to $L$, especially for higher-order filters~\cite{chang2018eusipco}. We evaluate and report the same figures of merit $\text{RE}_x$ and $\text{ACC}_x$ used in the previous experiment, but now as a function of sparsity ($\theta$) and filter order ($L$). We fix the filter perturbation level to $\beta=0.5$. 
The numerical results are shown in Figures \ref{f:figure_L} (a)-(d) for a $N=20$-node Erdös-Renyi random graph with $p=0.4$, and in Figures \ref{f:figure_L} (e)-(h) for a $N=66$-node structural brain connectome from the study in~\cite{hagmann2008mapping}. The first two columns in Figure \ref{f:figure_L} depict $1-\textrm{RE}_x$ and $\text{ACC}_x$  for Algorithm \ref{alg:reweighted_cvx}, respectively; while the last two columns show the counterparts for the estimator in \cite{segarra2017blind}. Again, lighter-colored pixels are indicative of better performance. Results in Figure \ref{f:figure_L} clearly show that Algorithm \ref{alg:reweighted_cvx} uniformly outperforms the matrix-lifting baseline, and is markedly less (adversely) affected by large values of $L$ -- especially when the input sources are sufficiently sparse. The results are fairly consistent across the graphs tested here (compare the top and bottom rows in Figure \ref{f:figure_L}), but we note recovery appears to be more challenging when signals are diffused over the real brain connectome (even with 50\% more observations). In part, this could be explained by the fact that $\sigma_{\max}(\Tilde{\bbU}_{\textrm{ER}})=0.5054$ (averaged over 20 realizations of ER graphs with $N=20$ and $p=0.4$), while $\sigma_{\max}(\Tilde{\bbU}_{\textrm{brain}})=0.8769$. Hence, $d_0$ in \eqref{eq:reco_condi} will be smaller for the brain network -- everything else being equal.

\begin{figure*}[ht]
	\begin{minipage}[b]{0.24\textwidth}
		\centering
		\includegraphics[width=1\linewidth]{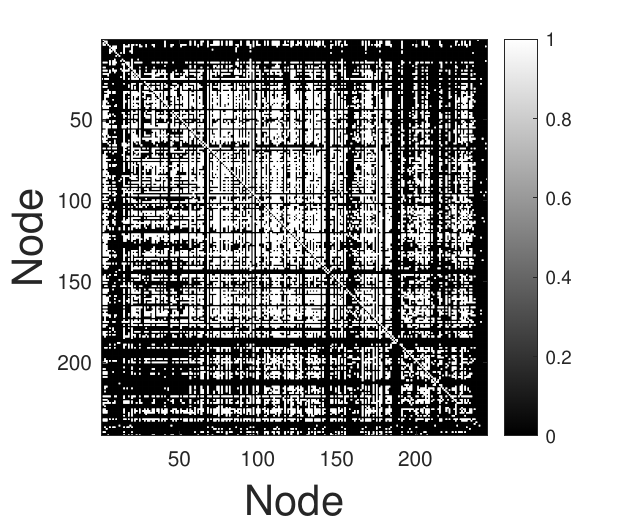}
		\centerline{(a)}\medskip
	\end{minipage}
	\hfill
	\begin{minipage}[b]{0.24\textwidth} 
		\centering
		\includegraphics[width=1\linewidth]{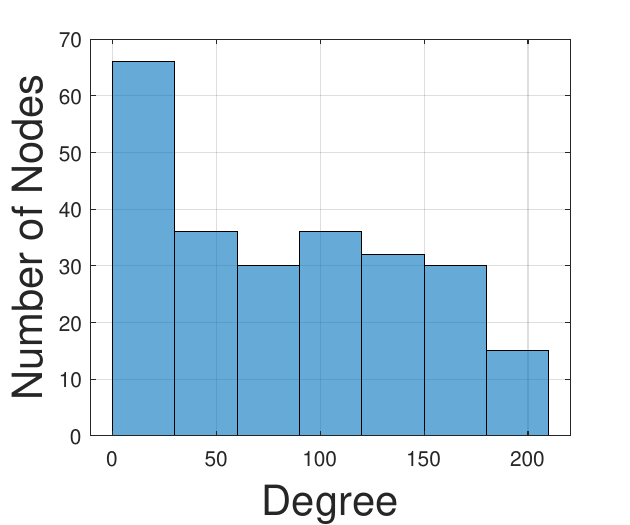}
		\centerline{(b)}\medskip
	\end{minipage}
	\hfill
	\begin{minipage}[b]{0.24\textwidth} 
		\centering
		\includegraphics[width=1\linewidth]{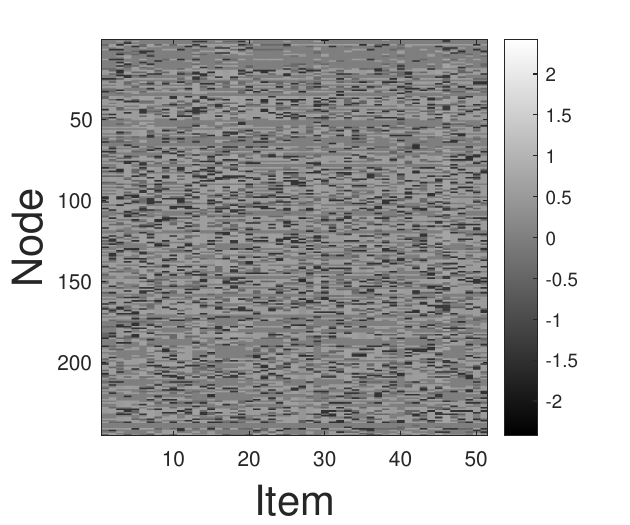}
		\centerline{(c)}\medskip
	\end{minipage}
	\hfill
	\begin{minipage}[b]{0.24\textwidth} 
		\centering
		\includegraphics[width=1\linewidth]{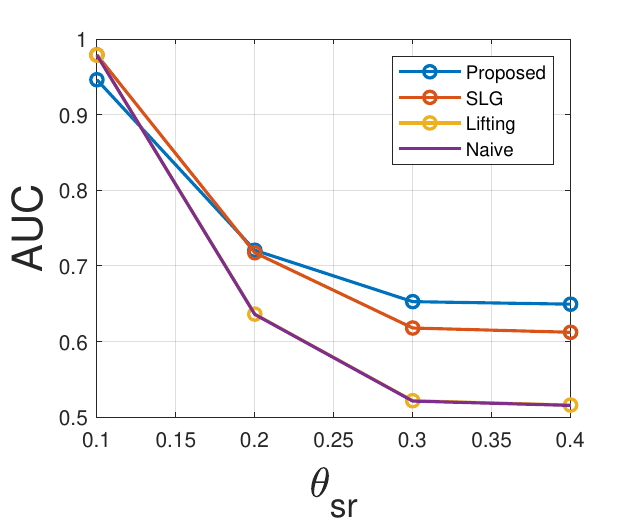}
		\centerline{(d)}\medskip
	\end{minipage}

\caption{Recovery performance on the Epinions dataset. (a) The adjacency matrix $\bbW$ of the sampled directed social network and the (b) the degree distribution of the symmetrized undirected graph with adjacency $\bbA=(\bbW+\bbW^\top)/2$. (c) The centered rating data $\bbY_\text{obs}$. (d) Source localization performance is quantified in terms of the AUC as a function source density level $\theta_\text{sr} = \{0.1,0.2,0.3,0.4\}$. We compare four different approaches: (i) the proposed estimator \eqref{eq:opt_prob_convex}; (ii) the source localization on graphs (SLG) algorithm in~\cite{pena2016source}, (iii) the matrix lifting-based approach from~\cite{segarra2017blind}, and (iv) the naive predictor whereby $\hbX=\bbY_\text{obs}$. We find that Algorithn \ref{alg:reweighted_cvx} offers more robust predictions across a broader range of input sparsity levels.}\label{f:figure_epinion}
\end{figure*}

\subsection{Experiments on real social network data}\label{ssec:real_data}

In this section we test the proposed approach on the Epinions dataset~\cite{hamedani2021trustrec}, a who-trusts-whom online social network that includes 132k users, 755k items (articles written by some of the users), 13M user-to-item ratings (1-5 scale, with timestamp), and the signed trust/distrust pairwise relations (717k for trust and 123k for distrust) between users. Leveraging these data, we want to tackle the following network source localization problem: given a connected social graph and timestamped item ratings generated by users in this social network, can we locate the subset of users who generate the earliest ratings, assuming that other users' rating might be impacted by those early ratings? 

More precisely, for a connected social network with $N$ user nodes and a set of $P$ items, we want to identify the $\theta_\text{sr}$ earliest ratings ($\theta_\text{sr}$ is a prescribed proportion of earliest ratings assigned to these items, say $\theta_\text{sr} = 10\%, 20\%, 30\%, 40\%$) from the observed ratings matrix $\bbY_\text{obs}\in\reals^{N\times P}$.  Note that the rating density of the whole dataset is fairly low, i.e., $0.0015\%$~\cite{hamedani2021trustrec}. As a result, for a sparse observation matrix $\bbY_\text{obs}$ the optimal solution of \eqref{eq:opt_prob_convex} would likely yield the trivial result $\hat{\tbg} = \mathbf{1}_N$, since this solution both satisfies the recovery condition \eqref{eq:reco_gaur} and is compatible with the sparsity requirement on the sources $\hat{\bbX} = \bbV\diag(\hat{\tbg})\bbV^\top\bbY_\text{obs} = \bbY_\text{obs}$. To obtain an interesting problem instance that is compatible with our setting, we generate a sub-dataset with higher rating density. To this end, we sample and pre-process the original data (details can be found in \ref{appendix_data}), and obtain a reduced dataset of $N=245$ users and $P = 51$ items with rating density $0.64$. All of the $N$ users were connected via a directed trust network $\ccalG$ with binary adjacency matrix $\bbW$, i.e., the link $W_{ij}$ from user $i$ to user $j$ indicates $i$ was trusted by $j$, and hence user $i$'s opinion would impact that of user $j$. The resulting adjacency matrix $\bbW$ is shown in Figure \ref{f:figure_epinion} (a) and the degree distribution of the symmetrized undirected graph with $\bbA=(\bbW+\bbW^\top)/2$ can be found in Figure \ref{f:figure_epinion} (b). Like in the previous experiments, we adopt $\bbS=\bbD^{-\frac{1}{2}}\bbA\bbD^{-\frac{1}{2}}$ as GSO. Following a data pre-processing step, the resulting centered ratings matrix $\bbY_\text{obs}$ with values in $[-2,2]$ is shown in Figure \ref{f:figure_epinion} (c). Given that the ground-truth sparsity of the source $\theta_\text{sr}$ is unknown, we examine different sparsity level assumptions on the input, namely, $\theta_\text{sr} \in \{0.1, 0.2,0.3,0.4\}$. Specifically, to populate the source signal $\bbX_\text{sr}$ we retain different proportions $\theta_\text{sr}$ of the earliest ratings per item $p=1,\ldots,P$. This way $\theta_\text{sr}$ is closely related to $S$ as defined in Section \ref{ssec:prob_statement}. 

To assess the source recovery performance, we take $\bbS$ and $\bbY_\text{obs}$ as inputs and compare three approaches: two methods for graph-aware blind deconvolution including the proposed estimator \eqref{eq:opt_prob_convex} and the lifting approach in \cite{segarra2017blind}, plus the non-convex $\ell_1$ recovery algorithm for source localization on graphs (SLG)~\cite{pena2016source}. Since the support of sources $\bbX_\text{sr}$ is a strict subset of the  support of $\bbY_\text{obs}$, we consider the area under the curve (AUC) of the predicted sources in $\hbX$ as figure of merit in this numerical test case. 
In addition to the aforementioned three methods, we also consider a naive baseline whereby $\hbX=\bbY_\text{obs}$. The resulting AUCs for different source signal sparsity levels $\theta_{sr}$ are shown in Figure \ref{f:figure_epinion} (d). It can be seen that Algorithm \ref{alg:reweighted_cvx} achieves the highest AUC for the denser settings $\theta_{sr} = 0.3, 0.4$; for $\theta_{sr} = 0.2$, the proposed approach and SLG attain a similar AUC (which is higher than the naive baseline); for $\theta_{sr} = 0.1$, the proposed method performs marginally worse than all of the other three predictors. The matrix lifting algorithm is only competitive when the sources are the sparsest. 

Overall, we find that the proposed estimator \eqref{eq:opt_prob_convex} offers more robust predictions across a broader range of input sparsity levels, a promising finding to support the prospect of solving real world network deconvolution problems. In all fairness though, the performance of none of the methods is stellar. 
But we note these real data are complex and we lack a ground truth for validation, since the chronological order alone does not imply causality. Secondly, even through there are a few earliest ratings that impact other ratings and hence should be reasonably viewed as sources, the observations could still be highly noisy. 


\section{Concluding Summary and Future Work}\label{S:conclusion}

We studied the problem of blind graph filter identification from multiple sparse inputs, which extends blind deconvolution of time (or spatial) domain signals to graphs. By introducing a mild assumption on invertibility of the graph filter, we obtained a computationally simpler convex relaxation for (diffused) source localization in the multi-signal case. In terms of theoretical analyses, we first derived sufficient conditions for exact recovery, which hold with high probability under a Bernoulli-Gaussian model for the sparse inputs. A stable recovery result is then established, ensuring the estimation error on the inverse filter's frequency response is manageable when the observations are corrupted by a small amount of noise.

Ongoing work includes additional analyses on the \emph{robustness} of the proposed approach to imperfections in the observed graph, as well as when measurements are collected only in a fraction of nodes. On the algorithmic side, developing an {\it online} network source localization method capable of processing streaming graph signal observations is also of interest.

\appendix
\section{Proof of Theorem \ref{theorem_1}} \label{appendix_theo_1}


Recall the notation introduced in Section \ref{S_s:unique}. The proof of Theorem \ref{theorem_1} begins by considering an equivalent problem to \eqref{pb_1}, obtained via an invertible change of variable $\bbw = \tbg \circ \tbh_0$, namely 
%
\begin{equation} \label{pb_2}
\hat{\bbw} = \argmin_{\bbw} \| \ccalP(\bbw) \bbX\|_{1,1},\quad \text{s. to }\: \bar{\bbr}^\top \bbw  = c,
\end{equation}
where $\bar{\bbr}^\top = \bbr^\top\text{diag}(\tbg_0)$. Note that the solution candidate $\hat{\bbw} = \mathbf{1}_N$ implies $\hat\tbg = \tbg_0$ in \eqref{pb_1}, so we let $c = \bar{\bbr}^\top \mathbf{1}_N$. Then we have the following proposition, which simply restates Theorem \ref{theorem_1} in terms of the equivalent problem \eqref{pb_2}.
\begin{myproposition}[Exact recovery for the equivalent problem]\label{proposition:1}
\normalfont Consider graph signal observations $\bbY = \ccalP(\tbh_0)\bbX_0\in\reals^{N\times P}$, where $\bbX_0$ adheres to the Bernoulli-Gaussian model with $\theta\in \left(0,0.324\right]$. Recall that under Assumption \ref{assumption_gf}, we can write $\bbX_0 = \ccalP(\tbg_0)\bbY$. Let $P \geq C'\sigma_m^{-2}\log{\frac{4}{\delta}}$, where $\sigma_m = \min(\sigma_1,\sigma_2,\sigma_3,\sigma_4)$ and $\sigma_1\in \left(0,\frac{\sqrt{\pi}\theta^{3/2}}{2}\right]$, $\sigma_2 \in \left(0,\frac{\sqrt{\pi}\theta}{2}\right]$, $\sigma_3>0$, $\sigma_4\in(0,1)$,   $\delta\in(0,1)$ are parameters, while $C'$ is a constant that does not depend on $P$, {$\sigma_m$}, or $\delta$. Then $\hbw = \mathbf{1}_N$ is the unique solution to \eqref{pb_2} with probability at least $1-\delta$, if
\begin{equation} \label{theorem0_1}
\left\|\mathbf{P}^\perp_1\bar{\bbr}\right\|_2  \leq c d_0,
\end{equation}
where $d_0 := \frac{\sqrt{1 - \sigma_{\max}^2(\Tilde{\bbU})}\left[(1-\sigma_1)-2\theta(1+\sigma_2)\right](1-\sigma_4)}{(1+\sigma_3)\sqrt\theta}$. 
\end{myproposition}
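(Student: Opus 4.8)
The plan is to exploit convexity of the objective $f(\bbw):=\|\ccalP(\bbw)\bbX_0\|_{1,1}$ and reduce the global uniqueness claim to a statement about directional derivatives at the candidate solution $\bbw=\mathbf{1}_N$. Since the feasible set $\{\bbw:\bar{\bbr}^\top\bbw=c\}$ is affine and $\mathbf{1}_N$ is feasible (as $c=\bar{\bbr}^\top\mathbf{1}_N$), it suffices to show that the one-sided directional derivative $D_\bbv f(\mathbf{1}_N):=\lim_{t\downarrow 0}t^{-1}[f(\mathbf{1}_N+t\bbv)-f(\mathbf{1}_N)]$ is strictly positive for every nonzero feasible direction $\bbv$, i.e.\ every $\bbv\neq\mathbf{0}_N$ with $\bar{\bbr}^\top\bbv=0$. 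Writing $\bbE:=\ccalP(\bbv)\bbX_0$ and $\ccalS:=\mathrm{supp}(\bbX_0)$, linearity of $\ccalP(\cdot)$ together with $\ccalP(\mathbf{1}_N)\bbX_0=\bbX_0$ gives the exact expression $D_\bbv f(\mathbf{1}_N)=\sum_{(i,p)\in\ccalS}\sign([\bbX_0]_{ip})E_{ip}+\sum_{(i,p)\notin\ccalS}|E_{ip}|$, a signed ``on-support'' term plus a nonnegative ``off-support'' term.

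The core of the argument is a lower bound on $\E{D_\bbv f(\mathbf{1}_N)}$ under the Bernoulli-Gaussian model, followed by a transfer to the random objective. I would first split $\bbv=a\mathbf{1}_N+\bbv^\perp$ with $\bbv^\perp=\bbP_1^\perp\bbv$; feasibility forces $a=-\bar{\bbr}^\top\bbv^\perp/c=-\bbd^\top\bbv^\perp/c$ with $\bbd=\bbP_1^\perp\bar{\bbr}$, which is exactly where the quantity in \eqref{theorem0_1} enters. Because $aX_{ip}=0$ off $\ccalS$, the decomposition is clean: $D_\bbv f(\mathbf{1}_N)=a\|\bbX_0\|_{1,1}+\sum_{\ccalS}\sign([\bbX_0]_{ip})[\ccalP(\bbv^\perp)\bbX_0]_{ip}+\sum_{\ccalS^C}|[\ccalP(\bbv^\perp)\bbX_0]_{ip}|$. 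The scaling part has expectation $a\,NP\sqrt{2\theta/\pi}$ and, in the worst case $\bbv^\perp\parallel\bbd$, is as negative as $-c^{-1}\|\bbd\|_2\|\bbv^\perp\|_2\,NP\sqrt{2\theta/\pi}$. The off-support term I would lower-bound via a moment inequality of the form $\E{|Y|}\ge(\E{Y^2})^{3/2}/(\E{Y^4})^{1/2}$ applied entrywise to the weighted Bernoulli-Gaussian sums $E_{ip}=\sum_k[\ccalP(\bbv^\perp)]_{ik}[\bbX_0]_{kp}$; this produces the $\theta^{3/2}$ and $\theta$ scalings in the admissible ranges of $\sigma_1,\sigma_2$, while the factor $\sqrt{1-\sigma_{\max}^2(\tbU)}$ accounts for excising the diagonal self-contribution $[\ccalP(\bbv^\perp)]_{ii}=[(\bbV\circ\bbV)\bbv^\perp]_i$, which is controlled by $\sigma_{\max}(\tbU)$ since $\tbU=(\bbV\circ\bbV)\bbP_1^\perp$. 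Combining, the expected directional derivative is bounded below by a positive multiple of $\|\bbv^\perp\|_2$ precisely when $\|\bbd\|_2\le cd_0$; I would also bound the magnitude of the on-support cross term and fold it into the $2\theta(1+\sigma_2)$ correction appearing in $d_0$.

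With the expectation controlled, the final step is a uniform concentration argument showing $D_\bbv f(\mathbf{1}_N)$ stays close to $\E{D_\bbv f(\mathbf{1}_N)}$ simultaneously for all $\bbv$ on the compact feasible unit sphere. Since $D_\bbv f(\mathbf{1}_N)$ is a sum over the $P$ independent columns of $\bbX_0$, I would invoke a Bernstein- or bounded-difference-type inequality suited to the sub-exponential tails of Bernoulli-Gaussian variables, apply it over a fine $\epsilon$-net of the sphere, and take a union bound; the deviation parameters $\sigma_3,\sigma_4$ absorb the net and concentration slack, and the requirement $P\ge C'\sigma_m^{-2}\log(4/\delta)$ emerges from forcing the fluctuation below the expectation margin with probability at least $1-\delta$. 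Positivity of the expectation minus the fluctuation across the whole net then yields $D_\bbv f(\mathbf{1}_N)>0$ uniformly, hence $\mathbf{1}_N$ is the unique minimizer of \eqref{pb_2}.

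The hardest part will be the uniform concentration over \emph{all} feasible directions rather than a single fixed $\bbv$: the on-support term $\sum_{\ccalS}\sign([\bbX_0]_{ip})E_{ip}$ couples the random signs and support of $\bbX_0$ with $\bbX_0$ itself through $\ccalP(\bbv)$, so the summands are neither independent of the sign pattern nor bounded, and the non-Gaussian Bernoulli-Gaussian law precludes off-the-shelf Gaussian concentration. Managing these dependencies while keeping the net's union bound from degrading the $\sigma_m^{-2}$ sample complexity is the delicate technical step; the bound $\sigma_{\max}(\tbU)\le1$ is crucial here, as it ensures the off-support term genuinely dominates and that $d_0$ remains strictly positive.
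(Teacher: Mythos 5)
Your proof skeleton coincides with the paper's own argument in \ref{appendix_theo_1}: convexity reduces uniqueness to strict positivity of a subgradient lower bound at $\bbw=\mathbf{1}_N$ over feasible directions, your exact expansion of $D_\bbv f(\mathbf{1}_N)$ is precisely the paper's inequality \eqref{eq_thm0_step1} regrouped (your $a\|\bbX_0\|_{1,1}$ plus the diagonal piece of the cross term equal the paper's $\sum_{i,p}|X_{ip}|\inner{\bbv_i\circ\bbv_i,\bbdelta}$, and your off-support term is the hollow-matrix term), and the condition $\|\bbP_1^\perp\bar\bbr\|_2\leq c d_0$ emerges from the same algebra involving $\tbU$. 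The genuine gap is your final step. The claimed sample complexity $P \geq C'\sigma_m^{-2}\log\frac{4}{\delta}$ is dimension-free, but an $\epsilon$-net of the feasible unit sphere in $\reals^N$ has cardinality exponential in $N$, so the union bound you propose inflates the requirement to $P\gtrsim \sigma_m^{-2}\,(N+\log(1/\delta))$. You flag ``keeping the net's union bound from degrading the $\sigma_m^{-2}$ sample complexity'' as the delicate step, but you supply no mechanism for it, and with a standard net there is none: as written, your plan establishes a strictly weaker statement than Proposition \ref{proposition:1}. The paper never uses a net; it fixes the feasible perturbation $\bbdelta$, splits $\ccalP(\bbdelta)$ into diagonal and hollow parts, and applies the concentration results (Lemma \ref{lemma:5} for the sign-cancelled diagonal term, Proposition \ref{proposition:2} for the hollow rows $\bbm_i^\top\bbX$) to that single direction, with bounds homogeneous in $\|\bbm_i\|_2$ -- which is exactly what yields a $P$ with no $N$ dependence.

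A second, more quantitative mismatch: your route to the off-support expectation via $\E{|Y|}\geq(\E{Y^2})^{3/2}/(\E{Y^4})^{1/2}$ gives a lower bound of order $\|\bbm_i\|_2/\sqrt{3+3(\theta^{-1}-1)\alpha_i^2}$, which carries different (and in the regime of interest worse) constants than the paper's $\beta_0(1-\sigma_i')\|\bbm_i\|_2$, obtained from conditional Gaussianity, $\E{|\bbm_i^\top\bbx_p|\given Z_{ip}}=\beta_0\sqrt{Z_{ip}/\theta}$, in Lemmas \ref{lemma:1}--\ref{lemma:2}; with it you cannot reproduce the specific $d_0$ in \eqref{theorem0_1}. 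Moreover, the admissible ranges $\sigma_1\in(0,\sqrt{\pi}\theta^{3/2}/2]$ and $\sigma_2\in(0,\sqrt{\pi}\theta/2]$ are not fourth-moment artifacts as you assert: in the paper they are the validity thresholds of the exponential-tail and sub-Gaussian estimates (Lemmas \ref{lemma:exp_tail}--\ref{lemma:gaussian_tail}). Finally, ``folding the on-support cross term into the $2\theta(1+\sigma_2)$ correction'' glosses over the one place where hollowness is indispensable: the upper concentration of $\|[\ccalM(\bbdelta)\bbX]_{\ccalS}\|_{1,1}$ around $\theta\beta_i P\|\bbm_i\|_2$ (Proposition \ref{proposition:2}(b)) requires $\Omega_{ip}$ to be independent of $\bbm_i^\top\bbx_p$, which holds only after the diagonal is excised; the diagonal's own on-support contribution $\sum_{(i,p)\in\ccalS}|X_{ip}|[\tbU\bbv^\perp]_i$ has cancelled signs and must be grouped with your scaling term and treated by the bounded-energy argument of Lemma \ref{lemma:5}, not by the hollow-row concentration.
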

%
To establish Proposition \ref{proposition:1} (and thus Theorem \ref{theorem_1}), we will derive a concentration property for hollow matrices with an all-zero diagonal [cf. \eqref{proof_prop1_lowerbound2}] 
that follows from Proposition \ref{proposition:2}. 
\begin{myproposition}\label{proposition:2}
Consider vectors $\bbm_i\in\reals^N,\: i\in\{1,\ldots,N\}$, such that $[\bbm_i]_i = 0$. Suppose $\bbX\in\reals^{N\times P}$ is drawn form the Bernoulli-Gaussian model in in Definition \ref{mydef_BGmodel}, with $\theta\in \left(0,e^{-1}\right]$. 
Given parameters $\delta\in(0,1)$, $\sigma_1\in \left(0,\frac{\sqrt{\pi}\theta^{3/2}}{2}\right]$, and $\sigma_2 \in \left(0,\frac{\sqrt{\pi}\theta}{2}\right]$, let $P \geq \frac{C}{\min(\sigma_1^2,\sigma_2^2)}\log{\frac{4}{\delta}}$ for some constant $C$.
Then, for each $i\in\{1,\ldots,N\}$ we have
%
\begin{equation} \label{prop:hollow_vector}
    \begin{aligned}
    \textbf{(a)}&\:\Pr{\Bigl| \frac{1}{\beta_i P} \|\bbm^\top_i \bbX\|_1 -\|\bbm_i\|_2 \Bigr| \leq \sigma_1\| \bbm_i\|_2 } \geq 1-\delta\\  
    \textbf{(b)}&\:\Pr{\Bigl| \frac{1}{\beta_i \theta P} \|(\bbomega_i^\top)\circ (\bbm^\top_i \bbX)\|_1 - \| \bbm_i\|_2 \Bigr| \leq \sigma_2\| \bbm_i\|_2 } \geq 1-\delta,
    \end{aligned}
\end{equation}
%
where $\beta_i := \E{\|\bbm^\top_i \bbX\|_1}/\|\bbm_i\|_2$ and $\bbomega_i = [\Omega_{i1},\hdots,\Omega_{iP}]^\top$. 
\end{myproposition}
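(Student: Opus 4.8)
The plan is to reduce both \textbf{(a)} and \textbf{(b)} to the concentration of an average of independent, identically distributed scalar random variables, exploiting that the columns $\bbx_1,\ldots,\bbx_P$ of $\bbX$ are mutually independent under the Bernoulli-Gaussian model. Writing $Z_p := \bbm_i^\top \bbx_p$, the row vector $\bbm_i^\top\bbX$ has i.i.d.\ entries $Z_1,\ldots,Z_P$, so that $\|\bbm_i^\top\bbX\|_1 = \sum_{p=1}^P |Z_p|$ and $\|\bbomega_i^\top\circ(\bbm_i^\top\bbX)\|_1 = \sum_{p=1}^P \Omega_{ip}|Z_p|$. Because $[\bbm_i]_i = 0$, the summand $Z_p = \theta^{-1/2}\sum_{j\neq i}[\bbm_i]_j\,\Omega_{jp}\gamma_{jp}$ does not involve row $i$ of $\bbX$; in particular $\Omega_{ip}$ is independent of $Z_p$, which is what makes part \textbf{(b)} a clean i.i.d.\ average with $\E{\Omega_{ip}|Z_p|} = \theta\,\E{|Z_p|}$. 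Both claims then amount to relative-error concentration of a sample mean about its expectation, to within $\sigma_1$ and $\sigma_2$ respectively.

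The two quantitative inputs are the mean and the fluctuation scale of $Z_p$. Conditioning on the support pattern, $Z_p$ given $\{\Omega_{jp}\}_{j}$ is zero-mean Gaussian with variance $s_p^2 = \theta^{-1}\sum_{j\neq i}[\bbm_i]_j^2\,\Omega_{jp}$, so using $\E{|\mathcal{N}(0,s^2)|} = s\sqrt{2/\pi}$ gives $\mu_i := \E{|Z_p|} = \sqrt{2/(\pi\theta)}\,\E{\sqrt{R}}$ with $R := \sum_{j\neq i}[\bbm_i]_j^2\,\Omega_{jp}$. Since $R \leq \|\bbm_i\|_2^2$ deterministically, the elementary bound $\sqrt{R}\geq R/\|\bbm_i\|_2$ yields $\E{\sqrt{R}}\geq \E{R}/\|\bbm_i\|_2 = \theta\|\bbm_i\|_2$ and hence the key lower bound $\mu_i \geq \sqrt{2\theta/\pi}\,\|\bbm_i\|_2$; together with the matching upper estimate from Jensen's inequality, this pins $\mu_i$ to order $\sqrt\theta\,\|\bbm_i\|_2$. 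For the fluctuations, each term of $Z_p$ is sub-Gaussian with parameter $\lesssim |[\bbm_i]_j|/\sqrt\theta$, so independence gives that $Z_p$ (and therefore $|Z_p| - \mu_i$) is sub-Gaussian with proxy of order $\|\bbm_i\|_2^2/\theta$, while its variance is controlled by $\E{Z_p^2} = \|\bbm_i\|_2^2$.

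With these estimates in hand, I would invoke a Bernstein-type inequality for the i.i.d.\ sums $\sum_p(|Z_p|-\mu_i)$ and $\sum_p(\Omega_{ip}|Z_p|-\theta\mu_i)$. Taking the target deviations $t = \sigma_1\mu_i$ in \textbf{(a)} and $t=\sigma_2\theta\mu_i$ in \textbf{(b)} and substituting the lower bound on $\mu_i$, the quadratic (sub-Gaussian) branch of Bernstein produces a tail of the form $2\exp(-c\,P\,\sigma^2\,\theta^{a})$; the prescribed ranges $\sigma_1 \in (0, \tfrac{\sqrt\pi}{2}\theta^{3/2}]$ and $\sigma_2\in(0,\tfrac{\sqrt\pi}{2}\theta]$ ensure these deviations remain in the central (quadratic) regime, so that the linear tail term is inactive. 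Requiring the bound to be at most $\delta$ and folding the $\theta$-dependent prefactor into the (problem-dependent) constant $C$ then gives the stated sample-complexity threshold $P\geq \frac{C}{\min(\sigma_1^2,\sigma_2^2)}\log\frac{4}{\delta}$, uniformly in $i$.

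The main obstacle is the lower bound on $\mu_i$: because the inputs are sparse, the relative fluctuations of $|Z_p|$ are of order $1/\sqrt\theta$ (the standard deviation $\|\bbm_i\|_2$ dwarfs the mean $\sqrt\theta\,\|\bbm_i\|_2$), so the whole argument hinges on a nontrivial anti-concentration estimate $\mu_i \gtrsim \sqrt\theta\,\|\bbm_i\|_2$ that must hold uniformly over all admissible $\bbm_i$ --- including the worst case of a single nonzero coordinate, where the bound is tight. Getting this right, and then tracking the numerical constants through the Bernstein step so that the admissible ranges for $\sigma_1,\sigma_2$ come out in closed form, is the delicate part; the remaining steps (independence across columns, the conditional-Gaussian mean computation, and the sub-Gaussian variance-proxy bound) are routine.
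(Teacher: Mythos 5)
Your proposal is correct, and its skeleton matches the paper's: reduce to i.i.d.\ sums over the columns of $\bbX$, use hollowness of $\bbm_i$ to decouple $\Omega_{ip}$ from $Z_p=\bbm_i^\top\bbx_p$, compute the conditional-Gaussian mean, and then apply a Bernstein-type concentration bound whose quadratic regime is exactly what the restricted ranges of $\sigma_1,\sigma_2$ guarantee. Where you genuinely depart from the paper is in the key mean lower bound. The paper devotes its Lemmata \ref{lemma:1}--\ref{lemma:2} to this, via a two-case analysis on $\alpha_i=\|\bar\bbm_i\|_\infty$ versus $\sqrt\theta$ (a Taylor-type bound $\sqrt{1+t}\geq 1+t/2-t^2$ plus a variance estimate in one case, conditioning on the largest coordinate being active in the other), producing the vector-dependent quantities $\sigma_i'$ of \eqref{eq.lemma3.1}. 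Your one-line pointwise inequality $\sqrt{R}\geq R/\|\bbm_i\|_2$ (valid since $R\leq\|\bbm_i\|_2^2$) gives $\E{\sqrt{R}}\geq\theta\|\bbm_i\|_2$, hence $\mu_i\geq\sqrt{2\theta/\pi}\,\|\bbm_i\|_2$, uniformly over all admissible $\bbm_i$ and tight at the single-spike worst case ($\alpha_i=1$); this is simpler than the paper's argument and in the localized regime actually stronger (the paper's case $\alpha_i\geq\sqrt\theta$ yields only $\beta_i\gtrsim\theta\beta_0$, versus your $\beta_0\sqrt\theta$). What the paper's finer per-vector bound buys is not Proposition \ref{proposition:2} itself but its downstream use: $(1-\sigma_i')\beta_0$ feeds the definition of $\sigma_4=\max_i\sigma_i'$ appearing in $d_0$ in Proposition \ref{proposition:1}, so the paper needs that refinement anyway, while for the present statement the mean bound only enters the constant $C$. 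For the concentration step, the paper makes your "Bernstein" invocation explicit through Lemmata \ref{lemma:exp_tail}--\ref{lemma:gaussian_tail}: MGF bounds valid on a limited range of the exponential parameter, converted via \cite[Proposition 5.2, Lemma 2.3]{matouvsek2008variants} into sub-Gaussian tails up to a deviation threshold — the same two-branch structure you describe, with the $\sigma$ ranges keeping the target deviation in the quadratic branch. One remark: since both the statement and the paper allow $C$ to depend on $\theta$, you could even bypass the two-branch issue entirely by using the crude sub-Gaussian proxy $\|\bbm_i\|_2^2/\theta$ for $Z_p$ at all scales and absorbing the extra powers of $\theta$ into $C$, so the delicacy you flag at the end is less severe than you suggest.
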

%

%
%
The idea behind (a) in \eqref{prop:hollow_vector} comes from \cite[Theorem 5.1]{matouvsek2008variants}, namely that the absolute value of linear combinations of i.i.d. Bernoulli-Gaussian $\{X_{ij}\}$ will concentrate to its expectation. For (b), let $\bbx_p$ be the $p$-th column of $\bbX$ and note that $\Omega_{ip}$ and $|\bbm_i^\top \bbx_p|$ are independent $\forall i,p$, when $\bbm_i$ is a hollow vector. This implies 
\begin{align*}
\E{\|(\bbomega_i^\top)\circ (\bbm^\top_i \bbX)\|_1}= {}&\E{\sum_{p=1}^P \Omega_{ip}|\bbm^\top_i \bbx_p|}\\
= {}&\mathbb{E}_{\bbX}\left[\sum_{p=1}^P \E{\Omega_{ip}}|\bbm^\top_i \bbx_p|\right]\\
= {}& \theta\mathbb{E}_{\bbX}\left[\sum_{p=1}^P|\bbm^\top_i \bbx_p|\right].
\end{align*}

To prove Proposition \ref{proposition:2}, we first establish preliminary Lemmata \ref{lemma:1} and \ref{lemma:2} that are similar to~\cite[Lemma 5.3]{matouvsek2008variants}, and then prove Lemmata \ref{lemma:exp_tail} and \ref{lemma:gaussian_tail} that follow ideas from~\cite[Proposition 5.2]{matouvsek2008variants}. Specifically, in the first step we show that for a hollow matrix $\bbM$ and Bernoulli-Gaussian distributed $\bbX$, the entries of both of $\bbM \bbX$ and $\bbOmega\circ(\bbM \bbX)$ have bounded expectation and variance (cf. Lemma \ref{lemma:1} and Lemma \ref{lemma:2}). Recalling $\ccalS:=\textrm{supp}(\bbX)=\textrm{supp}(\bbOmega)$, let $\bar\bbX :=\bar\bbM \bbX$ and $\bar\bbX^{(\ccalS)}:=\frac{1}{\theta}\bbOmega\circ(\bar\bbM \bbX)$, with $\bar\bbM := [\bbm_1/\|\bbm_1\|_2,\ldots,\bbm_N/\|\bbm_N\|_2]^\top\in\reals^{N\times N}$. Then in the second step, in Lemma \ref{lemma:exp_tail} we show that $|\bar{X}_{ip}|$ (and $|\bar{X}^{(\ccalS)}_{ip}|$) have uniform exponential tails, meaning that for some constant $b>0$ and all $t\geq 0$ we have $\Pr{|\bar{X}_{ip}|\geq t}\leq e^{-bt}$. Finally, Lemma \ref{lemma:gaussian_tail} asserts that the sum of the entries of both $\bar\bbX$ and $\bar\bbX^{(\ccalS)}$ will concentrate to their entries' mean.

\begin{mylemma} \label{lemma:1}
Consider vectors $\bar\bbm_i\in\reals^N\: i\in\{1,\ldots,N\}$, such that $[\bar{\bbm}_i]_i = 0$, $\|\bar\bbm_i\|_\infty = \alpha_i$, and $\|\bar\bbm_i\|_2 = 1$. Let $\bar\bbX^{(\ccalS)} = \frac{1}{\theta}\bbOmega\circ(\bar\bbM \bbX)\in\reals^{N\times P}$, where $\bbX\in\reals^{N\times P}$ is drawn form the Bernoulli-Gaussian model in Definition \ref{mydef_BGmodel}, with $\theta\in \left(0,e^{-1}\right]$, and $\bar\bbM = [\bar\bbm_1,\hdots,\bar\bbm_N]^\top\in\reals^{N\times N}$. Then $\E{|\bar{X}^{(\ccalS)}_{ip}|^2} = \frac{1}{\theta}$, and $\beta_0 (1- \sigma'_i) \leq \E{|\bar{X}^{(\ccalS)}_{ip}|} \leq\beta_0 ,\forall (i,p)$, where $\beta_0:=\E{|\gamma_{ip}|}=\sqrt{2/\pi}$ 
and
\begin{equation} \label{eq.lemma3.1}
    \sigma'_i = \left\{
    \begin{aligned}
        & \frac{\alpha_i^2}{\theta}, & \alpha_i\in (0,\sqrt\theta]\cap [N^{-1/2},1],\\
        & 1 - \sqrt{\theta}\alpha_i \left[1 + \frac{(1-\alpha_i^2)\theta^2}{2\alpha_i^2(1-\theta)}\right], & \alpha_i\in (\sqrt{\theta},1]\cap[N^{-1/2},1].
    \end{aligned}
    \right.
\end{equation}    
\end{mylemma}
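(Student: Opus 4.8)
The plan is to reduce both moment computations to statements about the scalar random variable $\bar\bbm_i^\top\bbx_p$ and then to control its first absolute moment by conditioning on the Bernoulli support pattern. First I would expand the $(i,p)$ entry as $\bar X^{(\ccalS)}_{ip} = \tfrac{1}{\theta}\,\Omega_{ip}\,(\bar\bbm_i^\top\bbx_p)$ and exploit the hollowness assumption $[\bar\bbm_i]_i=0$: since $\bar\bbm_i^\top\bbx_p$ does not involve the $i$-th coordinate of $\bbx_p$, it is independent of $\Omega_{ip}$. Using $\Omega_{ip}^2=\Omega_{ip}$, $\E{\Omega_{ip}}=\theta$, and the fact that the columns of $\bbX$ have i.i.d. zero-mean, unit-variance entries, the second moment is immediate: $\E{|\bar X^{(\ccalS)}_{ip}|^2}=\tfrac1\theta\,\E{(\bar\bbm_i^\top\bbx_p)^2}=\tfrac1\theta\|\bar\bbm_i\|_2^2=\tfrac1\theta$.

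For the first moment, the key observation is that conditioned on $\bbOmega$ the quantity $\bar\bbm_i^\top\bbx_p=\tfrac1{\sqrt\theta}\sum_j [\bar\bbm_i]_j\,\Omega_{jp}\,\gamma_{jp}$ is a centered Gaussian with random variance $Z:=\tfrac1\theta\sum_j [\bar\bbm_i]_j^2\,\Omega_{jp}$. Hence $\E{|\bar\bbm_i^\top\bbx_p|\mid\bbOmega}=\beta_0\sqrt Z$ with $\beta_0=\sqrt{2/\pi}$, and after again using the independence of $\Omega_{ip}$ I obtain the clean reduction $\E{|\bar X^{(\ccalS)}_{ip}|}=\beta_0\,\E{\sqrt Z}$, where $\E Z=\|\bar\bbm_i\|_2^2=1$. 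The upper bound $\E{|\bar X^{(\ccalS)}_{ip}|}\le\beta_0$ then follows at once from Jensen's inequality, $\E{\sqrt Z}\le\sqrt{\E Z}=1$. Everything therefore rests on a matching \emph{lower} bound for $\E{\sqrt Z}$, and the two regimes in \eqref{eq.lemma3.1} reflect two different ways of producing it.

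In the small-weight regime $\alpha_i\le\sqrt\theta$ I would use the identity $\E{\sqrt Z}=1-\tfrac12\E{(\sqrt Z-1)^2}$ (valid because $\E Z=1$) together with the elementary pointwise bound $(\sqrt Z-1)^2=(Z-1)^2/(\sqrt Z+1)^2\le(Z-1)^2$, which gives $\E{\sqrt Z}\ge 1-\tfrac12\var{Z}$. A direct computation using independence of the $\Omega_{jp}$ yields $\var{Z}=\tfrac{1-\theta}{\theta}\sum_j [\bar\bbm_i]_j^4\le\tfrac{1-\theta}{\theta}\,\|\bar\bbm_i\|_\infty^2\|\bar\bbm_i\|_2^2=\tfrac{(1-\theta)\alpha_i^2}{\theta}\le\tfrac{\alpha_i^2}{\theta}$, delivering $\sigma_i'=\alpha_i^2/\theta$. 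This estimate is informative only when $\alpha_i^2/\theta\le1$, i.e. precisely on $\alpha_i\le\sqrt\theta$, which explains the split.

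The main obstacle is the complementary regime $\alpha_i>\sqrt\theta$, where a single coordinate carries a large share of the $\ell_2$ mass and the variance estimate above becomes vacuous (its right-hand side can exceed $1$). Here I would peel off the dominant coordinate $j^\star$ with $[\bar\bbm_i]_{j^\star}^2=\alpha_i^2$, write $Z=\tfrac{\alpha_i^2}{\theta}\Omega_{j^\star p}+Z'$ with $Z'\ge0$ independent of $\Omega_{j^\star p}$ and $\E{Z'}=1-\alpha_i^2$, and condition on $\Omega_{j^\star p}$. On the event $\Omega_{j^\star p}=1$ (probability $\theta$) the argument of the square root is bounded below by $\alpha_i^2/\theta>0$, so a second-order Taylor lower bound for $\sqrt{\alpha_i^2/\theta+Z'}$ about $\alpha_i^2/\theta$ (whose cubic remainder is nonnegative for $Z'\ge0$) retains the leading term $\sqrt\theta\,\alpha_i$ plus a correction proportional to $\E{Z'}=1-\alpha_i^2$ and $\var{Z'}$, producing a bound of the form of the bracketed factor in \eqref{eq.lemma3.1}. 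The delicate points will be bookkeeping the Taylor remainder and the retained $\Omega_{j^\star p}=0$ contribution so that the surviving correction exactly matches the stated expression, and arranging the conditioning so that the small-$Z$ event (all Bernoullis vanishing) is handled without the concavity argument breaking down; it is exactly this failure mode for large $\alpha_i$ that forces the separate treatment.
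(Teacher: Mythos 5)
Your proposal follows the same skeleton as the paper's proof: hollowness makes $\Omega_{ip}$ independent of $\bar\bbm_i^\top\bbx_p$, the conditional law given the Bernoulli pattern is Gaussian, and everything reduces to two-sided bounds on $\E{\sqrt{Z}}$ (your $Z$ is the paper's $Z_{ip}/\theta$); the second-moment computation, the Jensen upper bound, the split at $\alpha_i=\sqrt{\theta}$, the variance argument in the small-$\alpha_i$ regime, and the peeling of the dominant coordinate with conditioning on $\Omega_{j^\star p}=1$ in the large-$\alpha_i$ regime all coincide with the paper. Two technical choices differ. In the small regime, your identity $\E{\sqrt{Z}}=1-\tfrac12\E{(\sqrt{Z}-1)^2}$ together with $(\sqrt{Z}-1)^2\leq(Z-1)^2$ is a clean substitute for the paper's pointwise bound $\sqrt{1+t}\geq 1+t/2-t^2$, and is in fact tighter by a factor of two; either route yields $\sigma_i'=\alpha_i^2/\theta$. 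In the large regime the paper avoids Taylor remainders altogether: writing the conditioned quantity as $\alpha_i\sqrt{1+\tfrac{1-\theta}{\theta}t}$ with $t\in[0,1]$ (this uses $\alpha_i^2\geq\theta$), it invokes $1+\tfrac{1-\theta}{\theta}t\geq 1+(e-1)t\geq e^t$ (valid precisely because $\theta\leq e^{-1}$) and then $e^{t/2}\geq 1+t/2$, so the stated bracketed factor emerges exactly upon taking expectations. Your route instead requires the bookkeeping you flagged: the second-order Taylor lower bound gives
\begin{equation*}
\E{\sqrt{Z}}\ \geq\ \sqrt{\theta}\,\alpha_i+\frac{(1-\alpha_i^2)\,\theta^{3/2}}{2\alpha_i}-\frac{\E{(Z')^2}\,\theta^{5/2}}{8\alpha_i^3},
\end{equation*}
and one must verify that this dominates the stated expression $\sqrt{\theta}\,\alpha_i\bigl[1+\tfrac{(1-\alpha_i^2)\theta^2}{2\alpha_i^2(1-\theta)}\bigr]$. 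It does close: bounding $\E{(Z')^2}\leq\tfrac{(1-\theta)\alpha_i^2(1-\alpha_i^2)}{\theta}+(1-\alpha_i^2)^2$ and using $\alpha_i^2\geq\theta$ twice, the comparison reduces to $2(1-2\theta)\geq(1-\theta)^2$, i.e. $\theta\leq\sqrt{2}-1\approx 0.414$, which covers the assumed range $(0,e^{-1}]$. So your argument is correct and completable; what the paper's exponential trick buys is that the exact formula for $\sigma_i'$ pops out with no remainder to control, whereas your version is self-contained calculus that lands on a different (slightly stronger) bound which must then be compared against the one stated in the lemma.
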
 
\begin{mylemma} \label{lemma:2}
Consider vectors $\bar\bbm_i\in\reals^N\: i\in\{1,\ldots,N\}$, such that $[\bar{\bbm}_i]_i = 0$, $\|\bar\bbm_i\|_\infty = \alpha_i$, and $\|\bar\bbm_i\|_2 = 1$. Let $\bar\bbX = \bar\bbM \bbX\in\reals^{N\times P}$, where $\bbX\in\reals^{N\times P}$ is drawn form the Bernoulli-Gaussian model in Definition \ref{mydef_BGmodel} with $\theta\in (0,e^{-1}]$, and $\bar\bbM = [\bar\bbm_1,\hdots,\bar\bbm_N]^\top\in\reals^{N\times N}$.
Then $\E{|\bar{X}_{ip}|^2} =1$ and $\beta_0 (1- \sigma_i')\leq \E{|\bar{X}_{ip}|} \leq\beta_0,\forall (i,p)$, where $\sigma_i'$ is given by \eqref{eq.lemma3.1}.
\end{mylemma}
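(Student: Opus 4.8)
The plan is to reduce the lemma to a one-dimensional statement about the conditional variance
\[
V_{ip} := \frac{1}{\theta}\sum_{j=1}^{N}\bar{M}_{ij}^{2}\,\Omega_{jp},
\]
where $\bar M_{ij}:=[\bar\bbm_i]_j$ are the entries of $\bar\bbM$, so that $\bar M_{ii}=0$ and $\sum_{j}\bar M_{ij}^{2}=1$. The key observation is that, conditioned on the Bernoulli variables $\{\Omega_{jp}\}_{j=1}^{N}$, the scalar $\bar X_{ip}=\bar\bbm_i^{\top}\bbx_p=\theta^{-1/2}\sum_{j}\bar M_{ij}\Omega_{jp}\gamma_{jp}$ is a zero-mean Gaussian with variance exactly $V_{ip}$. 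Consequently $\E{\,|\bar X_{ip}|\mid\{\Omega_{jp}\}\,}=\beta_0\sqrt{V_{ip}}$ with $\beta_0=\sqrt{2/\pi}$, and taking a further expectation gives $\E{|\bar X_{ip}|}=\beta_0\,\E{\sqrt{V_{ip}}}$. The first claim is then immediate: since the columns of $\bbX$ have zero mean and identity covariance, $\E{|\bar X_{ip}|^{2}}=\sum_{j}\bar M_{ij}^{2}=\|\bar\bbm_i\|_2^{2}=1$. The upper bound $\E{|\bar X_{ip}|}\leq\beta_0$ follows from Jensen's inequality and concavity of the square root, because $\E{V_{ip}}=\theta^{-1}\sum_{j}\bar M_{ij}^{2}\,\E{\Omega_{jp}}=\|\bar\bbm_i\|_2^{2}=1$ and hence $\E{\sqrt{V_{ip}}}\leq\sqrt{\E{V_{ip}}}=1$.

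The crux, and the step I expect to be the \textbf{main obstacle}, is the matching lower bound $\E{\sqrt{V_{ip}}}\geq 1-\sigma_i'$, from which the two regimes of \eqref{eq.lemma3.1} emerge. I would first record the second moment: using $\Omega_{jp}^{2}=\Omega_{jp}$ and independence across $j$, one finds $\E{V_{ip}^{2}}=1+\tfrac{1-\theta}{\theta}\sum_{j}\bar M_{ij}^{4}$, whence $\var{V_{ip}}=\tfrac{1-\theta}{\theta}\sum_{j}\bar M_{ij}^{4}\leq\tfrac{1-\theta}{\theta}\alpha_i^{2}$, the last inequality using $\sum_{j}\bar M_{ij}^{4}\leq\|\bar\bbm_i\|_\infty^{2}\|\bar\bbm_i\|_2^{2}=\alpha_i^{2}$. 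In the small-spike regime $\alpha_i\in(0,\sqrt\theta]$ the variance is at most $1-\theta<1$, so I would apply the global quadratic lower bound $\sqrt{t}\geq\tfrac12(3t-t^{2})$, valid for all $t\geq 0$, and take expectations to get $\E{\sqrt{V_{ip}}}\geq\tfrac12\big(3-\E{V_{ip}^{2}}\big)=1-\tfrac12\var{V_{ip}}\geq 1-\tfrac{(1-\theta)\alpha_i^{2}}{2\theta}$, which is dominated by the stated deficit $\alpha_i^{2}/\theta$.

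For the large-spike regime $\alpha_i\in(\sqrt\theta,1]$ this variance-based estimate degrades (it can fall below zero), so I would instead condition on a single coordinate $j^{\star}$ attaining $|\bar M_{ij^{\star}}|=\alpha_i$. Writing $V_{ip}=\tfrac{\alpha_i^{2}}{\theta}\Omega_{j^{\star}p}+W$ with $W:=\theta^{-1}\sum_{j\neq j^{\star}}\bar M_{ij}^{2}\Omega_{jp}\geq 0$ and $\E{W}=1-\alpha_i^{2}$, keeping only the event $\Omega_{j^{\star}p}=1$ already yields the leading term $\E{\sqrt{V_{ip}}}\geq\theta\,\E{\sqrt{\alpha_i^{2}/\theta+W}}\geq\sqrt\theta\,\alpha_i$. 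The bracketed refinement in \eqref{eq.lemma3.1} is then recovered by retaining the first-order correction via $\sqrt{c+W}\geq\sqrt{c}+\tfrac{W}{2\sqrt c}-\tfrac{W^{2}}{2c^{3/2}}$ (with $c=\alpha_i^{2}/\theta$) and substituting the explicit moments of $W$; the $1/(1-\theta)$ factor traces back to $\E{W^{2}}$. Pinning down these exact constants, rather than merely the order, is the laborious part, but it is routine once the conditioning is in place. Since the whole argument is per-entry and uses only $\|\bar\bbm_i\|_2=1$, $\|\bar\bbm_i\|_\infty=\alpha_i$ and $[\bar\bbm_i]_i=0$, the bounds hold uniformly over all $(i,p)$, which completes the proof; the companion Lemma~\ref{lemma:1} follows along the same lines after averaging over the additional $\bbOmega$-masking.
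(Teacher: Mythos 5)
Your reduction is the same as the paper's: condition on the Bernoulli pattern so that $\bar{X}_{ip}$ given $\{\Omega_{jp}\}_j$ is $\textrm{Normal}(0,V_{ip})$ (the paper works with $Z_{ip}=\theta V_{ip}$), deduce $\E{|\bar{X}_{ip}|}=\beta_0\E{\sqrt{V_{ip}}}$, get the upper bound from Jensen, and handle the small-spike regime with a global quadratic lower bound on the square root; your constant $1-\frac{(1-\theta)\alpha_i^2}{2\theta}$ there is even slightly sharper than the paper's and dominates the stated deficit $\alpha_i^2/\theta$, and the second-moment identity is correct. The genuine gap is in the large-spike regime — exactly the step you flag as the crux. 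The expansion $\sqrt{c+W}\geq\sqrt{c}+\frac{W}{2\sqrt{c}}-\frac{W^2}{2c^{3/2}}$ is valid, but it cannot reproduce \eqref{eq.lemma3.1}: its loss term involves $\E{W^2}\geq(\E{W})^2=(1-\alpha_i^2)^2$, which is of order one, whereas the correction you must retain is of order $\theta^2$. Concretely, with $c=\alpha_i^2/\theta$, $\E{W}=1-\alpha_i^2$, $\E{W^2}=\frac{1-\theta}{\theta}\sum_{j\neq j^\star}\bar{M}_{ij}^4+(1-\alpha_i^2)^2$, and the worst-case value $\sum_{j\neq j^\star}\bar{M}_{ij}^4=\alpha_i^2(1-\alpha_i^2)$ (attainable, e.g., when the remaining mass sits on entries of magnitude $\alpha_i$, and the lemma must cover every admissible $\bar\bbm_i$), your bound becomes
\begin{equation*}
\theta\,\E{\sqrt{c+W}}\;\geq\;\sqrt{\theta}\,\alpha_i+\frac{\theta^{5/2}(1-\alpha_i^2)}{2\alpha_i}\cdot\frac{2\alpha_i^2-1}{\alpha_i^2},
\end{equation*}
whose correction is negative whenever $\alpha_i<1/\sqrt{2}$ (leaving you below even the crude bound $\sqrt{\theta}\alpha_i$, itself short of the claim), and is strictly smaller than the required $\frac{\theta^{5/2}(1-\alpha_i^2)}{2\alpha_i(1-\theta)}$ for every $\alpha_i\in(\sqrt\theta,1)$: matching it would force $\frac{2\alpha_i^2-1}{\alpha_i^2}\geq\frac{1}{1-\theta}$, i.e., $\alpha_i^2\geq\frac{1-\theta}{1-2\theta}>1$.

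The fix — and what the paper actually does — is to avoid second moments of $W$ altogether. Conditioned on $\Omega_{j^\star p}=1$, write $V_{ip}=c(1+W/c)$ and set $s:=\frac{\theta}{1-\theta}\cdot\frac{W}{c}$; then $s\in[0,1]$ because $W\leq(1-\alpha_i^2)/\theta$ and $\alpha_i^2\geq\theta$. Since $\theta\leq e^{-1}$ implies $\frac{1-\theta}{\theta}\geq e-1$, on $s\in[0,1]$ one has $1+\frac{1-\theta}{\theta}s\geq e^{s}$, hence $\sqrt{1+W/c}\geq e^{s/2}\geq 1+s/2$, a bound \emph{linear} in $W$. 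Taking expectations then requires only $\E{W}=1-\alpha_i^2$ and yields exactly $\E{\sqrt{V_{ip}}}\geq\sqrt{\theta}\,\alpha_i\bigl[1+\frac{(1-\alpha_i^2)\theta^2}{2\alpha_i^2(1-\theta)}\bigr]$. Note that this is the only place where the hypothesis $\theta\leq e^{-1}$ enters, and the $1/(1-\theta)$ factor comes from the rescaling that defines $s$ — not from $\E{W^2}$, contrary to your side remark. With this replacement of your Taylor step, the rest of your argument (the per-entry uniformity, and Lemma \ref{lemma:1} via the independent masking) goes through.
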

   
\noindent\textbf{Proof of Lemma \ref{lemma:1}}. Notice that for any $(i,p)$, $\bar{M}_{ii} = 0$, so $\bar{X}^{(\ccalS)}_{ip} = \frac{1}{\theta}\Omega_{ip}\sum_{j = 1}^N\bar{M}_{ij}X_{jp} = \frac{1}{\theta}\Omega_{ip}\sum_{j\not = i}\bar{M}_{ij}\Omega_{jp}\gamma_{jp}\theta^{-1/2}$. By conditioning on $Z_{ip}=\sum_{j\not = i}\Omega_{jp}\bar{M}_{ij}^2=z$, then $\bar{X}_{ip} = \theta^{-1/2}\sum_{j\not = i}\bar{M}_{ij}\Omega_{jp}\gamma_{jp}$ is distributed as $\textrm{Normal}(0,\frac{z}{\theta})$. Also notice that $\theta \bar{X}^{(\ccalS)}_{ip} = \Omega_{ip}\bar{X}_{ip}$, where $\Omega_{ip}$ and $\bar{X}_{ip}$ are independent. Then $\E{|\theta \bar{X}^{(\ccalS)}_{ip}|\:\given Z_{ip}=z} = \theta\beta_0\sqrt{\frac{z}{\theta}}$, and thus $\E{|\bar{X}^{(\ccalS)}_{ip}|}= \theta^{-1/2}\beta_0\E{\sqrt{Z_{ip}}}$. Following ideas in the proof of \cite[Lemma 5.3]{matouvsek2008variants}, we can derive upper and lower bounds for  $\E{\sqrt{Z_{ip}}}$. 

For the upper bound, we have $\E{\sqrt{Z_{ip}}}\leq \sqrt{\E{Z_{ip}}} \leq \sqrt{\theta}$. For the lower bound, we consider two cases, $\alpha_i<\sqrt{\theta}$ and $\alpha_i\geq\sqrt{\theta}$. When $\alpha_i<\sqrt{\theta}$, we use the strategy in \cite[Lemma 5.3]{matouvsek2008variants}. Let $t= \frac{Z_{ip}}{\theta}-1\geq -1$, and recall that $\sqrt{1+t}\geq 1 + \frac{t}{2} - t^2$, hence 
\begin{align*}
\E{\sqrt{Z_{ip}}} = {} &\sqrt{\theta}\E{\sqrt{1+t}}\\
\geq {} & \sqrt{\theta}\left(1+\E{\frac{Z_{ip}}{\theta}-1} - \E{\left(\frac{Z_{ip}}{\theta}-1\right)^2}\right)\\ 
={} &\sqrt{\theta}\left(1-\theta^{-2}\var{Z_{ip}}\right).    
\end{align*}
Now $\var{Z_{ip}}=\sum_{j\not = i}\bar{M}_{ij}^4\var{\Omega_{jp}} \leq\sum_{j\not = i} \bar{M}_{ij}^4 \theta \leq \theta\alpha_i^2\sum_{j\not = i}\bar{M}_{ij}^2 = \theta\alpha_i^2$ (as $\bar\bbm_i$ has unit $\ell_2$ norm). Hence, $\E{\sqrt{Z_{ip}}}\geq \sqrt{\theta}(1-\frac{\alpha_i^2}{\theta}) \geq \sqrt{\theta}(1-\sigma'_i)$, where $\sigma'_i = \frac{\alpha_i^2}{\theta}$.  As a result, $\beta_0 (1 - \sigma'_i) \leq \E{|\bar{X}^{(\ccalS)}_{ip}|} \leq\beta_0,\forall (i,p)$, with $\sigma'_i = \alpha_i^2/\theta<1$.

When $\alpha_i\geq\sqrt{\theta}$, we let $k$ bet the index associated with the maximum entry of $\bar\bbm_i$ (i.e., $\alpha_i = \|\bar\bbm_i\|_\infty = |\bar{M}_{ik}|$). Then, we have
\begin{equation*}
    \begin{aligned}
        \E{\sqrt{Z_{ip}}} & = \E{\sqrt{\Omega_{kp}\bar{M}_{ik}^2 + \sum_{j\not = i,k}\Omega_{jp}\bar{M}_{ij}^2}}\\
        &\geq \E{\sqrt{\Omega_{kp}\bar{M}_{ik}^2 + \sum_{j\not = i,k}\Omega_{jp}\bar{M}_{ij}^2}\given\Omega_{kp} = 1}\Pr{\Omega_{kp} = 1}\\
         & \stackrel{(a)}{=} \theta \alpha_i \E{\sqrt{1 + \frac{1-\theta}{\theta}t}}\geq \theta \alpha_i \E{\sqrt{e^{t}}}\geq \theta \alpha_i \E{1+\frac{t}{2}}\\
        & = \theta \alpha_i\E{1+\frac{\theta}{2(1-\theta)}  \sum_{j\not = i, k}\frac{\Omega_{jp}\bar{M}_{ij}^2}{\alpha_i^2}}\\
        & \stackrel{(b)}{=}\sqrt{\theta}(1 - \bar\sigma_i)
    \end{aligned}
\end{equation*}
In (a) we let $t = \frac{\theta}{1-\theta}  \sum_{j\not = i, k}\frac{\Omega_{jp}\bar{M}_{ij}^2}{\alpha_i^2} \leq 1$, as $\alpha_i^2\geq\theta$. And in (b), we let $\bar\sigma_i = 1 - \sqrt{\theta}\alpha_i \left[1 + \frac{(1-\alpha_i^2)\theta^2}{2\alpha_i^2(1-\theta)}\right]$.
It can be checked that $f(\alpha_i,\theta) := 1 - \bar\sigma_i = \sqrt{\theta}\alpha_i \left[1 + \frac{(1-\alpha_i^2)\theta^2}{2\alpha_i^2(1-\theta)}\right]\in [\theta+\frac{\theta^2}{2},\sqrt{\theta}]$, for $\alpha_i \geq \sqrt{\theta}$ and $\theta\leq e^{-1}$. As a result, for any $\alpha_i\in[\sqrt{\theta},1]$, we get $\bar\sigma_i = 1-f
(\alpha_i,\theta)\in[1-\sqrt{\theta},1-\theta-\frac{\theta^2}{2}]$. So $\bar\sigma_i\in(0,1)$ is a feasible lower bound for $\alpha_i\geq\sqrt{\theta}$ with $\theta\leq e^{-1}$.   

Note that the feasible range of $\alpha_i$ is $[N^{-1/2},1]$, and we do not know whether $\sqrt{\theta}\leq e^{-1/2}\approx 0.6$ would be within it. So we apply the following strategy to select the lower bound $\sigma_i'$, i.e., 
\begin{equation} \label{eq.lemma3_sigma}
    \sigma'_i = \left\{
    \begin{aligned}
        & \frac{\alpha_i^2}{\theta}, & \alpha_i\in (0,\sqrt\theta]\cap [N^{-1/2},1]\\
        & 1 - f(\alpha_i,\theta), & \alpha_i\in (\sqrt{\theta},1]\cap[N^{-1/2},1].
    \end{aligned}
    \right.
\end{equation}
%

Besides, we have $\E{|\theta \bar{X}^{(\ccalS)}_{ip}|^2} = \E{|\Omega_{ip} \bar{X}_{ip}|^2} = \E{\Omega_{ip}}\E{|\bar{X}_{ip}|^2} = \theta$, so $\E{|\bar{X}^{(\ccalS)}_{ip}|^2} = \frac{1}{\theta}$ as desired. $\hfill\blacksquare$


\noindent\textbf{Proof of Lemma \ref{lemma:2}} Because $\bar{M}_{ii} = 0$, $\bar{X}_{ip} = \sum_{j = 1}^N\bar{M}_{ij}X_{jp} = \sum_{j\not = i}\bar{M}_{ij}\Omega_{jp}\gamma_{jp}\theta^{-1/2}$, for all $(i,p)$. As $\bar\bbm_i$ has unit $\ell_2$ norm, upon conditioning on $Z_{ip}=\sum_{j\not = i}\Omega_{jp}\bar{M}_{ij}^2=z$, $\bar{X}_{ip}$ is distributed as $\textrm{Normal}(0,\frac{z}{\theta})$. Then $\E{| \bar{X}_{ip}|\given Z_{ip}=z} = \beta_0\sqrt{\frac{z}{\theta}}$, and $\E{|\bar{X}_{ip}|}= \theta^{-1/2}\beta_0\E{\sqrt{Z_{ip}}}$. As $Z_{ip}$ here is defined exactly as in the proof of Lemma \ref{lemma:1}, we also have $\sqrt{\theta}(1-\sigma'_i)\leq\E{Z_{ip}}\leq\sqrt{\theta}$, with $\sigma'_i$ defined as in \eqref{eq.lemma3_sigma}.  Notice that we have the same bounds for $\E{|\bar{X}_{ip}|}$ and $\E{|\bar{X}^{(\ccalS)}_{ip}|}$. However, for $\E{|\bar{X}_{ip}|^2}$, the result will be different, as $\E{|\bar{X}_{ip}|^2} = 1$. $\hfill\blacksquare$

From Lemma \ref{lemma:1} and \ref{lemma:2}, we know for an arbitrary hollow-matrix $\bbM$ and Bernoulli-Gaussian distributed $\bbX$ with $\theta\in (0,e^{-1}]$, the product $\bar{\bbX} = \bar\bbM\bbX$ and its masked version $\bar\bbX^{(\ccalS)}$ have bounded entries. Next, Lemma \ref{lemma:exp_tail} establishes their entries have uniform exponential tails.
\begin{mylemma} \label{lemma:exp_tail}
    The elements of both $\bar\bbX$ and $\bar\bbX^{(\ccalS)}$ have uniform exponential tails. Specifically, $\Pr{|\bar{X}_{ip}|>t}\leq e^{-ut+O(u^2)},$ $\forall{u}\in[0,\sqrt{2}\theta/\alpha_i]$, $t\geq 0$. Likewise, $\Pr{|\bar{X}^{(\ccalS)}_{ip}|>t}\leq e^{-ut+O(u^2)},$ $\forall{u}\in[0,\sqrt{2\theta^3}/\alpha_i],$ $t\geq 0$.
\end{mylemma}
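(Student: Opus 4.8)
The plan is to bound the moment generating function (MGF) of $\bar{X}_{ip}$ (resp. $\bar{X}^{(\ccalS)}_{ip}$) and then invoke a Chernoff bound: the claimed exponential tail is precisely the assertion that $\E{e^{u\bar{X}_{ip}}}\leq e^{O(u^2)}$ on the prescribed range of $u$, after which $\Pr{\bar X_{ip}>t}\leq e^{-ut}\E{e^{u\bar X_{ip}}}$ and the (conditional-Gaussian) symmetry of $\bar X_{ip}$ dispatch both tails. The starting point is the observation made in the proof of Lemma \ref{lemma:2}: conditioned on the support pattern through $Z_{ip}=\sum_{j\neq i}\Omega_{jp}\bar M_{ij}^2$, the variable $\bar X_{ip}$ is zero-mean Gaussian with variance $Z_{ip}/\theta$, so its conditional MGF is explicit. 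The whole argument then reduces to averaging this conditional MGF over the independent Bernoulli masks $\{\Omega_{jp}\}_{j\neq i}$.

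First I would treat $\bar X_{ip}$. Conditioning as above gives $\E{e^{u\bar X_{ip}}\given\{\Omega_{jp}\}}=e^{u^2 Z_{ip}/(2\theta)}$, so $\E{e^{u\bar X_{ip}}}=\E{e^{sZ_{ip}}}$ with $s:=u^2/(2\theta)$. Since $Z_{ip}$ is a weighted sum of independent $\mathrm{Bernoulli}(\theta)$ variables, the MGF factorizes and
\[
\E{e^{sZ_{ip}}}=\prod_{j\neq i}\bigl(1+\theta(e^{s\bar M_{ij}^2}-1)\bigr)\leq\exp\Bigl(\theta\sum_{j\neq i}(e^{s\bar M_{ij}^2}-1)\Bigr),
\]
using $1+x\leq e^x$. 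The range $u\in[0,\sqrt2\,\theta/\alpha_i]$ is chosen precisely so that $s\bar M_{ij}^2\leq s\alpha_i^2\leq\theta\leq e^{-1}<1$; on this range one may use $e^x-1\leq x+x^2$, and combining with $\|\bar\bbm_i\|_2=1$ (whence $\sum_{j\neq i}\bar M_{ij}^2=1$) and $\sum_{j\neq i}\bar M_{ij}^4\leq\alpha_i^2$ yields $\theta\sum_{j\neq i}(e^{s\bar M_{ij}^2}-1)\leq\theta s+\theta s^2\alpha_i^2=O(u^2)$. This is the desired MGF bound and hence the uniform exponential tail for $\bar X_{ip}$.

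Next I would treat the masked variable $\bar X^{(\ccalS)}_{ip}=\theta^{-1}\Omega_{ip}\bar X_{ip}$, where $\Omega_{ip}$ is independent of $\bar X_{ip}$. Conditioning on $\Omega_{ip}$ splits the MGF as $\E{e^{u\bar X^{(\ccalS)}_{ip}}}=(1-\theta)+\theta\,\E{e^{(u/\theta)\bar X_{ip}}}$, so the inner expectation is exactly the previous one with $u$ rescaled to $u/\theta$, i.e.\ governed by $s':=u^2/(2\theta^3)$. The tighter range $u\in[0,\sqrt{2\theta^3}/\alpha_i]$ is what keeps $s'\bar M_{ij}^2\leq s'\alpha_i^2\leq1$, so the same Taylor step gives $\E{e^{s'Z_{ip}}}\leq e^{u^2/\theta^2}$. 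Finally $\E{e^{u\bar X^{(\ccalS)}_{ip}}}=1+\theta(e^{u^2/\theta^2}-1)\leq e^{u^2/\theta^2}$, the last inequality being $1+\theta(e^{w}-1)\leq e^{w}$ for $\theta\in(0,1]$; this is again $e^{O(u^2)}$, and Chernoff completes the claim.

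The main obstacle is the bookkeeping of the two $u$-ranges: they must be calibrated so that every per-coordinate argument $s\bar M_{ij}^2$ (resp.\ $s'\bar M_{ij}^2$) stays below $1$ — which is what licenses the replacement $e^x-1\approx x$ — while simultaneously guaranteeing that the aggregate exponent is genuinely $O(u^2)$ with a constant uniform in $(i,p)$. The dependence on $\alpha_i=\|\bar\bbm_i\|_\infty$ is absorbed entirely into the endpoints $\sqrt2\,\theta/\alpha_i$ and $\sqrt{2\theta^3}/\alpha_i$, and the extra factor $\theta^{-1}$ carried by $\bar X^{(\ccalS)}_{ip}$ is exactly what shrinks its admissible range by $\theta$ and forces the separate treatment. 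The quantitative heart of the estimate is the quadratic control of the Bernoulli sum $Z_{ip}$ through $\sum_{j\neq i}\bar M_{ij}^2=1$ and $\sum_{j\neq i}\bar M_{ij}^4\leq\alpha_i^2$.
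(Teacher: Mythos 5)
Your proof is correct, and although it follows the same high-level strategy as the paper (bound the MGF, apply Chernoff, use symmetry for the two-sided tail), the way you control the moment generating functions is a genuinely different decomposition --- and in the masked case it is actually sounder than the paper's own argument. The paper integrates coordinate-wise and writes, for the masked variable, $\E{e^{u \bar{X}^{(\ccalS)}_{ip}}} = \prod_{j \neq i} \E{e^{\frac{u\Omega_{ip}\Omega_{jp}}{\theta^{3/2}}\bar{M}_{ij}\gamma_{jp}}}$; this factorization is not a valid identity, because every factor contains the \emph{same} mask $\Omega_{ip}$, and a short two-coordinate computation shows the true expectation exceeds the factorized product, so upper-bounding the product does not bound the MGF. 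Your decomposition avoids this entirely: you first condition on all masks so that $\bar{X}_{ip}$ is $\textrm{Normal}(0,Z_{ip}/\theta)$, reducing everything to the MGF of the weighted Bernoulli sum $Z_{ip}$ (controlled through $\sum_{j\neq i}\bar{M}_{ij}^2=1$ and $\sum_{j\neq i}\bar{M}_{ij}^4\leq\alpha_i^2$), and you then treat $\bar{X}^{(\ccalS)}_{ip}$ by conditioning on the independent outer mask via $\E{e^{u\bar{X}^{(\ccalS)}_{ip}}}=(1-\theta)+\theta\,\E{e^{(u/\theta)\bar{X}_{ip}}}$, which is exactly the correct way to handle the shared $\Omega_{ip}$. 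The only cost is a slightly larger constant in the exponent for the masked case ($u^2/\theta^2$ rather than the paper's $u^2/\theta$), which is immaterial since both are $O(u^2)$ uniformly over $(i,p)$ on the stated $u$-ranges. One small blemish you share with the paper: symmetry really gives $\Pr{|\bar{X}_{ip}|>t}\leq 2e^{-ut+O(u^2)}$, and the factor $2$ is silently absorbed; this is harmless under the tail definitions of the Matou\v{s}ek reference, but worth a word.
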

\noindent\textbf{Proof of Lemma \ref{lemma:exp_tail}}. To show that $\bar{X}^{(\ccalS)}_{ip}$ has a uniform exponential tail for all $(i,p)$, we consider $0\leq u\leq u_i = \sqrt{2\theta^3}/\alpha_i$,
\begin{align} 
        \E{e^{u \bar{X}^{(\ccalS)}_{ip}}} & = \E{e^{\frac{u\Omega_{ip}}{\theta}\sum_{j\not = i}\bar{M}_{ij}X_{jp}}}\nonumber\\
        & = \prod_{j \not = i} \E{e^{\frac{u\Omega_{ip}\Omega_{jp}}{\theta^{3/2}}\bar{M}_{ij}\gamma_{jp}}}\nonumber\\
        & = \prod_{j \not = i}\mathbb{E}_{\gamma_{jp}}\left[\theta[\theta e^{\frac{u}{\theta^{3/2}}\bar{M}_{ij}\gamma_{jp}}+(1-\theta)]+(1-\theta)\right]\nonumber\\
        &\stackrel{(a)}{\leq} \prod_{j \not = i} \left[\theta\left[\theta(1+u^2\bar{M}_{ij}^2/\theta^3)+(1-\theta)\right]+(1-\theta)\right]\nonumber\\
        & = \prod_{j \not = i} (1 + \theta^{-1} u^2\bar{M}_{ij}^2)\leq \prod_{j \not = i}e^{\frac{1}{\theta}u^2\bar{M}_{ij}^2} = e^{\frac{u^2}{\theta}}.\label{eq:proof_lemma_tail_01}
\end{align}
Notice that in (a), we have applied: (i) $\E{e^{uX}} = e^{\frac{u^2}{2}}, \forall u\geq 0$ when $X\sim \textrm{Normal}
(0,1)$, so $\E{e^{u\theta^{-3/2}\bar{M}_{ij}\gamma_{jp}}} = e^{u\theta^{-3/2}\bar{M}_{ij}^2/2} = e^{\frac{u^2}{2\theta^3}\bar{M}_{ij}^2}$ because $\gamma_{jp}\sim \textrm{Normal}
(0,1)$; and then (ii) $e^\kappa \leq 1+ 2\kappa,\forall{\kappa\in[0,1]}$ as $\kappa = \frac{u^2}{2\theta^3}\bar{M}_{ij}^2\leq \frac{u^2}{2\theta^3}\alpha_i^2 \leq 1$. All in all, $\forall{u\in(0,u_i]}$ and $\forall{t \geq 0}$ we have
\begin{equation*}
     \Pr{\bar{X}^{(\ccalS)}_{ip}\geq t} = \Pr{e^{u\bar{X}^{(\ccalS)}_{ip}}\geq e^{ut}}\leq e^{-ut}\E{e^{u\bar{X}^{(\ccalS)}_{ip}}}\leq e^{-ut+\frac{u^2}{\theta}}.
\end{equation*}
By symmetry, it follows $\Pr{\bar{X}^{(\ccalS)}_{ip}\leq - t}\leq e^{-ut+\frac{u^2}{\theta}}$. So $\bar{X}^{(\ccalS)}_{ip},\forall{(i,p)}$ has a uniform exponential tail, i.e., $\Pr{|\bar{X}^{(\ccalS)}_{ip}|>t}\leq e^{-ut+O(u^2)},$ $\forall{u}\in[0,\sqrt{2\theta^3}/\alpha_i],$ and all $t\geq 0$. 

To show that $\bar{X}_{ip}$ also has a uniform exponential tail for all $(i,p)$, we follow a similar approach and consider $0\leq u\leq u_i = \sqrt{2}\theta/\alpha_i$ to obtain
\begin{align*}
        \E{e^{u\bar{X}_{ip}}} & = \E{e^{u\sum_{j\not = i}\bar{M}_{ij}X_{jp}}}\\
        & = \prod_{j \not = i} \E{e^{\frac{u\Omega_{jp}}{\theta^{1/2}}\bar{M}_{ij}\gamma_{jp}}}\\
        & = \prod_{j \not = i}\mathbb{E}_{\gamma_{jp}}\left[\theta e^{\frac{u}{\theta^{1/2}}\bar{M}_{ij}\gamma_{jp}}+(1-\theta)\right]\\
        &\stackrel{(b)}{\leq} \prod_{j \not = i} \left[\theta(1+u^2\bar{M}_{ij}^2/\theta)+(1-\theta)\right]\\
        & = \prod_{j \not = i} (1 +  u^2\bar{M}_{ij}^2)\leq \prod_{j \not = i}e^{u^2\bar{M}_{ij}^2} = e^{u^2}
\end{align*}
Notice that in (b), we applied the same inequalities as (a) in \eqref{eq:proof_lemma_tail_01}. Then $\forall{u\in(0,u_i]}$ and $\forall{t \geq 0}$,
\begin{equation*}
     \Pr{\bar{X}_{ip}\geq t} = \Pr{e^{u\bar{X}_{ip}}\geq e^{ut}}\leq e^{-ut}\E{e^{u\bar{X}_{ip}}}\leq e^{-ut+u^2}
\end{equation*}
By symmetry, $\Pr{\bar{X}_{ip}\leq - t}\leq e^{-ut+u^2}$ as well. So $\bar{X}_{ip},\forall{(i,p)}$ has a uniform exponential tail, i.e., $\Pr{|\bar{X}_{ip}|>t}\leq e^{-ut+O(u^2)},$  $\forall{u}\in[0,\sqrt{2}\theta/\alpha_i],$ and all $t\geq 0$. $\hfill\blacksquare$

The final ingredient needed to prove Proposition \ref{proposition:2} is Lemma \ref{lemma:gaussian_tail}, which shows that the $\ell_1$ norms of the $i$-th rows of $\bar\bbX$ and $\bar\bbX^{(\ccalS)}$ concentrate to $\beta_i:=\E{|\bar{X}_{ip}|}=\E{|\bar{X}^{(\ccalS)}_{ip}|}$. 
\begin{mylemma} \label{lemma:gaussian_tail}
    Let $\check{x}^{(\ccalS)}_i := \frac{1}{\sqrt{P}}\left(\sum_{p=1}^P |\bar{X}^{(\ccalS)}_{ip}| - P\beta_i\right)$ and $\check{x}_i := \frac{1}{\sqrt{P}}\left(\sum_{p=1}^P |\bar{X}_{ip}| - P\beta_i\right)$. Then both $\check{x}^{(\ccalS)}_i$ and $\check{x}_i$ have sub-Gaussian tails (as defined in \cite[Definition 2.1]{matouvsek2008variants}), up to $\frac{\theta^{3/2}}{\sqrt{2}\alpha_i}\sqrt{P}$ and $\frac{\theta }{\sqrt{2}\alpha_i}\sqrt{P}$, respectively.\\
\end{mylemma}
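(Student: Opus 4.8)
The plan is to view each row sum $\sum_{p=1}^P |\bar{X}_{ip}|$ (and likewise $\sum_{p=1}^P |\bar{X}^{(\ccalS)}_{ip}|$) as a sum of $P$ \emph{independent} summands and to invoke the argument behind~\cite[Proposition 5.2]{matouvsek2008variants}: a sum of independent, mean-centered random variables each carrying a uniform exponential tail concentrates with a Gaussian tail up to a threshold dictated by the range of that exponential tail. Independence across $p$ is immediate, since the columns $\bbx_p$ of the Bernoulli-Gaussian matrix $\bbX$ are drawn independently, so the entries $\{\bar{X}_{ip}\}_{p=1}^P$ in a fixed row $i$ of $\bar\bbX=\bar\bbM\bbX$ are i.i.d.\ (and similarly for $\bar\bbX^{(\ccalS)}$). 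By Lemma~\ref{lemma:2}, each such summand has the common mean $\beta_i=\E{|\bar{X}_{ip}|}$, so $\sqrt{P}\,\check{x}_i=\sum_{p=1}^P(|\bar{X}_{ip}|-\beta_i)$ is a centered sum.

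First I would convert the per-entry exponential tail of Lemma~\ref{lemma:exp_tail} into a centered moment-generating-function (MGF) bound. Because $\bar{X}_{ip}$ is, conditionally on $Z_{ip}$, a symmetric zero-mean Gaussian variable, its absolute value obeys $\E{e^{u|\bar{X}_{ip}|}}\leq 2\,\E{e^{u\bar{X}_{ip}}}\leq 2e^{u^2}$ for $u\in[0,u_i]$ with $u_i=\sqrt{2}\theta/\alpha_i$, directly reusing the MGF estimate already obtained in the proof of Lemma~\ref{lemma:exp_tail}. Subtracting the mean $\beta_i$ (which is $O(1)$ by Lemma~\ref{lemma:2}) yields a bound of the form $\E{e^{u(|\bar{X}_{ip}|-\beta_i)}}\leq e^{C u^2}$ valid on the same range $u\in[0,u_i]$, for an absolute constant $C$.

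Next I would tensorize over the $P$ independent summands, so that $\E{e^{u\sqrt{P}\,\check{x}_i}}=\prod_{p=1}^P\E{e^{u(|\bar{X}_{ip}|-\beta_i)}}\leq e^{CPu^2}$ on $u\in[0,u_i]$, and apply a Chernoff bound: $\Pr{\check{x}_i\geq\lambda}\leq\exp(-u\lambda\sqrt{P}+CPu^2)$. The unconstrained minimizer $u^\star=\lambda/(2C\sqrt{P})$ stays inside the admissible range $[0,u_i]$ precisely when $\lambda$ does not exceed a multiple of $u_i\sqrt{P}$; in that regime one obtains the sub-Gaussian bound $\Pr{\check{x}_i\geq\lambda}\leq e^{-\lambda^2/(4C)}$, and the lower tail follows by symmetry. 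Tracking the constants shows the threshold is exactly $\tfrac{u_i}{2}\sqrt{P}=\tfrac{\theta}{\sqrt{2}\alpha_i}\sqrt{P}$. The argument for $\bar{X}^{(\ccalS)}_{ip}$ is verbatim the same, now with the exponential-tail range $u_i=\sqrt{2\theta^3}/\alpha_i$ from Lemma~\ref{lemma:exp_tail}, giving the threshold $\tfrac{\theta^{3/2}}{\sqrt{2}\alpha_i}\sqrt{P}$.

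The main obstacle I anticipate is the bookkeeping at the two junctions where constants and ranges must line up: (i) passing from the one-sided MGF bound on $\bar{X}_{ip}$ to a clean two-sided, mean-centered MGF bound on $|\bar{X}_{ip}|$ without shrinking the admissible $u$-interval, so that the stray factor of $2$ and the $-u\beta_i$ term are absorbed into the Gaussian exponent; and (ii) verifying that the Chernoff optimizer indeed remains within the exponential-tail range, since it is exactly this constraint that pins the sub-Gaussian behaviour to the claimed cutoff levels $\tfrac{\theta}{\sqrt{2}\alpha_i}\sqrt{P}$ and $\tfrac{\theta^{3/2}}{\sqrt{2}\alpha_i}\sqrt{P}$ rather than extending to all $\lambda$.
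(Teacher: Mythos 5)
Your overall architecture coincides with the paper's: the paper also regards $\sqrt{P}\,\check{x}_i$ as a sum of $P$ i.i.d.\ centered summands and, after collecting $\E{\bar{X}_{ip}}=0$, $\var{\bar{X}_{ip}}=1$ (resp.\ $\var{\bar{X}^{(\ccalS)}_{ip}}=1/\theta$ for the masked version), $\E{|\bar{X}_{ip}|}=\beta_i$, and the exponential tails of Lemma~\ref{lemma:exp_tail}, it simply invokes \cite[Proposition 5.2, Lemma 2.3]{matouvsek2008variants} to conclude sub-Gaussianity up to the stated thresholds. Your tensorization-plus-Chernoff step, with the requirement that the optimizer $u^\star=\lambda/(2C\sqrt{P})$ remain inside the admissible $u$-range (which is exactly what pins the cutoff at a constant multiple of $u_i\sqrt{P}$), is how that machinery works internally, so no disagreement there.

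The genuine gap is your step (i), and it is not mere bookkeeping: the centered per-summand MGF bound cannot be obtained the way you propose. From $\E{e^{u|\bar{X}_{ip}|}}\leq 2e^{u^2}$ you deduce $\E{e^{u(|\bar{X}_{ip}|-\beta_i)}}\leq 2e^{u^2-u\beta_i}$ and then claim this is $\leq e^{Cu^2}$ on all of $[0,u_i]$. That fails near $u=0$: as $u\to 0^+$ the left-hand side tends to $2$ while $e^{Cu^2}\to 1$, so no constant $C$ works on any interval containing the origin. The stray factor $2$ (indeed any constant exceeding $1$) is absorbable only once $u\beta_i\geq\ln 2$, i.e.\ for $u$ bounded away from zero; since $\beta_i\leq\sqrt{2/\pi}$ and $u_i=\sqrt{2}\theta/\alpha_i$ can be below $\ln 2/\beta_i$, this crude route may fail on the entire admissible range. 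The correct use of centering is through the series expansion: with $W=|\bar{X}_{ip}|-\beta_i$, write $\E{e^{uW}}=1+u\E{W}+\E{e^{uW}-1-uW}=1+\E{e^{uW}-1-uW}$ and bound the remainder by $O(u^2)$ using the bounded second moment together with the exponential tail (e.g.\ via $e^{x}-1-x\leq x^2e^{|x|}$), then glue this small-$u$ estimate to the crude bound on the rest of the range. This Bernstein-type argument is precisely the content of the cited \cite[Proposition 5.2]{matouvsek2008variants}, so you should either invoke it as a black box (as the paper does) or carry out the expansion; the shortcut of multiplying the uncentered MGF bound by $e^{-u\beta_i}$ does not close. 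A secondary point: for the masked variables the reused estimate from Lemma~\ref{lemma:exp_tail} is $\E{e^{u\bar{X}^{(\ccalS)}_{ip}}}\leq e^{u^2/\theta}$, so your constant $C$ there scales like $1/\theta$ (mirroring $\var{\bar{X}^{(\ccalS)}_{ip}}=1/\theta$ in the paper); this changes the sub-Gaussian constant but not the threshold, and should be tracked explicitly.
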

\noindent\textbf{Proof of Lemma \ref{lemma:gaussian_tail}}. Firstly, note that $\E{\bar{X}_{ip}}=0$ and from Lemma \ref{lemma:2} it follows that $\var{\bar{X}_{ip}}=1,$ $\E{|\bar{X}_{ip}|} = \beta_i$. {Moreover, according to Lemma \ref{lemma:exp_tail}}, $\bar{X}_{ip}$ has an exponential tail, i.e., $\Pr{|\bar{X}_{ip}|>t}\leq e^{-ut+O(u^2)}, \forall{u}\in[0,\sqrt{2}\theta/\alpha_i],$ and all $t\geq 0$. From \cite[Proposition 5.2, Lemma 2.3]{matouvsek2008variants}, it can be shown that $\forall i,$ $\check{x}_i$ has a sub-Gaussian tail up to $\frac{\theta }{\sqrt{2}\alpha_i}\sqrt{P}$, i.e., $\E{e^{u\check{x}_i}}<e^{O(u^2)}$ and $ \E{e^{-u\check{x}_i}}<e^{O(u^2)},\forall u \in\left(0,\frac{\theta }{\sqrt{2}\alpha_i}\sqrt{P}\right)$. 

For $\check{x}^{(\ccalS)}_i$, we also have  $\E{\bar{X}^{(\ccalS)}_{ip}}=0$ and $\E{|\bar{X}^{(\ccalS)}_{ip}|} = \beta_i$. From Lemma \ref{lemma:1} we know the only difference with $\bar{X}_{ip}$ is $\var{\bar{X}^{(\ccalS)}_{ip}}=\frac{1}{\theta}$. So we can still apply \cite[Proposition 5.2, Lemma 2.3]{matouvsek2008variants} and show $\forall i,$ $\check{x}^{(\ccalS)}_i$ has a sub-Gaussian tail up to $\frac{\theta }{\sqrt{2}\alpha_i^{3/2}}\sqrt{P}$, i.e., $\E{e^{u\check{x}^{(\ccalS)}_i}}<e^{O(u^2)}$ and $ \E{e^{-u\check{x}^{(\ccalS)}_i}}<e^{O(u^2)},\forall u \in\left(0,\frac{\theta^{3/2} }{\sqrt{2}\alpha_i}\sqrt{P}\right)$. $\hfill\blacksquare$\\

\noindent\textbf{Proof of Proposition \ref{proposition:2}}. To show (b), from Lemma \ref{lemma:gaussian_tail} we know $\check{x}^{(\ccalS)}_i = \frac{1}{\sqrt{P}} \left(\sum_{p=1}^P|\bar{X}^{(\ccalS)}_{ip} | - \beta_i P\right) =  \frac{1}{\sqrt{P}}\left(\frac{1}{\theta}\|[\bar\bbm_i^\top \bbX]_{\ccalS}\|_1 - P\beta_i\right)$ has a sub-Gaussian tail up to $\frac{\theta^{3/2}}{\sqrt{2}\alpha_i}\sqrt{P}$. As a result, for some $\sigma_2\in(0,1)$, we have 
\begin{align*}
\Pr{\frac{1}{\theta \beta_i P}\|[\bar\bbm_i^\top \bbX]_\ccalS\|_1\geq 1 + \sigma_2} 
= {}& \Pr{\check{x}^{(\ccalS)}_i \geq \beta_i\sqrt{P}\sigma_2}\\
\leq {}& e^{-C'\beta_i^2P\sigma_2^2},
\end{align*}
for some constant $C'>0$. We require $\beta_i\sqrt{P}\sigma_2 \leq\frac{\theta^{3/2}}{\sqrt{2}\alpha_i}\sqrt{P} \Rightarrow \sigma_2\leq \min\left\{ \frac{\theta^{3/2}}{\sqrt{2}\alpha_i\beta_i},1\right\}$. Because $\theta\in(0,e^{-1}]$ and $\frac{1}{\sqrt{2}\alpha_i\beta_i}\geq \frac{1}{\sqrt{2}\beta_0} = \frac{\sqrt{\pi}}{2}$, we have $\frac{\sqrt{\pi}\theta^{3/2}}{2}\leq \min\left\{ \frac{\theta^{3/2}}{\sqrt{2}\alpha_i\beta_i},1\right\}$. So it suffices to select $\sigma_2\in\left(0, \frac{\sqrt{\pi}\theta^{3/2}}{2}\right]$. Now for $e^{-C'\beta_i^2P\sigma_2^2}\leq \frac{\delta}{2}$, we have $P\geq \frac{C}{\sigma_2^2}\log{\frac{2}{\delta}}$, where $C\geq \frac{1}{C' \beta_i^2}$ is some constant. Putting all pieces together we have $\Pr{\frac{1}{\theta \beta_i P}\|[\bar\bbm_i^\top \bbX]_\ccalS\|_1\leq 1 - \sigma_2}\leq \frac{\delta}{2}$ and Proposition \ref{proposition:2}(b) follows.

To show (a), notice that from Lemma \ref{lemma:gaussian_tail} we know $\check{x}_i   =  \frac{1}{\sqrt{P}} (\sum_{p=1}^P|\bar{X}_{ip}| - \beta_i P) = \frac{1}{\sqrt{P}}(\|\bar\bbm_i^\top \bbX\|_1 - P\beta_i)$ has a sub-Gaussian tail up to $\frac{\theta }{\sqrt{2}\alpha_i}\sqrt{P}$. Following similar steps as above for (b), one can readily arrive at Proposition \ref{proposition:2}(a).$\hfill\blacksquare$ 

Having established Proposition \ref{proposition:2}, we have almost all ingredients needed to prove Proposition \ref{proposition:1}. Before doing so,
we introduce a final lemma to show that Bernoulli-Gaussian random variables have bounded-energy. 
\begin{mylemma}[Bounded energy]
\normalfont
\label{lemma:5}
Let $\{X_{ip}\}$ be i.i.d. random variables drawn from the Bernoulli-Gaussian model in in Definition \ref{mydef_BGmodel}, with $\theta\in \left(0,e^{-1}\right]$. For any vector $\bba\in\reals^{N}$ and some $\sigma_3>0$, if $P\geq\frac{\breve{C}}{ \beta_0^2\sigma_3^2}\log{\frac{2}{\delta}}$ where $\breve{C}$ is some constant, we have
\begin{equation} \label{eq:lemma_5}
    \Big|\sum_{i,p} a_i|X_{ip}|\Big| \leq (1+\sigma_3) \left|\E{\sum_{i,p} a_i|X_{ip}|}\right|
\end{equation}
holds with probability at least $1-\delta$.
\end{mylemma}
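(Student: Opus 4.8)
The plan is to exploit the column-wise independence of $\bbX$ to recast the doubly-indexed sum as a sum of $P$ i.i.d.\ terms, and then run a sub-Gaussian concentration argument of the same flavour as Lemmata \ref{lemma:exp_tail}--\ref{lemma:gaussian_tail}. Writing $S := \sum_{i,p} a_i |X_{ip}| = \sum_{p=1}^P W_p$ with $W_p := \sum_{i=1}^N a_i |X_{ip}|$, the $W_p$ are i.i.d.\ across $p$ because the columns $\bbx_p$ are independent. Since $\Omega_{ip}$ and $\gamma_{ip}$ are independent, $\E{|X_{ip}|} = \E{\Omega_{ip}}\E{|\gamma_{ip}|}/\sqrt\theta = \sqrt\theta\,\beta_0$, so $\E{S} = P\sqrt\theta\,\beta_0 \sum_i a_i$. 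The target bound $|S| \leq (1+\sigma_3)|\E{S}|$ then follows from the triangle inequality once we establish the centered deviation obeys $|S - \E{S}| \leq \sigma_3 |\E{S}|$ with probability at least $1-\delta$; this is the statement I would actually prove (it is valid for every $\sigma_3>0$, since $|1-\sigma_3|\leq 1+\sigma_3$).

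First I would control the tail of a single centered summand. Each $|X_{ip}| = \Omega_{ip}|\gamma_{ip}|/\sqrt\theta$ is a half-normal masked by an independent Bernoulli, so mimicking the moment-generating-function computation in the proof of Lemma \ref{lemma:exp_tail} (conditioning on $\Omega_{ip}$ and using $\E{e^{s|\gamma_{ip}|}}\leq 2e^{s^2/2}$) yields a uniform exponential tail for $a_i(|X_{ip}| - \sqrt\theta\,\beta_0)$. Because the terms indexed by $i$ are independent within a column, $W_p - \E{W_p}$ inherits a sub-Gaussian tail up to a threshold, with variance proxy governed by $\sum_i a_i^2\,\var{|X_{ip}|} = \|\bba\|_2^2\,(1-\theta\beta_0^2)$. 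Summing the $P$ independent centered copies and invoking \cite[Proposition 5.2, Lemma 2.3]{matouvsek2008variants}, exactly as in the proof of Lemma \ref{lemma:gaussian_tail}, gives a sub-Gaussian tail for $\tfrac{1}{\sqrt P}(S-\E{S})$, hence a bound of the form $\Pr{|S - \E{S}| \geq t} \leq 2\exp(-c\,t^2/P)$ valid for $t$ inside the sub-Gaussian range.

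To finish, I would set $t = \sigma_3|\E{S}| = \sigma_3 P\sqrt\theta\,\beta_0\,|\sum_i a_i|$ and require the resulting exponent to dominate $\log(2/\delta)$; solving for $P$ produces a sample-complexity threshold of the advertised form $P \geq \breve{C}\,\beta_0^{-2}\sigma_3^{-2}\log(2/\delta)$, after which the triangle-inequality step delivers \eqref{eq:lemma_5}. Since $\beta_0=\sqrt{2/\pi}$ is an absolute constant, its explicit appearance is cosmetic and could be folded into $\breve{C}$. The main obstacle is the \emph{spiky} nature of the Bernoulli-Gaussian summands: they are zero with probability $1-\theta$ but occasionally $O(1/\sqrt\theta)$ in magnitude, so a crude Hoeffding bound resting on boundedness is unavailable and one must instead track the MGF up to a threshold to certify the sub-Gaussian regime is wide enough to accommodate the chosen deviation $t$. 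The delicate bookkeeping is verifying that $t=\sigma_3|\E{S}|$ indeed lies within that regime, and that the $\theta$-dependent variance factor and the scale-invariant geometry relating $\sum_i a_i$ to $\|\bba\|_2$ can be absorbed into $\breve{C}$ without spoiling the clean $P$-scaling.
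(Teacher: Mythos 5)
Your proposal is correct and runs on the same engine as the paper's proof---the sub-Gaussian machinery of \cite{matouvsek2008variants}---but with the decomposition transposed. The paper aggregates over $p$ first: for each row $i$ it forms $\breve{X}_i:=\frac{1}{\sqrt{P}}\sum_{p=1}^P(|X_{ip}|-\sqrt{\theta}\beta_0)$, argues each $\breve{X}_i$ has a sub-Gaussian tail (the same argument as Lemma \ref{lemma:gaussian_tail}), and then controls the two weighted combinations $\pm\sum_i\bar{a}_i\breve{X}_i$, with $\bar{a}_i=a_i/\|\bba\|_2^2$, across the independent rows via \cite[Lemma 2.2]{matouvsek2008variants}. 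You aggregate over $i$ first, writing $S=\sum_p W_p$ with $W_p=\sum_i a_i|X_{ip}|$ i.i.d.\ across columns, pushing the weights into the inner MGF bound, and then applying the i.i.d.\ summation step of \cite[Proposition 5.2, Lemma 2.3]{matouvsek2008variants} at the outer level. Both orderings yield the same tail bound and the same $P\gtrsim \sigma_3^{-2}\log(2/\delta)$ scaling; the paper's ordering buys verbatim reuse of the row-sum concentration it has already established, while yours makes the i.i.d.\ structure across $p$ and the variance proxy $\|\bba\|_2^2(1-\theta\beta_0^2)$ explicit, which is arguably cleaner for verifying that the deviation $t=\sigma_3|\E{S}|$ lies inside the sub-Gaussian regime.

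One point you flag as ``delicate bookkeeping'' deserves a sharper statement, though it is a shared blemish rather than a gap relative to the paper. Your exponent is of order $\sigma_3^2 P\,\theta\beta_0^2\big(\sum_i a_i\big)^2/\|\bba\|_2^2$, so the constant $\breve{C}$ you end with necessarily depends on the ratio $\|\bba\|_2^2/\big(\sum_i a_i\big)^2$ and \emph{cannot} be absorbed uniformly over all $\bba$: if $\sum_i a_i=0$ with $\bba\neq\mathbf{0}_N$, the right-hand side of \eqref{eq:lemma_5} vanishes while the left-hand side does not, so no sample size works. The paper's proof carries the same hidden dependence---it equates $\sqrt{P}t/\E{\sum_{i,p}\bar{a}_i|X_{ip}|}$ with $t/(\sqrt{\theta P}\beta_0)$, which silently assumes the normalization $\sum_i a_i=\|\bba\|_2^2$. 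So your route matches the paper's in both strategy and limitation; just replace the claim that the geometry ``can be absorbed into $\breve{C}$'' with the honest statement that $\breve{C}$ is $\bba$-dependent (or that the lemma is applied only to vectors satisfying the implicit normalization).
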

\noindent\textbf{Proof of Lemma \ref{lemma:5}}. Note that for $ \theta\in\left(0,e^{-1}\right]$, we have $\E{|X_{ip}|}=\sqrt{\theta}\beta_0,$ $\E{|X_{ip}|^2} = 1,$ and $\var{|X_{ip}|} = 1 - \theta\beta_0^2 >0$, for $i\in\{1,\ldots,N\},p\in\{1,\ldots,P\}$. Besides, $|X_{ip}|$ has an upper sub-Gaussian tail since it has the folded Normal distribution and, 
\begin{equation*}
        \Pr{|X_{ip}|\geq t}\leq e^{-\frac{1}{2}t^2},
\end{equation*}
Similar to \cite[Lemma 2.4]{matouvsek2008variants}, one can show $\E{e^{u|X_{ip}|}}\leq e^{\sqrt\theta \beta_0+Cu^2}$, for some constant $C$ and all $u>0$. Now, let $\breve{X}_i:=\frac{1}{\sqrt{P}}\sum_{p=1}^P(|X_{ip}| -\sqrt{\theta}\beta_0)$, so $\E{\breve{X}_i}=0$, $\var{\breve{X}_i} = \E{\breve{X}_i^2} =1$. Following similar ideas as in the proof of Lemma \ref{lemma:gaussian_tail}, one has
\begin{equation*}
        \E{e^{u\breve{X}_i}}  = \prod_{p=1}^P \E{e^{u/\sqrt{P}(|X_{ip}|-\sqrt{\theta}\beta_0)}} \leq (e^{C\frac{u^2}{P}})^P = e^{O(u^2)}.
\end{equation*}
An analogous calculation also yields $\E{e^{-u\breve{X}_i}}\leq e^{O(u^2)}$. 

Let $\bar{a}_i=a_i/\|\bba\|^2_2,$ for $i\in\{1,\ldots,N\}$. From \cite[Lemma 2.2]{matouvsek2008variants}, both $\breve{Y}_1: = \sum_i \bar{a}_i \breve{X}_i$ and $\breve{Y}_2: = -\sum_i \bar{a}_i \breve{X}_i$ have sub-Gaussian tails, i.e., $\Pr{\breve{Y}_1 \geq t}\leq e^{-\breve{C}_1t^2}$ and $\Pr{\breve{Y}_2 \geq t}\leq e^{-\breve{C}_2t^2}$, for some constants $\breve{C}_1,\:\breve{C}_2$. Hence w.p. at least $1 - e^{-\breve{C}_1t^2}$,  we have
\begin{align*}
        \sum_{i=1}^N\frac{\bar{a}_i}{\sqrt{P}}\sum_{p=1}^P(|X_{ip}| -\sqrt{\theta}\beta_0)& < t\\
        \Rightarrow  \sum_{i,p} a_i(|X_{ip}| - \sqrt{\theta}\beta_0)& < \|\bba\|_2^2 \sqrt{P}t\\
        \Rightarrow \sum_{i,p} \left(a_i|X_{ip}| - \E{a_i|X_{ip}|}\right) & < \sigma_3 \E{\sum_{i,p} a_i|X_{ip}|},
\end{align*}
where $\sigma_3: = \frac{\sqrt{P}t}{\E{\sum_{i,p} \bar{a}_i|X_{ip}|}} = \frac{t}{\sqrt{\theta P}\beta_0 }$. If we let $\Pr{\breve{Y}_1 \geq t}\leq e^{-\breve{C}_1t^2}\leq\frac{\delta}{2}$, then we will have $P\geq\frac{\breve{C}_1}{\theta \beta_0^2\sigma_3^2}\log{\frac{2}{\delta}}$. One can also derive the other bound from $\Pr{\breve{Y}_2 \leq -t}\leq e^{-\breve{C}_2t^2}$, i.e., $-\sum_{i,p} a_i|X_{ip}| < - (1 - \sigma_3) \E{\sum_{i,p} a_i|X_{ip}|}$ w.p. at least $1 - e^{-\breve{C}_2t^2}$. To complete the proof, set $\breve{C} = \max(\breve{C}_1, \breve{C}_2)/\theta$.\hfill$\blacksquare$

\noindent\textbf{Proof of Proposition \ref{proposition:1}}. To prove $\hat{\bbw}=\mathbf{1}_N$ is the unique solution to problem \eqref{pb_2}, it is equivalent to say for all feasible perturbations $\bbdelta\in\reals^N$ such that $\bar{\bbr}^\top\bbdelta = 0$, then $\|\ccalP(\mathbf{1}_N + \bbdelta)\bbX\|_{1,1} \geq \|\bbX\|_{1,1}$ holds for any Bernoulli-Gaussian random $\bbX\in\reals^{N\times P}$ with high probability. Here we define the hollow matrix $\ccalM(\bbdelta)$ as $\ccalP(\bbdelta)$ with 0 diagonal elements (i.e., $\ccalM(\bbdelta)_{ij} = \ccalP(\bbdelta)_{ij}, \forall{i\neq j}$; $\ccalM(\bbdelta)_{ii} = 0$, $\forall{i}$). We can compute the sub-gradient of $\|\ccalP(\mathbf{1}_N + \bbdelta)\bbX\|_{1,1} $ at $\bbdelta = \mathbf{0}_N$ as, 
\begin{equation} \label{eq_thm0_step1}
    \begin{aligned}
        \|\ccalP(\mathbf{1}_N + \bbdelta)\bbX\|_{1,1}  \geq \|\bbX\|_{1,1} & + \|[\ccalM(\bbdelta)\bbX]_{\ccalS^c}\|_{1,1}+ \sum_{p=1}^P\bbepsilon_p^\top \ccalP(\bbdelta)\bbx_p\\
        \geq  \|\bbX\|_{1,1} + \|\ccalM(\bbdelta)\bbX\|_{1,1} & - 2\|[\ccalM(\bbdelta)\bbX]_{\ccalS}\|_{1,1}+ \sum_{i,p} |X_{ip}|\inner{\bbv_i\circ\bbv_i,\bbdelta}. 
\end{aligned}
\end{equation}
Applying the Lemma \ref{lemma:5}, we can bound the above last term in \eqref{eq_thm0_step1} as
\begin{align}
        \sum_{i,p}|X_{ip}|\inner{\bbv_i\circ\bbv_i,\bbdelta}  \geq {} & - (1 + \sigma_3)\Bigl|\E{\sum_{i,p}\frac{\Omega_{ip}|\gamma_{ip}|}{\sqrt{\theta}}\inner{\bbv_i\circ\bbv_i,\bbdelta}}\Bigr|\nonumber\\
         ={} & -(1 + \sigma_3)\sqrt{\theta}P\beta_0|\mathbf{1}_N^\top \bbdelta|, \label{proof_prop1_lowerbound1}
\end{align}
where we used $\beta_0:= \E{|\gamma_{ip}|} = \sqrt{\frac{2}{\pi}}$. 

Next,  we are going to bound $\|\ccalM(\bbdelta)\bbX\|_{1,1}  - 2\|[\ccalM(\bbdelta)\bbX]_{\ccalS}\|_{1,1}$ by applying Proposition \ref{proposition:2}. Let $\bbm_i^\top$ be the $i$-th row of the hollow matrix $\ccalM(\bbdelta)$, i.e., $\ccalM(\bbdelta)= [\bbm_1,\ldots,\bbm_N]^\top\in\reals^{N\times N}$. Likewise, let $\bbomega_i^\top$ be the $i$-th row of $\bbOmega$. Then from (a) and (b) in \eqref{prop:hollow_vector}, we have $\|\bbm^\top_i \bbX\|_1\geq (1-\sigma_1)\beta_i P\|\bbm_i\|_2$ and $-\|\bbomega_i^\top\circ (\bbm_i^\top\bbX)\|_1\geq -(1+\sigma_2)\beta_i \theta P\|\bbm_i\|_2$. So we obtain
\begin{align*}
        \|\bbm^\top_i \bbX\|_1 - & 2\|\bbomega_i^\top\circ (\bbm_i^\top\bbX)\|_1  \geq [(1-\sigma_1)-2\theta(1+\sigma_2)]\beta_i P\|\bbm_i\|_2\nonumber\\
         & \geq [(1-\sigma_1)-2\theta(1+\sigma_2)](1-\sigma_4)\beta_0 P\|\bbm_i\|_2,
\end{align*}
where the last inequality holds because  Lemma \ref{lemma:1} asserts that $\beta_i\in[(1-\sigma'_i)\beta_0,\beta_0]$, for some $\sigma'_i\in(0,1)$. Specifically, given a hollow-vector $\bbm_i$ with $\alpha_i = \|\bbm_i\|_\infty/\|\bbm_i\|_2$, the $\sigma'_i$ can be computed via \eqref{eq.lemma3.1}.  Hence, we can let $\sigma_4 = \max\{\sigma'_i\}_{i=1}^N$ and by vertically stacking the row vectors $\bbm_i^\top,\: i\in\{1,\ldots,N\}$, the  bound 
\begin{align}
    &\|\ccalM(\bbdelta)\bbX\|_{1,1} - 2\|[\ccalM(\bbdelta)\bbX]_{\ccalS}\|_{1,1} \nonumber \\
    & \geq [(1-\sigma_1)-2\theta(1+\sigma_2)](1-\sigma_4)\beta_0 P\|\ccalM(\bbdelta)\|_{2,1} \label{proof_prop1_lowerbound2}
\end{align}
holds with probability at least $1-\delta$. Summing \eqref{proof_prop1_lowerbound1} and \eqref{proof_prop1_lowerbound2}, we have
\begin{align}
        & \|\ccalM(\bbdelta)\bbX\|_{1,1} - 2\|[\ccalM(\bbdelta)\bbX]_{\ccalS}\|_{1,1} + \sum_{ip} |X_{ip}|\inner{\bbv_i\circ\bbv_i,\bbdelta}\label{eq_thm0_step2}\\
        & \geq [(1-\sigma_1) - 2\theta(1+\sigma_2)](1-\sigma_4)\Bigl(\|\ccalM(\bbdelta)\|_{2,1} -C_1|\mathbf{1}_N^\top \bbdelta| \Bigr),       \nonumber 
\end{align}
where we defined $C_1 = \frac{(1+\sigma_3)\sqrt{\theta}}{[(1-\sigma_1) - 2\theta(1+\sigma_2)](1 - \sigma_4)}$. When $\theta\leq
\frac{1-\sigma_1}{2(1+\sigma_2)}$, we have $C_1 >0$ and the lower bound will be non-negative if $\|\ccalM(\bbdelta)\|_{2,1} \geq C_1 |\mathbf{1}_N^\top \bbdelta|$. To show such a $\theta$ is feasible, recall that in Proposition \ref{proposition:2} we require that $\theta\in \left(0,e^{-1}\right]$, $\sigma_1\in \left(0,\frac{\sqrt{\pi}\theta^{3/2}}{2}\right]$, and $\sigma_2 \in \left(0,\frac{\sqrt{\pi}\theta}{2}\right]$. Hence we need,
%
%
%
\begin{equation*} 
        \theta  \leq \frac{1-(\sqrt{\pi}/2)\theta^{3/2}}{2(1+\sqrt{\pi}\theta/2)}\:
        \Rightarrow f(\theta):= \sqrt{\pi}\theta^2 + 2\theta +\frac{\sqrt{\pi}}{2}\theta^{3/2} - 1\leq 0.
\end{equation*}
As $f(\theta)$ is monotonically increasing in $\theta\in\left(0,e^{-1}\right]$ and $f(0) = -1< 0$, $f(e^{-1}) > 0$, there exists only one solution $\theta_m\in(0.324,0.325)$ such that $f(\theta_m) = 0$. So the feasible range is  $\theta\in (0,\theta_m]$, or for convenience we let $\theta\in(0,0.324]$.

Going back to establishing the lower bound in \eqref{eq_thm0_step2} is non-negative, note that $\|\ccalM(\bbdelta)\|_{2,1}^2 \geq \|\ccalM(\bbdelta)\|_F^2$ and so it will be sufficient to show $\|\ccalM(\bbdelta)\|_F^2\geq C_1^2 |\mathbf{1}_N^\top \bbdelta|^2$. For convenience, let's decompose $\bar{\bbr} = \frac{c}{N}\mathbf{1}_N + \bbd$ and $\bbdelta = a\mathbf{1}_N + \bbb$, where $\mathbf{1}_N^\top \bbb = \mathbf{1}_N^\top \bbd = 0$. From the constraint $\bar{\bbr}^\top \bbw =\bar{\bbr}^\top (\mathbf{1}_N + \bbdelta) = c $, we know $\bar{\bbr}^\top\bbdelta = 0$ and $a = -\frac{\bbd^\top\bbb}{c}$. So $\bbdelta = -\frac{\bbd^\top\bbb}{c} \mathbf{1}_N+\bbd$. Then we have
\begin{align} 
        \|\ccalM(\bbdelta)\|_F^2 -  C_1^2 |\mathbf{1}_N^\top \bbdelta|^2  = {}& \|\ccalM(\bbb)\|_F^2 -  C_1^2 \left|\frac{\bbd^\top \bbb}{c}\right|^2\nonumber\\
         = {}& \|\bbb\|^2\left(1 - \left\|\begin{bmatrix}
           \Tilde{\bbU} \nonumber\\
           (C_1/c)\bbd^\top
         \end{bmatrix}\bar{\bbb}\right\|_2^2\right)\\
         (\star)\:\:\:  \geq {}& \|\bbb\|^2\left[1 - \left(\sigma_{\text{max}}^2(\Tilde{\bbU}) + \frac{C_1^2}{c^2}\|\bbd\|_2^2\right)\right],\label{eq_thm0_step3}
\end{align}
where we let 
$\tbU := (\bbV \circ \bbV)\bbP_1^\perp\in\reals^{N\times N}$ and
$\bar{\bbb}= \bbb/\|\bbb\|_2$. The inequality in $(\star)$ considers the `worst-case' $\bbd$ should be colinear to the dominant right singular vector of $\Tilde{\bbU}$, i.e., given the SVD $\Tilde{\bbU} = {\hbU}\bbSigma{\hbV}^\top$, then $\bbd/\|\bbd\| = \hbv_1$ is the first column of $\hbV_1$.

To ensure $1 - \left(\sigma_{\text{max}}^2(\Tilde{\bbU}) + \frac{C_1^2}{c^2}\|\bbd\|^2\right) \geq 0$, we bound $\|\bbd\|_2$ as
\begin{equation} \label{eq_thm0_condition}
    \begin{aligned}
        \|\bbd\|_2^2 & \leq \frac{c^2(1 - \sigma_{\text{max}}(\Tilde{\bbU})^2)}{C_1^2}\\
        & =\frac{c^2(1 - \sigma_{\text{max}}(\Tilde{\bbU})^2)[(1-\sigma_1)-2\theta(1+\sigma_2)]^2(1-\sigma_4)^2}{(1+\sigma_3)^2\theta},
    \end{aligned}
\end{equation} 
which is \eqref{theorem0_1}, completing the proof.
$\hfill\blacksquare$


\section{Proof of Theorem \ref{theorem_2}} \label{appendix_theo_2}


When the observations are corrupted by noise $\bbN\in\reals^{N\times P}$, i.e., $\bbY = \ccalP(\tbh_0)\bbX+\bbN\in\reals^{N\times P}$, the optimization problem \eqref{pb_1} can be rewritten as the following equivalent problem 
\begin{equation} \label{pb_noise2}
\hat{\bbw} = \text{arg}\min\limits_{\bbw} \| \mathcal{P}(\bbw)\bbX + \ccalP(\tbg_0 \circ \bbw)\bbN \|_{1,1},\quad \text{s. to }\: \bar{\bbr}^\top \bbw  = c
\end{equation}
with the change of variable $\bbw = \tbg\circ\tbh_0$ we used in \ref{appendix_theo_1}.
Again, $\bar{\bbr}^\top = \bbr^\top\text{diag}(\tbg_0)$ is consistent with \eqref{pb_2}. In this case, the solution to \eqref{pb_noise2} $\hat\bbw$ is not expected to be $\mathbf{1}_N$, so we let $\hat{\bbw} = \mathbf{1}_N+\hat\bbdelta = \mathbf{1}_N + \left(-\frac{\bbd^\top \hat{\bbb}}{c}\mathbf{1}_N + \hat{\bbb}\right) = \left(1 -\frac{\bbd^\top \hat{\bbb}}{c}\right)\mathbf{1}_N + \hat\bbb$ as before, where $\bbd = \bar{\bbr} - \frac{c}{N}\mathbf{1}_N$. Then our goal will be to bound the error of problem \eqref{pb_noise2}, i.e., $\hat\bbw-\mathbf{1}_N$, which is expected to vanish if the observations were not corrupted by noise. Once this goal is achieved, the recovery error of \eqref{pb_1}, i.e., $\hat\tbg -\tbg_0$ can also be bounded since $\hat\tbg -\tbg_0=\tbg_0\circ(\hat\bbw-\mathbf{1}_N )$. 

By jointly considering \eqref{eq_thm0_step1}\eqref{eq_thm0_step2}\eqref{eq_thm0_step3}, and from \eqref{eq_thm0_step3} we can conclude that, if \eqref{eq_thm0_condition} is satisfied, e.g. $\|\bbd\|\leq \frac{c\sqrt{(1 - \sigma_{\max}^2(\Tilde{\bbU}))}}{C_1}= c d_0$, the following holds
\begin{equation}
    \begin{aligned}
        & \|\ccalM(\bbdelta)\|_{2,1} -  C_1 |\mathbf{1}_N^\top \bbdelta|  \geq \sqrt{\|\ccalM(\bbdelta)\|_F^2} -  C_1 |\mathbf{1}_N^\top \bbdelta|\\
        & \geq \sqrt{C_1^2 |\mathbf{1}_N^\top \bbdelta|^2 + \|\bbb\|_2^2\left(1 - \sigma_{\text{max}}^2(\Tilde{\bbU}) - \frac{C_1^2}{c^2}\|\bbd\|_2^2\right)} -  C_1 |\mathbf{1}_N^\top \bbdelta|\\
        & = \|\bbb\|\left[\sqrt{1 - \sigma_{\text{max}}^2(\Tilde{\bbU}) - \frac{C_1^2}{c^2} \|\bbd\|_2^2(1-\sigma_5^2)} - \frac{C_1}{c}\sigma_5\|\bbd\|_2\right],
    \end{aligned}
\end{equation}
%
where $\sigma_5:=\frac{|\bbd^\top\bbb|}{\|\bbb\|\|\bbd\|}\in[0,1]$. 

Let $Q:=\frac{(1+\sigma_3)\sqrt{\theta}}{c}\left[\sqrt{c^2 d_0^2-(1-\sigma_5)^2\|\bbd\|_2^2}-\sigma_5 \|\bbd\|_2\right]$. Note that $Q \geq 0$ and then we have
\begin{equation} \label{eq_noise_lower1}
    \begin{aligned}
        \| \ccalP(\hat\bbw)\bbX &+ \ccalP(\tbg_0 \circ \hat\bbw)\bbN \|_{1,1}  = \| \ccalP(\hat\bbw)[\bbX + \ccalP(\tbg_0)\bbN] \|_{1,1} \\
        & \geq \|\ccalP(\hat\bbw)(\bbX+\bbN^{(S)})\|_{1,1} - \|\ccalP(\hat\bbw)\bbN^{(C)}\|_{1,1}\\
        & \geq \|\bbX+\bbN^{(S)}\|_{1,1} + \beta_0PQ \|\hat\bbb\|_2 - \|\ccalP(\hat\bbw)\bbN^{(C)}\|_{1,1}
    \end{aligned}
\end{equation}
where $\bbN^{(S)}:= [\ccalP(\tbg_0)\bbN]_{\ccalS}$ denotes the sub-matrix of $\ccalP(\tbg_0)\bbN$ that has the same support $\ccalS$ as $\bbX$, and its complement as $\bbN^{(C)}:= [\ccalP(\tbg_0)\bbN]_{\ccalS^C} = \ccalP(\tbg_0)\bbN - [\ccalP(\tbg_0)\bbN]_{\ccalS}$. The last inequality holds as $\bbX$ and $\bbX+\bbN^{(S)}$ have the same support. Next, let's find the upper bound for $\|\ccalP(\hat\bbw)\bbN^{(C)}\|_{1,1}$
\begin{equation} \label{eq_noise_lower2}
    \begin{aligned}
   & \|\ccalP(\hat\bbw)\bbN^{(C)}\|_{1,1}  = \left\|\left(1 -\frac{\bbd^\top \hat{\bbb}}{c}\right)\bbN^{(C)} + \bbV\diag(\hat\bbb)\bbV^\top\bbN^{(C)}\right\|_{1,1}\\
        & \leq \|\bbN^{(C)}\|_{1,1} + (d_0  \|\bbN^{(C)}\|_{1,1}  + \|[\bbN^{(C)}]^\top\bbV\odot\bbV\|_{1\rightarrow 2}) \|\hat\bbb\|_2.
    \end{aligned}
\end{equation}
Note that we expect $\bbw_0 = \mathbf{1}_N$ as the `ideal' noise-free solution of \eqref{pb_noise2}, so $c = \bar{\bbr}^\top\mathbf{1}_N$. For optimality, we should have 
\begin{equation} \label{eq_noise_upper1}
    \begin{aligned}
        \| \ccalP(\hat\bbw)\bbX + \ccalP(\tbg_0 \circ \hat\bbw)\bbN \|_{1,1} & \leq \| \ccalP(\bbw_0)\bbX + \ccalP(\tbg_0 \circ \bbw_0)\bbN \|_{1,1} \\
        & = \|\bbX+\bbN^{(S)}\|_1 + \|\bbN^{(C)}\|_{1,1}.
    \end{aligned}
\end{equation}
%
From \eqref{eq_noise_lower1},  \eqref{eq_noise_lower2} and \eqref{eq_noise_upper1} we find
\begin{equation*} 
    \begin{aligned}
        \|\bbX+\bbN^{(S)}\|_{1,1} & + \|\bbN^{(C)}\|_{1,1}  \geq  \|\bbX+\bbN^{(S)}\|_{1,1} - \|\bbN^{(C)}\|_{1,1} \\
         & + ( \beta_0PQ - d_0  \|\bbN^{(C)}\|_{1,1}  - \|[\bbN^{(C)}]^\top\bbV\odot\bbV\|_{1\rightarrow 2}) \|\hat\bbb\|_2\\
    \end{aligned}
\end{equation*}
and while $\beta_0PQ - d_0  \|\bbN^{(C)}\|_{1,1}  - \|[\bbN^{(C)}]^\top\bbV\odot\bbV\|_{1\rightarrow 2}> 0$, we have 
\begin{equation*}
    \|\hat\bbb\|_2  \leq \frac{2 \|\bbN^{(C)}\|_{1,1}}{\beta_0PQ - d_0  \|\bbN^{(C)}\|_{1,1}  - \|[\bbN^{(C)}]^\top\bbV\odot\bbV\|_{1\rightarrow 2}}.
\end{equation*}
Because $\hat\tbg = \tbg_0 \circ \hat\bbw$, the difference vector between $\hat\tbg$ and the `ideal' ground-truth $\tbg_0$ is $\bbd_g = \hat\tbg - \tbg_0 = \tbg_0\circ\bbw - \tbg_0 = \tbg_0\circ(\hat\bbw-\mathbf{1}_N) = \tbg_0\circ\hat\bbdelta$. Recalling that $\hat\bbdelta = - \frac{\bar{\bbr}^\top\hat\bbb}{c} + \hat\bbb = \left(\bbI_N - \mathbf{1}_N\frac{\bar{\bbr}^\top}{c}\right)\hat\bbb$ and $\beta_0=\sqrt{\frac{2}{\pi}}$, we can bound the $\ell_1$ or $\ell_2$ norms of $\bbd_g$ as
\begin{align*}
        \|\bbd_g\|_l & = \|\tbg_0\circ\hat\bbdelta\|_l\\
        & = \left\|\diag(\tbg_0) \left(\bbI_N - \mathbf{1}_N\frac{\bar{\bbr}^\top}{c}\right)\hat\bbb\right\|_l\\
        &\leq \left\|\diag(\tbg_0) \left(\bbI_N - \mathbf{1}_N\frac{\bar{\bbr}^\top}{c}\right)\right\|_{l\rightarrow 2}\|\hat\bbb\|_2\\
        & \leq \frac{2\left\|\diag(\tbg_0) \left(\bbI_N - \mathbf{1}_N\frac{(\bbr\circ\tbg_0)^\top}{c}\right)\right\|_{l\rightarrow 2}\|\bbN^{(C)}\|_{1,1}}{\sqrt{\frac{2}{\pi}}P Q - d_0  \|\bbN^{(C)}\|_{1,1}  - \|[\bbN^{(C)}]^\top\bbV\odot\bbV\|_{1\rightarrow 2}},
\end{align*}
where $\|\cdot\|_l$ stands for the $\ell_1$ and $\ell_2$ norms when $l=1,2$, respectively.
$\hfill\blacksquare$


\section{Epinions Data Sampling and Pre-Processing}\label{appendix_data}


\subsection{Sampling}

Here describe the implemented sampling design applied to $\bbY_{\textrm{obs}}$. The goals of data sampling are: (i) the resulting observation density should be as high as possible; and (ii) all of the users in the sampled dataset should be connected. To achieve those goals, for $k=1,2,\ldots$ we repeat the following three steps. \emph{Step 1:} From the previous item set $\ccalI_k$ we pick all items that have been rated by at least $N_{\min}=150$ users from $\ccalU_k$, and let the new item set be $\ccalI_{k+1}$. \emph{Step 2:} From the previous user set $\ccalU_k$ we pick all users who have rated at least $N_{\min}$ items in $\ccalI_{k+1}$ and let the new user set be the user set candidate $\ccalU'_{k+1}$.  \emph{Step 3.} To maintain the connectivity of the user network, we randomly select a user from $\ccalU'_{k+1}$ and collect all of the users (within $\ccalU'_{k+1}$) that are accessible from this user (including the selected user) to generate the new user set $\ccalU_{k+1}$. The above three steps are repeated until there is no feasible update for $\{\ccalI_k,\ccalU_k\}$.

\subsection{Pre-processing}

As hinted by the results in Theorem \ref{theorem_1}, the unbiasedness of the graph signals is crucial to satisfactory performance of the proposed approach. Hence we centered the rating matrix by adjusting the range from $[1,5]$ to $[-2,2]$.  

%
\small{
\bibliographystyle{IEEEtran}
\bibliography{citations.bib}
}
\end{document}